\newtheorem{theorem}{Theorem}[section]
\newtheorem{proposition}[theorem]{Proposition}
\newtheorem{lemma}[theorem]{Lemma}
\newtheorem{definition}[theorem]{Definition}
\newtheorem{remark}[theorem]{Remark}
\newcommand{\FI}[1]{Fig.~\ref{#1}\xspace}
\newcommand{\EG}{{\it e.g.}\xspace}
\newcommand{\IE}{{\em i.e.}\xspace}
\newcommand{\tx}{^{\rm th}}
\newcommand{\bank}{\mathsf{Bank}}
\newcommand{\din}{{\mathrm{deg}_{\mathrm{in}}}}
\newcommand{\dout}{{\mathrm{deg}_{\mathrm{out}}}}
\newcommand{\vi}{\mathsf{SI}}
\newcommand{\snsvi}{\mathsf{SI}_{\textsc{SaNS}}}
\newcommand{\ssvi}{\mathsf{SI}_{\textsc{SaS}}}
\newcommand{\dvi}{\mathsf{DSI}}
\newcommand{\ssdvi}{\mathsf{DSI}_{\textsc{SaS}}}
\newcommand{\snsdvi}{\mathsf{DSI}_{\textsc{SaNS}}}
\newcommand{\infl}{\mathsf{infl}}
\newcommand{\apx}{\mathsf{APX}}
\newcommand{\NP}{\mathsf{NP}}
\newcommand{\level}{\mathsf{level}}
\newcommand{\eps}{\varepsilon}
\newcommand{\cU}{\mathcal{U}}
\newcommand{\cS}{\mathcal{S}}
\newcommand{\opt}{\mathsf{OPT}}
\newcommand{\nbr}{\mathsf{Nbr}}
\newcommand{\vs}{{V_{\mathrm{shock}}}}
\newcommand{\vt}{{v_{\mathrm{top}}}}
\newcommand{\vsi}{{v_{\mathrm{side}}}}
\newcommand{\galive}{{G_{\mathrm{alive}}}}
\newcommand{\valive}{{V_{\mathrm{alive}}}}
\newcommand{\ealive}{{F_{\mathrm{alive}}}}
\newcommand{\nam}{{{\sc Stab$_{T,\Phi}$}}}
\newcommand{\dnam}{{{\sc Dual-Stab$_{T,\Phi,\kappa}$}}}
\newcommand{\E}{\mathcal{E}}
\newcommand{\comment}[1]{}
\newcommand{\iz}{\mathsf{iz}}
\newcommand{\SC}{{\sc Set}-{\sc Cover}}
\newcommand{\minrep}{{\sc Minrep }}
\newcommand{\vl}{V^{\mathrm{left}}}
\newcommand{\vr}{V^{\mathrm{right}}}
\newcommand{\GS}{G_{\mathrm{super}}}
\newcommand{\DS}{{\mathsf{DS}}}
\newcommand{\VS}{V_{\mathrm{super}}}
\newcommand{\ES}{F_{\mathrm{super}}}
\newcommand{\ovr}[1]{\overrightarrow{#1}}
\renewcommand{\O}{\mathcal{O}}
\title{On the Computational Complexity of Measuring Global Stability of Banking Networks\thanks{Talks based on these results were given or will be given 
at the $4\tx$ annual New York Computer Science and Economics Day, New York University, September 16, 2011, 
at the Industrial-Academic Workshop on Optimization in Finance and Risk Management, October 3-4, 2011, Fields Institute, Toronto, Canada, and 
at the Mathematical Finance theme, 2012 Annual Meeting of the Canadian Applied and Industrial Mathematics Society, July 24-28, 2012.}}
\author{
Piotr Berman \\
Department of Computer Science \& Engineering \\
Pennsylvania State University \\
University Park, PA 16802 \\
Email: {\tt berman@cse.psu.edu}
\and 
Bhaskar DasGupta \& Lakshmi Kaligounder \\
Department of Computer Science \\
University of Illinois at Chicago \\
Chicago, IL 60607-7053 \\
Email: {\tt dasgupta@cs.uic.edu, lkalig2@uic.edu} \\
\and 
Marek Karpinski \\
Department of Computer Science \\
University of Bonn \\
53117 Bonn, Germany \\
Email: {\tt marek@cs.uni-bonn.de}
}
\begin{document}

\maketitle

\begin{abstract}
Threats on the stability of a financial system may severely affect the functioning of the entire economy, and thus considerable emphasis is placed on the analyzing the cause and 
effect of such threats. The financial crisis in the current and past decade has shown that one important cause of instability in global markets is the 
so-called {\em financial contagion}, namely the spreadings of instabilities or failures of {\em individual} components of the network to other, perhaps healthier, components.
This leads to a natural question of whether the regulatory authorities could have predicted and perhaps mitigated the current economic crisis by effective computations of some stability 
measure of the banking networks. Motivated by such observations, we consider the problem of defining and evaluating stabilities of both homogeneous and heterogeneous banking networks against 
propagation of {\em synchronous idiosyncratic shocks} given to a subset of banks. We formalize the homogeneous banking network model of Nier {\em et al.}~\cite{NYYA07} and its 
corresponding heterogeneous version, formalize the synchronous shock propagation procedures outlined in~\cite{NYYA07,E04}, define two appropriate stability 
measures and investigate the computational complexities of evaluating these measures for various network topologies and parameters of interest.
Our results and proofs also shed some light on the properties of topologies and parameters of the network that may lead to higher or lower stabilities.
\end{abstract}

\section{Introduction and Motivation}

In market-based economies, financial systems perform important financial intermediation functions of borrowing from surplus units and lending to deficit units. 
Financial stability is the ability of the financial systems to absorb shocks and perform its key functions, even in stressful situations. 
Threats on the stability of a financial system may severely affect the functioning of the entire economy, and thus considerable emphasis is placed on the analyzing the cause and effect of 
such threats. The concept of instability of a market-based financial system due to factors such as debt financing of investments can be traced back to earlier works of the economists such as 
Irving Fisher~\cite{F33} and John Keynes~\cite{K36} during the 1930's Great Depression era. Subsequently, some economists such as Hyman Minsky~\cite{M77} have argued that:

\begin{quote}
{\em Such instabilities are inherent in many modern capitalist economies}.
\end{quote}

\noindent
In this paper, we investigate systemic instabilities of the banking networks, an important component of modern capitalist economies of many countries.
The financial crisis in the current and past decade has shown that an important component of instability in global financial markets is the so-called {\em financial contagion}, namely 
the spreadings of instabilities or failures of {\em individual} components of the network to other, perhaps healthier, components.
The general topic of interest in this paper, motivated by the global economic crisis in the current and the past decade, is the phenomenon of financial contagion 
in the context of {\em banking networks}, and is related to the following natural extension of the question posed by Minsky and others:
\begin{itemize}
\item
What is the true characterization of such instabilities of banking networks, \IE, 
\begin{itemize}
\item
Are such instabilities systemic, 
\EG, 
caused by a repeal of Glass-Steagall act with subsequent development of specific properties of banking networks that allowed a ripple effect~\cite{CG91}? 

\item 
Or, are such instabilities caused just by a few banks that were ``too big to fail'' and/or ``a few individually greedy executives'' ? 
\end{itemize}
\end{itemize}
To investigate these types of questions, one must first settle the following issues:
\begin{itemize}
\item
What is the {\em precise} model of the banking network that is studied?

\item
How {\em exactly} failures of individual banks propagated through the network to other banks? 

\item
What is an {\em appropriate stability measure} and what are the computational properties of such a measure? 
\end{itemize}
As prior researchers such as Allen and Babus~\cite{AG08} pointed out,graph-theoretic concepts provide a conceptual framework within which various patterns of 
connections between banks can be described and analyzed in a meaningful way by modeling banking networks as a {\em directed} network in which nodes 
represent the banks and the links represent the direct exposures between banks.
Such a network-based approach to studying financial systems is particularly important for assessing financial stability, and in capturing the externalities 
that the risk associated with a single or small group of institutions may create for the entire system. 
Conceptually, links between banks have two {\em opposing} effects on contagion: 
\begin{itemize}
\item
More interbank links increase the opportunity for spreading failures to other banks~\cite{GK08}: 
when one region of the network suffers from a crisis, another region also incurs a loss because 
their claims on the troubled region fall in value and, if this spillover effect is strong enough, it can cause a crisis in adjacent regions.

\item 
More interbank links provide banks with a form of {\em coinsurance} against uncertain liquidity flows~\cite{AG2000},  
\IE, banks can insure against the liquidity shocks by exchanging deposits through links in the network. 
\end{itemize}

\section{The Banking Network Model}

\subsection{Rationale Behind the Model}

As several prior researchers such as~\cite{AG08,NYYA07,E04,PYR09} have already commented, graph-theoretic frameworks 
may provide a powerful tool for analyzing stability of banking and other financial networks.
We provide and use a mathematically precise abstraction of a banking network model as outlined in~\cite{NYYA07} and elsewhere.
The same or very similar version of the graph-theoretic loss propagation model used in this paper has also been 
extensively used by prior researchers in finance, economics and banking industry to study various properties and research questions involving banking systems
similar to what is studied in this paper (\EG, see~\cite{F03,UW04,M07,ACM11,CMS10}, to name a few). As commented by researchers such as~\cite{NYYA07,ACM11}:

\begin{quote}
{\em the modelling challenge in studying banking networks lies not so much in analyzing a model that is flexible enough to represent all types of insolvency cascades, but in 
studying a model that can mimic the empirical properties of these different types of networks}.
\end{quote}

\noindent
A loss propagation model such as the one discussed here and elsewhere such as in~\cite{F03,UW04,M07,ACM11,CMS10}
conceptualises the main characteristics of a financial system using network theory
by relating the cascading behavior of financial networks both to the local properties of the nodes and to the underlying topology of the network, 
allowing us to vary continuously the key parameters of the network.

\subsection{Homogeneous Networks: Balance Sheets and Parameters for Banks}

We provide a precise abstraction of the model as outlined in~\cite{NYYA07} which builds up on the 
works of Eboli~\cite{E04}. The network is modeled by a weighted directed graph $G=(V,F)$
of $n$ nodes and $m$ directed edges, where each node $v\in V$ corresponds to a bank ($\bank_v$) and each directed
edge $(v,v')\in F$ indicates that $\bank_v$ has an agreement to lend money to $\bank_{\,v'}$.
Let $\din(v)$ and $\dout(v)$ denote the in-degree and the out-degree of node $v$. 
The model has the following parameters:

\vspace*{0.1in}
\hspace*{-0.4in}
{\small
\begin{tabular}{c}\toprule
$E=$ total external asset,  \hspace*{0.2in} $I=$ total inter-bank exposure,  \hspace*{0.2in} $A=I+E=$ total asset \\
\hspace*{-0.15in}
$[0,1]\ni\gamma=$ percentage of equity to asset, \hspace*{0.05in} $w=w(e)=\frac{I}{m}=$ weight of edge $e\in F$, \hspace*{0.05in} $\Phi=$ severity of shock ($1\geq\Phi>\gamma$) \\
\bottomrule
\end{tabular}
}

\vspace*{0.1in}
\noindent
Now, we describe the balance sheet for a node $v\in V$ (\IE, for $\bank_v$)\footnote{This model assumes that all the depositors are insured
for their deposits, \EG, in United States the Federal Deposit Insurance Corporation provides such an insurance up to a maximum level. Thus, 
{\bf we will omit the parameters $d_v$ for all $v$ in the rest of the paper when using the model}. Similarly, $\ell_v$ quantities (which depend on the $d_v$'s) are also only 
necessary in writing the balance sheet equation and will not be used subsequently.}:

\vspace*{0.1in}
\hspace*{-0.6in}
{\small 
\begin{tabular}{l|l} \toprule
\multicolumn{1}{c}{\bf Assets} & \multicolumn{1}{c}{\bf Liabilities} \\ \hline
$
\begin{array}{rll}
\iota_v & = & \dout(v)\times w = \mbox{interbank asset} \\
e_v  & = & (b_v-\iota_v) + \frac{E-\sum_{v\in V} {(b_v-\iota_v)}}{n}=(b_v-\iota_v) + \frac{E}{n} \\
     & = & \mbox{share of total external asset $E$} \\
a_v=e_v+\iota_v & = & b_v+\frac{E}{n}= \mbox{total asset} \\
\end{array}
$
&
$
\begin{array}{rll}
b_v     & = & \din(v)\times w  = \mbox{interbank borrowing} \\
c_v = \gamma \times a_v & = & \mbox{net worth (equity)} \\
d_v     & = & \mbox{customer deposits} \\
\ell_v = b_v+c_v+d_v & = & \mbox{total liability} \\
\end{array}
$
\\ \hline
\multicolumn{2}{c}{$a_v=\ell_v$ (balance sheet equation)} \\
\bottomrule
\end{tabular}
}

\vspace*{0.1in}
\noindent
Note that the homogeneous model is completely described by the $4$-tuple of parameters $\pmb{\langle} G,\gamma,I,E \pmb{\rangle}$.

\subsection{Balance Sheets and Parameters for Heterogeneous Networks}

The heterogeneous version of the model is the same as its' homogeneous counterpart as described above, except 
that the shares of interbank exposures and external assets for different banks may be different.
Formally, the following modifications are done in the homogeneous model:
\begin{itemize}
\item
$w(e)>0$ denotes the weight of the edge $e\in E$ along with the constraint that $\sum_{e\in F} w(e)=I$. 

\item
\vspace*{-6pt}
$\iota_v=\sum_{e=(v,v')\in F} w(e)$, and $b_v=\sum_{e=(v',v)\in F} w(e)$.

\item
\vspace*{-4pt}
$e_v= (b_v-\iota_v) + \alpha_v\times \big(E-\sum_{v\in V} (b_v-\iota_v)\big)$ for some $\alpha_v>0$ 
along with the constraint $\sum_{v\in V}\alpha_v=1$.
Since $\sum_{v\in V} (b_v-\iota_v)=0$, this gives  $e_v= (b_v-\iota_v) + \alpha_v E$. 
Consequently, $a_v$ now equals $b_v+\alpha_v E$.
\end{itemize}
Denoting the $m$-dimensional vector of $w(e)$'s by $\mathbf{w}$ and 
the $n$-dimensional vector of $\alpha_v$'s by $\pmb{\alpha}$, 
the heterogeneous model is completely described by the $6$-tuple of parameters $\pmb{\langle} G,\gamma,I,E,\mathbf{w},\pmb{\alpha}\pmb{\rangle}$.

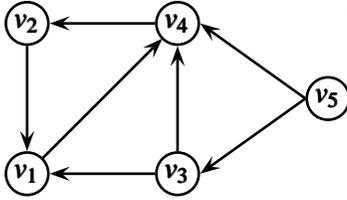
\begin{figure}[htbp]
\begin{center}
\begin{minipage}[b]{1in}
\begin{pspicture}(-0.7,0)(4,3)
\psset{xunit=1cm,yunit=1cm}
\pscircle[linewidth=1pt,origin={0,0},fillstyle=none,fillcolor=lightgray](0,0){0.3}
\rput(0,0){$\pmb{v_1}$}
\pscircle[linewidth=1pt,origin={0,2},fillstyle=none,fillcolor=lightgray](0,0){0.3}
\rput(0,2){$\pmb{v_2}$}
\pscircle[linewidth=1pt,origin={2,0},fillstyle=none,fillcolor=lightgray](0,0){0.3}
\rput(2,0){$\pmb{v_3}$}
\pscircle[linewidth=1pt,origin={2,2},fillstyle=none,fillcolor=lightgray](0,0){0.3}
\rput(2,2){$\pmb{v_4}$}
\pscircle[linewidth=1pt,origin={4,1},fillstyle=none,fillcolor=lightgray](0,0){0.3}
\rput(4,1){$\pmb{v_5}$}
\psline[linewidth=1pt,arrowsize=1.5pt 4,linecolor=black]{<-}(0.3,0)(1.7,0)
\psline[linewidth=1pt,arrowsize=1.5pt 4,linecolor=black]{<-}(0,0.3)(0,1.7)
\psline[linewidth=1pt,arrowsize=1.5pt 4,linecolor=black]{->}(0.2,0.2)(1.8,1.8)
\psline[linewidth=1pt,arrowsize=1.5pt 4,linecolor=black]{<-}(0.3,2)(1.7,2)
\psline[linewidth=1pt,arrowsize=1.5pt 4,linecolor=black]{->}(2,0.3)(2,1.7)
\psline[linewidth=1pt,arrowsize=1.5pt 4,linecolor=black]{<-}(2.3,0)(3.7,1)
\psline[linewidth=1pt,arrowsize=1.5pt 4,linecolor=black]{<-}(2.3,2)(3.7,1)
\end{pspicture}
\end{minipage}
\hspace*{0.5in}
\begin{minipage}[b]{4in}
\caption{\label{exx}An example of our banking network model.}

\hspace*{0.5in}
$n=\mbox{number of nodes}=5$ 

\hspace*{0.5in}
$m=\mbox{number of edges}=7$ 

\hspace*{0.5in}
$I=\mbox{total inter-bank exposure}=m=7$ 

\hspace*{0.5in}
$E=\mbox{total external asset}=14$, $\gamma=0.1$
\end{minipage}
\end{center}
\end{figure}

\paragraph{Illustration of calculations of balance sheet parameters}
We illustrate the calculation of relevant parameters of the balance sheet of banks for the simple banking network shown in \FI{exx}.

\vspace*{0.1in}
\noindent
({\em a}) Homogeneous version of the network
\begin{itemize}
\item
$w=\mbox{weight of every edge} = \nicefrac{I}{m}=1$.

\item 
$\iota_{v_1}=\dout(v_1)\times w=1, \,\,\iota_{v_2}=\dout(v_2)\times w=1, \,\,\iota_{v_3}=\dout(v_3)\times w=2, \,\,\iota_{v_4}=\dout(v_4)\times w=1, \,\,\iota_{v_5}=\dout(v_5)\times w=2$. 

\item
$b_{v_1}=\din(v_1)\times w=2, \,\,b_{v_2}=\din(v_2)\times w=1, \,\,b_{v_3}=\din(v_3)\times w=1, \,\,b_{v_4}=\din(v_4)\times w=3, \,\,b_{v_5}=\din(v_5)\times w=0$. 

\item
$e_{v_1}=b_{v_1}-\iota_{v_1}+\frac{E}{n}=3.8, \,\, e_{v_2}=b_{v_2}-\iota_{v_2}+\frac{E}{n}=2.8, \,\, e_{v_3}=b_{v_3}-\iota_{v_3}+\frac{E}{n}=1.8, 
\,\, e_{v_4}=b_{v_4}-\iota_{v_4}+\frac{E}{n}=4.8, \,\, e_{v_5}=b_{v_5}-\iota_{v_5}+\frac{E}{n}=0.8$. 

\item
$a_{v_1}=b_{v_1}+\frac{E}{n} =4.8, \,\, a_{v_2}=b_{v_2}+\frac{E}{n} =3.8, \,\, a_{v_3}=b_{v_3}+\frac{E}{n} =3.8, 
\,\, a_{v_4}=b_{v_4}+\frac{E}{n} =5.8, \,\, a_{v_5}=b_{v_5}+\frac{E}{n} =2.8$. 

\item
$c_{v_1}=\gamma\,a_{v_1}=0.48$, 
$c_{v_2}=\gamma\,a_{v_2}=0.38$, 
$c_{v_3}=\gamma\,a_{v_3}=0.38$, 
$c_{v_4}=\gamma\,a_{v_4}=0.58$,
$c_{v_5}=\gamma\,a_{v_5}==0.28$.
\end{itemize}
({\em b}) Heterogeneous version of the network

\vspace*{0.1in}
Suppose that 95\% of $E$ is distributed equally on the two banks $v_1$ and $v_2$, and the rest 5\% of $E$ is distributed equally on the remaining three banks.
Thus:
\[
\textstyle
\alpha_{v_1}E= \frac{0.95\,E}{2}=6.65, 
\,\, \alpha_{v_2}E =\frac{0.95\,E}{2}=6.65, 
\,\, \alpha_{v_3}E =\frac{0.05\,E}{3}\approx 0.233, 
\,\, \alpha_{v_4}E =\frac{0.05\,E}{3}\approx 0.233, 
\,\, \alpha_{v_5}E =\frac{0.05\,E}{3}\approx 0.233
\]
Suppose that 95\% of $I$ is distributed equally on the three edges $f_1=(v_2,v_1),f_2=(v_1,v_4),f_3=(v_4,v_2)$, 
and the remaining 5\% of $I$ is distributed equally on the remaining four edges $f_4=(v_3,v_1),f_5=(v_3,v_4),f_6=(v_5,v_4),f_7=(v_5,v_3)$.
Then, 
\[
\textstyle
w(f_1)=w(f_2)=w(f_3)=\frac{0.95\,I}{3}\approx 2.216, 
\,\,\, w(f_4)=w(f_5)=w(f_6)=w(f_7)=\frac{0.05\,I}{4}=0.08725
\]
\[
\begin{array}{rl}
\mbox{for bank $v_1$:} 
&
\begin{array}{l}
b_{v_1}=w(f_1)+w(f_4)\approx 2.30325, \,\, \iota_{v_1}=w(f_2)=2.216
\\
e_{v_1}=(b_{v_1}-\iota_{v_1})+\alpha_{v_1}E\approx 6.7365, \,\,
a_{v_1}=b_{v_1}+\alpha_{v_1}E=8.9525, \,\,
c_{v_1}=\gamma\,a_{v_1}=0.8925
\end{array}
\\[0.2in]
\mbox{for bank $v_2$:} 
&
\begin{array}{l}
b_{v_2}=w(f_3)\approx 2.216, \,\, \iota_{v_2}=w(f_1)\approx 2.216
\\
e_{v_2}=(b_{v_2}-\iota_{v_2})+\alpha_{v_2}E=6.65, \,\,
a_{v_2}=b_{v_2}+\alpha_{v_2}E\approx 8.866, \,\,
c_{v_2}=\gamma\,a_{v_2}\approx 0.8666
\end{array}
\\[0.2in]
\mbox{for bank $v_3$:} 
&
\begin{array}{l}
b_{v_3}=w(f_7)=0.08725, \,\, 
\iota_{v_3}=w(f_4)+w(f_5)=0.1745
\\
e_{v_3}=(b_{v_3}-\iota_{v_3})+\alpha_{v_3}E\approx 0.14575, \,\,
a_{v_3}=b_{v_3}+\alpha_{v_3}E\approx 0.32025, \,\,
c_{v_3}=\gamma\,a_{v_3}\approx 0.032035
\end{array}
\\[0.2in]
\mbox{for bank $v_4$:} 
&
\begin{array}{l}
b_{v_4}=w(f_2)+w(f_5)+w(f_6)\approx 2.39050, \,\,
\iota_{v_4}=w(f_3)\approx 2.216
\\
e_{v_4}=(b_{v_4}-\iota_{v_4})+\alpha_{v_4}E\approx 0.4075, \,\,
a_{v_4}=b_{v_4}+\alpha_{v_4}E\approx 2.6235, \,\,
c_{v_4}=\gamma\,a_{v_4}\approx 0.26235
\end{array}
\\[0.2in]
\mbox{for bank $v_5$:} 
&
\begin{array}{l}
b_{v_5}=0, \,\, 
\iota_{v_5}=w(f_6)+w(f_7)=0.1745
\\
e_{v_5}=(b_{v_5}-\iota_{v_5})+\alpha_{v_5}E\approx 0.0585, \,\,
a_{v_5}=b_{v_5}+\alpha_{v_5}E\approx 0.233, \,\,
c_{v_5}=\gamma\,a_{v_5}\approx 0.0233
\end{array}
\end{array}
\]

\subsection{Idiosyncratic Shock~\cite{NYYA07,E04}}

As in~\cite{NYYA07}, our initial failures are caused by {\em idiosyncratic shocks} which can occur due to {\em operations risks} (frauds) 
or {\em credit risks}, and has the effect of reducing the external assets of a selected subset of banks perhaps causing them
to default. While {\em aggregated} or {\em correlated} shocks affecting all banks simultaneously is relevant in practice, 
idiosyncratic shocks are a cleaner way to study the {\em stability} of the topology of the banking network.
Formally, we select a non-empty subset of nodes (banks) $\emptyset\subset\vs\subseteq V$. For all nodes $v\in\vs$, we 
simultaneously decrease their external assets from $e_v$ by $s_v=\Phi\,e_v$, where the parameter $\Phi\in(0,1]$ 
determines the ``severity'' of the shock. As a result, the new net worth of $\bank_v$ becomes 
$c_v'=c_v-s_v$. The effect of this shock is as follows:
\begin{itemize}
\item
If $c_v'\geq 0$, $\bank_v$ continues to operate but with a lower net worth of $c_v'$.

\item
If $c_v'<0$, $\bank_v$ {\em defaults} (\IE, stops functioning).
\end{itemize}

\begin{center}
\begin{table}[ht]
\begin{center}
\begin{tabular}{l} \toprule
$t=1$ ; $\valive(1)=V$ 
\\
(* start the shock at $t=1$ on nodes in $\vs$ *) \\
$\pmb{\forall\,v\in V}\colon$ {\bf if} $v\in\vs$ \hspace*{5pt} {\bf then} $c_v(1)=c_v-\Phi\,e_v$ \hspace*{5pt} {\bf else} $c_v(1)=c_v$ 
\\
\\
(* shock propagation at times $t=2,3,\dots,T$ *) 
\\
{\bf while} $\,\,\,(t\leq T)\,\,\,\bigwedge\,\,\,(\valive(t)\neq\emptyset)\,\,\,$ {\bf do} 
\\
\\
\hspace*{0.2in}
(* transmit loss to next time step *) 
\\
\hspace*{0.2in}
$\displaystyle\pmb{\forall\,u\in \valive(t)}\colon c_u(t+1) = c_u(t)-\hspace*{-0.5in}\sum_{{v\colon c_v(\,t\,)<0\,\,\,\, \pmb{\&}\,\,\,\, (u,v)\in\ealive(\,t\,)}} \hspace*{-0.2in}\frac{\min\big\{\,\big|\,c_v(t)\,\big|\,,\,\,b_v\,\big\}}{\din(v,\,t)}$
\\
\\
\hspace*{0.2in}
(* remove $\bank_v$ from network if it is to fail at this step *) 
\\
\hspace*{0.2in}
$\valive(t+1)=\valive(t)\setminus \left\{\,v\,\big| \, v\in\valive(t)\,\,\pmb{\&}\,\,c_v(t)<0 \, \right\}$
\\
\\
\hspace*{0.2in}
$t=t+1$ 
\\
{\bf endwhile} 
\\
\bottomrule
\end{tabular}
\end{center}
\caption{Discrete-time idiosyncratic shock propagation for $T$ steps.}
\label{t1}
\end{table}
\end{center}

\subsection{Propagation of an Idiosyncratic Shock~\cite{NYYA07,E04}}

We use the notation $c_v(t)$ to denote $c_v$ at time $t$, and $t_0^+$ to denote any $t>t_0$. 
Let $\valive(t)\subseteq V$ be the set of nodes that have not failed at time $t$
and let $\galive(t)=(\valive(t),\ealive(t))$ be the corresponding node-induced subgraph of $G$ at time $t$ with 
$\din(v,t)$ and $\dout(v,t)$ denote the in-degree and out-degree of a node $v\in\valive(t)$ in the graph $\galive(t)$. 
In a continuous-time model, the shock propagates as follows:
\begin{itemize}
\item
$\valive(1)=V$, $c_v(1)=c_v-s_v$ if $v\in\vs$, and $c_v(1)=c_v$ otherwise.

\item
If a banks equity ever becomes negative, it fails subsequently, \IE, 
$\forall\,t_0\geq 1\colon c_v(t_0)<0 \Rightarrow v\notin\valive(t_0^+)$.

\item
A failed bank $\bank_v$ at time $t=t_0$ affects the net worth (equity) of all banks that gave loan to $\bank_v$ 
in the following manner. For each edge $(u,v)\in\ealive(t_0)$ in the network at time $t_0$, the equity $c_u(t_0)$ is decreased 
by an amount\footnote{If $|\,c_v(t_0)\,|>b_v$ then the depositors incur a loss of $b_v-|\,c_v(t_0)\,|$, but as already mentioned before
this model assumes that all the depositors are insured for their deposits.}
of $\frac{\min\left\{\,\left|\,c_v(t_0)\,\right|,\,\,b_v\,\right\}}{\din(v,\,t_0)}$.
Thus, the shock propagation is defined by the following differential equation: 
\[
{\forall\,t\geq 1}\,\,\,{\forall\, u\in\valive(t)\colon} \,\,{\frac{\partial\,c_u(t)}{\partial\,{t}}} { \,= 
-\hspace*{-0.2in}\sum_{{v\colon c_v(t)<0 \,\,\pmb{\&}\,\,(u,v)\in\ealive(t)}} \hspace*{-0.2in}\frac{\min\big\{\,\big|\,c_v(t)\,\big|,\,\,b_v\,\big\}}{\din(v,\,t)}}
\]
\end{itemize}
An intuitive explanation of the two quantities inside the summation in the above equation is as follows.
The term $\frac{|\,c_v(t)|}{\din(v,\,t)}$ distributes the loss of equity of a bank equitably among its creditors that have not failed yet.
The term $\frac{b_v}{\din(v,\,t)}$ ensures that the total loss propagated is no more than the total interbank exposure of the failed bank.

A {\em discrete-time} version of the above can be obtained by the obvious method of 
quantizing time and replacing the partial differential equations
by ``difference equations''. With appropriate normalizations, the discrete-time model
for shock propagation is described by 
a synchronous iterative procedure shown in Table~\ref{t1} where $t=1,2,\dots,T$ denotes
the discrete time step at which the synchronous update is done ($T\leq n$).

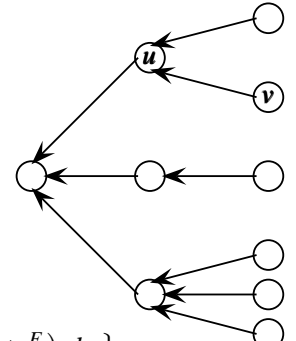
\begin{wrapfigure}[9]{R}{0pt}
\scalebox{0.7}[0.7]{
\begin{pspicture}(-0.4,-2.6)(5,2)
\psset{xunit=0.75cm,yunit=0.75cm}
\pscircle[origin={0,0},linewidth=1pt](0,0){0.3}
\pscircle[origin={3,-3},linewidth=1pt](0,0){0.3}
\pscircle[origin={3,0},linewidth=1pt](0,0){0.3}
\pscircle[origin={3,3},linewidth=1pt](0,0){0.3}
\rput(3,3){\Large$\pmb{u}$}
\pscircle[origin={6,-4},linewidth=1pt](0,0){0.3}
\pscircle[origin={6,-3},linewidth=1pt](0,0){0.3}
\pscircle[origin={6,-2},linewidth=1pt](0,0){0.3}
\pscircle[origin={6,0},linewidth=1pt](0,0){0.3}
\pscircle[origin={6,2},linewidth=1pt](0,0){0.3}
\rput(6,2){\Large$\pmb{v}$}
\pscircle[origin={6,4},linewidth=1pt](0,0){0.3}
\psline[origin={0,0},linewidth=1pt,linecolor=black,arrowsize=1.5pt 8]{->}(2.7,-3)(0,-0.3)
\psline[origin={0,0},linewidth=1pt,linecolor=black,arrowsize=1.5pt 8]{->}(2.7,0)(0.3,0)
\psline[origin={0,0},linewidth=1pt,linecolor=black,arrowsize=1.5pt 8]{->}(2.7,3)(0,0.3)
\psline[origin={0,0},linewidth=1pt,linecolor=black,arrowsize=1.5pt 8]{->}(5.7,-3)(3.3,-3)
\psline[origin={0,0},linewidth=1pt,linecolor=black,arrowsize=1.5pt 8]{->}(5.7,-2)(3,-2.7)
\psline[origin={0,0},linewidth=1pt,linecolor=black,arrowsize=1.5pt 8]{->}(5.7,-4)(3,-3.3)
\psline[origin={0,0},linewidth=1pt,linecolor=black,arrowsize=1.5pt 8]{->}(5.7,0)(3.3,0)
\psline[origin={0,0},linewidth=1pt,linecolor=black,arrowsize=1.5pt 8]{->}(5.7,2)(3,2.7)
\psline[origin={0,0},linewidth=1pt,linecolor=black,arrowsize=1.5pt 8]{->}(5.7,4)(3,3.3)
%
\end{pspicture}
}
\caption{\label{in-arb-fig}An in-arborescence graph.}
\end{wrapfigure}

\paragraph{A simplified illustration of the effect of idiosyncratic shocks}
Consider the case when the model is {\em homogeneous} and 
the topology of the graph $G$ is {\em in-arborescence}, \IE, a directed rooted tree where all edges are oriented towards the root.
Consider two nodes $u,v\in V$ such that $(v,u)\in F$ and $\din(v)=0$ (see \FI{in-arb-fig}).
Suppose that at time $t=1$ the node $u$ is shocked and consequently it {\em defaults}. The amount of shock 
transmitted from $u$ to $v$ is 
\begin{multline*}
\Delta = 
\frac{\min \left\{ |c_u'(1)|, b_u \right\}}{\din(u)}
= 
\frac{\min \left\{ \Phi e_u - c_u(1), b_u \right\}}{\din(u)}
=
\frac{\min \left\{ \Phi \left(b_u-\iota_u+\frac{E}{n} \right)  - \gamma \left(b_u+\frac{E}{n} \right), b_u \right\}}{\din(u)}
\\
=
\frac{\min \left\{ \Phi \left(\din(u)-1+\frac{E}{n} \right)  - \gamma \left(\din(u)+\frac{E}{n} \right), \din(u) \right\}}{\din(u)}
\\
=
\min 
\left\{
\left(\Phi-\gamma\right) \times \left( 1 + \frac{E}{n\,\din(u)} \right) - \frac{1}{\din(u)}, \, 1
\right\}
\end{multline*}
Since $c_v(1)=\gamma\times \nicefrac{E}{n}$, we have 
\[
c_v(2)=c_v(1)-\Delta
=
\gamma\times \frac{E}{n}
-
\min 
\left\{
\left(\Phi-\gamma\right) \times \left( 1 + \frac{E}{n\,\din(u)} \right) - \frac{1}{\din(u)}, \, 1
\right\}
\]
Assuming $\Phi-\gamma  \ll 1 + \frac{E}{n\,\din(u)}$ and $\din(u)\gg 1$, the above expression simplifies to 
\[
c_v(2)
\approx
\gamma\times \frac{E}{n}
-
\left(\Phi-\gamma\right) \times \left( 1 + \frac{E}{n\,\din(u)} \right) 
\]
Suppose that $\gamma=\nicefrac{\Phi}{4}$. Then, 
$c_v(2)\approx 
\gamma \times \left( \frac{E}{n} - 3 - \frac{3\,E}{n\,\din(u)} \right)
$.
Consequently, one can observe the following:
\begin{itemize}
\item
If $\nicefrac{E}{n}\leq 2$, then $c_v(2)<0$ and node $v$ will surely fail at time $t=2$.

\item
If $\nicefrac{E}{n}\geq 4$ and $\din(u)>10$ then $c_v(2)>0$ and node $v$ will surely not fail at time $t=2$.
\end{itemize}

\subsection{Parameter Simplification}
We can assume without loss of generality that in the homogeneous shock propagation model $w=1$. 
To observe this, if $w=\nicefrac{I}{m}\neq 1$, then we can divide each of the quantities 
$\iota_v$, $b_v$, $E$ and $d_v$ by $w$; it is easy to see that the outcome of the shock propagation 
procedure in Table~\ref{t1} remains the same. Moreover, we will ignore the balance sheet equation 
since $d_v$ has no effect in shock propagation.

\section{Related Prior Works on Financial Networks}

Although there is a large amount of literature on stability of financial systems in general and banking systems in particular, much of the prior research 
is on the empirical side or applicable to small-size networks. Two main categories of prior researches can be summarized as follows.
The particular model used in this paper is the model of Nier {\em et al.}~\cite{NYYA07}.
As stated before, definition of a precise stability measure and analysis of its computational complexity issues for stability calculation 
were not provided for these models before.

\paragraph{Network formation}
Babus~\cite{B07} proposed a model in which banks form links with each other as an insurance mechanism to 
reduce the risk of contagion. In contrast, Castiglionesi and Navarro~\cite{FN09} studied decentralization of the network of banks that is optimal from the
perspective of a social planner. In a setting in which banks invest on behalf of depositors and there are positive network externalities on the investment returns, 
fragility arises when ``not sufficiently capitalized'' banks gamble with depositors' money. When the probability of bankruptcy is low, the decentralized solution 
well-approximates the first objective of Babus. 

\paragraph{Contagion spread in networks}
Although ordinarily one would expect the risk of contagion to be larger in a highly interconnected banking system, 
some empirical simulations indicate that shocks have an {\em extremely complex} effect on the network stability in the sense that
higher connectivity among banks may sometimes lead to {\em lower} risk of contagion.

Allen and Gale~\cite{AG2000} studied how a banking system may respond to contagion when banks are connected under different network structures, and found that, 
in a setting where consumers have the liquidity preferences as introduced by Diamond and Dybvig~\cite{DP83} and have random liquidity needs, banks perfectly insure against 
liquidity fluctuations by exchanging interbank deposits, but the connections created by swapping deposits expose the {\em entire system} to contagion. 
Allen and Gale concluded that incomplete networks are {\em more} prone to contagion than networks with maximum connectivity since 
better-connected networks are more resilient via transfer of proportion of the losses in one bank's portfolio to more banks through interbank agreements. 
Freixas {\em et al.}~\cite{XBJ2000} explored the case of banks that face liquidity fluctuations due to the uncertainty about 
consumers withdrawing funds. Gai and Kapadia~\cite{GK08} argued that the higher is the connectivity among banks the more will be the contagion effect during crisis. 
Haldane~\cite{H09} suggested that contagion should be measured based on the interconnectedness of each institution within the financial system.
Liedorp {\em et al.}~\cite{Lie2010} investigated if interconnectedness in the interbank market is a channel through which banks affect each others riskiness, and argued that both large 
lending and borrowing shares in interbank markets increase the riskiness of banks active in the {\em dutch} banking market. 

Dasgupta~\cite{D04} explored how linkages between banks, represented by cross-holding of deposits, can be a source of contagious breakdowns by investigating 
how depositors, who receive a private signal about fundamentals of banks, may want to withdraw their deposits if they believe that enough other depositors will do the same. 
Lagunoff and Schreft~\cite{RS98} considered a model in which agents are linked in the sense that the return on an agents' portfolio depends on the portfolio allocations of other agents.  
Iazzetta and Manna~\cite{CM09} used network topology analysis on monthly data on deposits exchange to gain more insight into the way a liquidity crisis spreads. 
Nier {\em et al.}~\cite{NYYA07} explored the dependency of systemic risks on the structure of the banking system via network theoretic approach and  
the resilience of such a  system to contagious defaults.
Kleindorfer {\em et al.}~\cite{PYR09} argued that network analyses can play a crucial role in understanding many important phenomena in finance. 
Corbo and Demange~\cite{JG09} explored, given the exogenous default of set of banks,
the relationship of the structure of interbank connections to the contagion risk of defaults.
Babus~\cite{A05} studied how the trade-off between the benefits and the costs of being linked changes depending on the network structure, and 
observed that, when the network is maximal, liquidity can be redistributed in the system to make the risk of contagion minimal. 

\section{The Stability and Dual Stability Indices} 

A banking network is called {\em dead} if all the banks in the network have failed.
Consider a given homogeneous or heterogeneous banking network 
$\pmb{\langle}\, G,\gamma,I,E,\Phi \,\pmb{\rangle}$ or $\pmb{\langle}\, G,\gamma,I,E,\Phi,\mathbf{w},\pmb{\alpha}\,\pmb{\rangle}$.
For $\emptyset\subset V'\subseteq V$, let 
\begin{gather*}
\infl(V')=\big\{\, v\in V\,\big|\,v \mbox{ fails if all nodes in $V'$ are shocked}\,\big\}
\\
\vi(G,V',T)
=
\left\{
\begin{array}{cl}
\nicefrac{\big|V'\big|\,}{\,n}\,, & \mbox{if $\infl(V')=V$} \\
\infty\,,          & \mbox{otherwise} \\
\end{array}
\right.
\end{gather*}

\paragraph{The Stability Index}
The optimal {\em stability index} of a network $G$ is defined as 
\[
\boxed{
\vi^\ast(G,T)=\vi(G,\vs,T)=\min_{\,V'}\big\{\,\vi(G,V',T)\,\big\}
}
\]
For estimation of this measure, we assume that it is possible for the network to fail, \IE, 
$\vi^\ast(G,T)<\infty$. Thus, $0<\vi^\ast(G,T)\leq 1$, and the higher the stability index is, the better is the stability 
of the network against an idiosyncratic shock. We thus arrive at the natural computational problem \nam.
We denote an optimal subset of nodes that is a solution of Problem~\nam\ by $\vs$, \IE, $\vi^\ast(G,T)=\vi(G,\vs,T)$.
Note that if $T\geq n$ then the \nam\ finds a minimum subset of nodes which, when shocked, 
will {\em eventually} cause the death of the network in an arbitrary number of time steps.

\vspace*{10pt}
{\small
\hspace*{-0.5in}
\begin{tabular}{l|l}
\toprule
{\bf Input}:
a banking network with shocking parameter $\Phi$, 
&
{\bf Input:} 
a banking network with shocking parameter $\Phi$, 
\\
\hspace*{0.8in} 
and an integer $T>1$
&
\hspace*{0.8in} 
and two integers $T,\kappa>1$
\\
{\bf Valid solution}:
A subset $V'\subseteq V$ such that $\vi(G,V',T)<\infty$
&
{\bf Valid solution:} 
A subset $V'\subseteq V$ such that $|V'|=\kappa$
\\
{\bf Objective}:
{\em minimize} $|V'|$
&
{\bf Objective:} 
{\em maximize} $\,\,\,\big|\,\infl(V')/\kappa\,\big|$
\\ \midrule
\multicolumn{1}{c}{
{\bf Stability of banking network} ($\,\mathbf{\mbox{\nam}}\,$)
}
&
\multicolumn{1}{c}{
{\bf Dual Stability of banking network} ($\,\mathbf{\mbox{\dnam}}\,$)
}
\\
\bottomrule
\end{tabular}
}

\paragraph{The Dual Stability Index} 
Many covering-type minimization problems in combinatorics have a natural maximization dual in which one fixes a-priori the number of covering sets 
and then finds a maximum number of elements that can be covered with these many sets. For example, the usual dual
of the minimum set covering problem is the maximum coverage problem~\cite{KMN99}. Analogously, we define a dual
stability problem \dnam.
The {\em dual stability index} of a network $G$ can then be defined as 
\[
\boxed{
\dvi^\ast(G,T,\kappa)=\max_{V'\subseteq V\colon |V'|=\kappa} \big|\,{\infl(V')}\,/\,{\kappa}\,\big|
}
\]
The dual stability measure is of particular interest when $\vi^\ast(G,T)=\infty$, \IE, the entire network cannot be made to fail.
In this case, a natural goal is to find out if a significant portion of the nodes in the network can be failed by 
shocking a limited number of nodes of $G$; this is captured by the definition of $\dvi^\ast(G,T,\kappa)$.

\paragraph{Violent Death vs. Slow Poisoning}
In our results, we distinguish two cases of death of a network:
\begin{description}
\item[violent death ($T=2$)] 
The network is dead by the very next step after the shock.

\item[slow poisoning (any $T\geq 2$)]
The network may not be dead immediately but dies {\em eventually}.
\end{description}

\subsection{Rationale Behind the Stability Measures}

Although it is possible to think of many other alternate measures of stability for networks than the ones defined in this paper, 
the measures introduced here are in tune with the ideas that references~\cite{NYYA07,E04} directly (and, some other references such 
as~\cite{F03,UW04} implicitly) used to empirically study their networks by shocking only a few (sometimes one) node. Thus, 
a rationale in defining the stability measures in the above manner
is to follow the cue provided by other researchers in the banking industry in studying models such as in this paper instead of creating a 
completely new measure that may be out of sync with 
ideas used by prior researchers and therefore could be subject to criticisms.

\section{Comparison with Other Models for Attribute Propagation in Networks}
\label{comparison-sec}

\begin{wrapfigure}[8]{r}{1.9in}
\begin{pspicture}(0.1,-0.4)(3,1.1)
\psset{xunit=1in,yunit=1in}
\pscircle[linewidth=1pt,origin={0.2,0},fillstyle=none,fillcolor=lightgray](0,0){0.3}
\rput(0.2,0){$\pmb{b}$}
\pscircle[linewidth=1pt,origin={0.2,0.5},fillstyle=none,fillcolor=lightgray](0,0){0.3}
\rput(0.2,0.5){$\pmb{a}$}
\pscircle[linewidth=1pt,origin={1,0.25},fillstyle=none,fillcolor=lightgray](0,0){0.3}
\rput(1,0.25){$\pmb{c}$}
\pscircle[linewidth=1pt,origin={1.8,0},fillstyle=none,fillcolor=lightgray](0,0){0.3}
\rput(1.8,0){$\pmb{e}$}
\pscircle[linewidth=1pt,origin={1.8,0.5},fillstyle=none,fillcolor=lightgray](0,0){0.3}
\rput(1.8,0.5){$\pmb{d}$}
\psline[linewidth=1pt,arrowsize=1.5pt 4,linecolor=black]{<-}(0.3,0)(0.9,0.2)
\psline[linewidth=1pt,arrowsize=1.5pt 4,linecolor=black]{<-}(0.3,0.5)(0.9,0.3)
\psline[linewidth=1pt,arrowsize=1.5pt 4,linecolor=black]{<-}(1.1,0.2)(1.7,0)
\psline[linewidth=1pt,arrowsize=1.5pt 4,linecolor=black]{<-}(1.1,0.3)(1.7,0.5)
\rput[ml](0.2,-0.25){$\Phi=0.4\,\,\,\,\,\,\gamma=0.1\,\,\,\,\,\,E=5$}
\end{pspicture}
\caption{\label{ex1}A homogeneous network used in the discussion in Section~\ref{comparison-sec}.}
\end{wrapfigure}
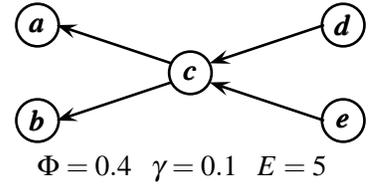

Models for propagation of beneficial or harmful attributes have been investigated in the past in several other contexts such as  
influence maximization in social networks~\cite{kempe1,CWY09,Chen08,borodin}, disease spreading in urban networks \cite{eubank04,coelho08,eubank05}, 
percolation models in physics and mathematics~\cite{SA94} and other types of contagion spreads~\cite{klein1,klein2}.
However, the model for shock propagation in financial network discussed in this paper is {\em fundamentally} very different from all these 
models. For example, the cascade models of failure considered in~\cite{klein1,klein2} are probabilistic models of failure propagation of a more generic nature, 
and thus not very useful to study failure propagation via interlocked balance sheets of financial institutions (as is the case in OTC derivatives markets).
Some distinguishing features of our model include: 

\vspace*{0.1in}
\noindent
$\pmb{(\mathfrak{a})}$ 
Almost all of these models include a trivial solution in which the attribute spreads to the entire network if we inject each node
individually with the attribute. This is not the case with our model: 
{\em a node may not fail when shocked, and the network may not be dead if all nodes are shocked}. 
For example, consider the network in~\FI{ex1}$\!\!${\bf (i)}. 
\begin{itemize}
\item
Suppose that all the nodes are shocked. Then, the following events happen.
\begin{itemize}
\item
Node $a$ (and similarly node $b$) fails at $t=1$ since $\Phi\left(\din(a)+\frac{E}{5}\right)>\gamma\left(\din(a)+\frac{E}{5}\right)$.

\item
Node $c$ also fails at $t=1$ since $\Phi\left(\din(c)-\dout(c)+\frac{E}{5}\right)=0.4>\gamma\left(\din(c)+\frac{E}{5}\right)=0.3$.

\item
Node $d$ (and similarly node $e$) do not fail at $t=1$ since 
$\Phi\left(-\dout(d)+\frac{E}{5}\right)=0<\gamma\times\frac{E}{5}=0.1$ and its equity stays at $0.1-0=0.1$.

\item
At $t=2$, node $d$ (and similarly node $e$) receives a shock from node $c$ of the amount $\frac{0.4-0.3}{2}=0.05<0.1$. 
Thus, nodes $d$ and $e$ do not fail. Since no new nodes fail during $t>2$, the network does not become dead.
\end{itemize}
\item
However, suppose that only nodes $a$ and $b$ are shocked. Then, the following events happen.
\begin{itemize}
\item
Node $a$ (and similarly node $b$) fails at $t=1$ since $\Phi\left(\din(a)+\frac{E}{5}\right)\!\!=\!0.8>\gamma\left(\din(a)+\frac{E}{5}\right)\!=\!0.2$.

\item
At $t=2$, node $c$ receives a shock of the amount $2\times (0.8-0.2)=1.2>\gamma\left(\din(c)+\frac{E}{5}\right)=0.3$.
Thus, node $c$ fails at $t=2$.

\item
At $t=3$, node $d$ (and similarly node $e$) receives a shock of the amount $\frac{1.2-0.3}{2}=0.45>\gamma\times\frac{E}{5}=0.1$.
Thus, both these nodes fail at $t=3$ and the entire network is dead.
\end{itemize}
\end{itemize}
As the above example shows, if shocking a subset of nodes makes a network dead, adding more nodes to this subset may {\em not} necessarily lead to the death 
of the network, and the stability measure is {\em neither monotone nor sub-modular}.
Similarly, it is also possible to exhibit banking networks such that to make the entire network fail:
\begin{itemize}
\item 
it may be necessary to shock a node even if it does not fail since shocking such a node ``weakens'' it by decreasing its equity, and 

\item 
it may be necessary to shock a node even if it fails due to shocks given to other nodes. 
\end{itemize}

\vspace*{0.1in}
\noindent
$\pmb{(\mathfrak{b})}$ 
The complexity of the computational aspects of many previous attribute propagation models arise due to the presence of cycles in 
the graph; for example, see~\cite{Chen08} for polynomial-time solutions of some of these problems when the underlying graph does not 
have a cycle. In contrast, our computational problems are may be hard {\em even when the given graph is acyclic}; instead, a key 
component of computational complexity arises due to two or more directed paths sharing a node.

\begin{table}[htbp]
{\small
\hspace*{-0.7in}
\scalebox{0.82}[0.82]{
\begin{tabular}{lccl} 
\cmidrule{2-4}
\multicolumn{2}{c}{\bf \begin{tabular}{c} \hspace*{0.5in} Network type, \\ \hspace*{0.5in} result type \end{tabular}} & \bf \begin{tabular}{c} Stability $\pmb{\vi^\ast(G,T)}$ \\ \hline bound, assumption (if any), \\ corresponding theorem \end{tabular} & \bf \begin{tabular}{c} Dual Stability $\pmb{\dvi^\ast(G,T,\kappa)}$ \\  \hline bound, assumption (if any), \\ corresponding theorem \end{tabular} \\ \cmidrule{2-4}
\multirow{8}*{{\bf \begin{tabular}{c} Homo- \\ geneous \end{tabular} }} & \bf \begin{tabular}{c}$\mathbf{T=2}$ \\ approximation hardness\end{tabular}    & \begin{tabular}{c} $\mathbf{(1-\eps)\,\ln n}$, \\ $\mathbf{\NP\not\subseteq\mathsf{DTIME}}\mathbf{\left(n^{\log\log n}\right)}$, \bf Theorem~\ref{th1} \end{tabular}           &   \\ \cmidrule{2-4}
                           & \begin{tabular}{c} $\mathbf{T=2}$, \bf approximation ratio \end{tabular} & \begin{tabular}{c} ${\mathrm{O}\left(\log \left( \dfrac{n\,\,\Phi\,\,E}{\gamma\,\,(\Phi-\gamma)\,\,\left| E-\Phi \right|} \right)\right)}$, \bf Theorem~\ref{log1} \end{tabular} &                                                  \\ \cmidrule{2-4}
                           & \begin{tabular}{c} {\bf Acyclic}, $\mathbf{\forall\,\,T>1}$, \\ \bf approximation hardness \end{tabular}  & \begin{tabular}{c} \bf $\mathbf{\apx}$-hard, Theorem~\ref{apx1} \end{tabular}    & \begin{tabular}{c} $\pmb{\left(1-\e^{-1}+\eps\right)}^{-1}$, \\ $\mathbf{\mathsf{P}\neq\NP}$, \bf Theorem~\ref{dual-np}(a) \end{tabular} \\ \cmidrule{2-4}
                           & \begin{tabular}{c} {\bf In-arborescence}, \\ $\mathbf{\forall\,\,T>1}$, \bf exact solution \end{tabular} & \begin{tabular}{c} $\mathbf{\mathrm{O}\left(n^2\right)}$ \bf time, every node fails \\ \bf when shocked, Theorem~\ref{poly1} \end{tabular} & \begin{tabular}{c} $\mathbf{\mathrm{O}\left(n^3\right)}$ \bf time, every node fails \\ \bf when shocked, Theorem~\ref{dual-np}(b) \end{tabular} \\ \cmidrule{1-4}
\multirow{8}*{{\bf \begin{tabular}{c} Hetero- \\ geneous \end{tabular} }} & \begin{tabular}{c} {\bf Acyclic}, $\mathbf{\forall\,\,T>1}$, \\ \bf approximation hardness \end{tabular} & \begin{tabular}{c} $\mathbf{(1-\eps)\,\ln n}$, $\,\,\mathbf{\NP\not\subseteq\mathsf{DTIME}}\mathbf{\left(n^{\log\log n}\right)}$, \\ \bf Theorem~\ref{hetero-thm1} \end{tabular}             & \begin{tabular}{c} $\pmb{{\left(1-\e^{-1}+\eps\right)}^{-1}}$, \\ $\mathbf{\mathsf{P}\neq\NP}$, \bf Theorem~\ref{dual-np}(a) \end{tabular} \\ \cmidrule{2-4}
                                 & \begin{tabular}{c} {\bf Acyclic}, $\mathbf{T=2}$, \bf approximation hardness \end{tabular} &   & \begin{tabular}{c} $\mathbf{n^\delta}$, \bf assumption ($\mathbf{\star}$)$^{\pmb{\dagger}}$, Theorem~\ref{dual-np-2} \end{tabular} \\ \cmidrule{2-4}
                           & \begin{tabular}{c} {\bf Acyclic}, $\mathbf{\forall\,\,T>3}$, \\ \bf approximation hardness \end{tabular} & \begin{tabular}{c} $\mathbf{2^{\log^{1-\eps}n}}$, $\mathbf{\NP\not\subseteq\mathsf{DTIME}}\mathbf{(n^{\,\mathrm{poly}(\log n)})}$, \\ \bf Theorem~\ref{hetero-thm2} \end{tabular} & \\ \cmidrule{2-4}
                           & \begin{tabular}{c} {\bf Acyclic}, $\mathbf{T=2}$, \\ {\bf approximation ratio}$^{\,\pmb{\ddagger}}$ \end{tabular} & 
\begin{tabular}{c}
$\displaystyle{\mathrm{O}\left( \log \dfrac{n\,\,\overline{E}\,\,\overline{w_{\max}}\,\,\overline{w_{\min}}\,\,\overline{\alpha_{\max}}}
{\Phi\,\,\gamma\,\, (\Phi-\gamma)\,\,\underline{E}\,\,\underline{w_{\min}}\,\,\underline{\alpha_{\min}}\,\,\underline{w_{\max}} } \right)}$,
\\
\bf \hspace*{-1.5in}Theorem~\ref{log2}
\end{tabular}
&                                                   \\ 
\cmidrule{2-4}
\end{tabular}
}

{\normalsize \hspace*{0.5in}{$^{\pmb{\ddagger}}$See Theorem~\ref{log2} for definitions of some parameters in the approximation ratio.} }
\newline
{\normalsize \hspace*{0.5in}{$^{\pmb{\dagger}}$See page~\pageref{star-ref} for statement of assumption {\bf ($\mathbf{\star}$)}, which is weaker than the assumption $\mathsf{P}\neq\NP$.} }
\vspace*{-12pt}
\caption{\label{s1}A summary of our results; $\eps>0$ is any arbitrary constant and $0<\delta<1$ is some constant.}
}
\end{table}

\section{Overview of Our Results and Their Implications on Banking Networks}

Table~\ref{s1} summarizes our results,
where the notation $\mathrm{poly}\left(x_1,x_2,\dots,x_k\right)$ denotes a constant-degree polynomial in variables 
$x_1,x_2,\dots,x_k$.
Our results for heterogeneous networks show that the problem of computing stability indices for them is harder than that for homogeneous 
networks, as one would naturally expect. 

\subsection{Brief Overview of Proof Techniques}

\subsubsection{Homogeneous Networks, $\pmb{\mbox{\nam}}$}

\paragraph{$\pmb{T=2}$, approximation hardness and approximation algorithm} 
The reduction for approximation hardness is from a corresponding inapproximability result for the dominating set problem for graphs.
The logarithmic approximation {\em almost} matches the lower bound.
Even though this algorithmic problem can be cast as a covering problem, 
one {\em cannot} explicitly enumerate {\em exponentially many} covering sets in polynomial time.
Instead, we reformulate the problem to that of computing an optimal solution of a polynomial-size 
integer linear programming ({\sf ILP}), and then use the greedy approach of~\cite{D82} for approximation. 
A careful calculation of the size of the coefficients of the {\sf ILP} ensures that we have the desired approximation bound.

\paragraph{Any $\pmb{T>1}$, approximation hardness and exact algorithm} 
The $\apx$-hardness result, which holds even if the degrees of all nodes are {\em small}
constants, is via a reduction from the node cover problem for $3$-regular graphs.
Technical complications in the reduction arise from making sure that the generated graph instance of \nam\ is {\em acyclic}, 
no new nodes fail for any $t>3$, but the network can be dead without each node being individually shocked.  
If the network is a rooted in-arborescence and every node can be individually shocked to fail, then 
we design an $O\left(n^2\right)$ time {\em exact} algorithm via dynamic programming; 
as a by product it also follows that the value of the stability index of this kind of network with {\em bounded} 
node degrees is {\em large}. 

\subsubsection{Homogeneous Networks, $\pmb{\mbox{\dnam}}$}

\paragraph{Any $\pmb{T}$, approximation hardness and exact algorithm} 
For hardness, we translate a lower bound for the {\em maximum coverage} problem~\cite{F98}.
The reduction relies on the fact that in dual stability measure every node of the network need {\em not} fail. 
If the given graph is a rooted in-arborescence and every node can be individually shocked to fail, we provide an $O\left(n^3\right)$ time exact 
algorithm via dynamic programming.

\subsubsection{Heterogeneous Networks, $\pmb{\mbox{\nam}}$}

\paragraph{Any $\pmb{T}$, approximation hardness} 
The reduction is from a corresponding inapproximability result for the minimum set covering problem.
Unlike homogeneous networks, unequal shares of the total external assets by various banks allows us to 
encode an instance of set cover by ``equalizing'' effects of nodes.

\paragraph{$\pmb{T=2}$}
The {\bf approximation algorithm} 
uses linear program in Theorem~\ref{log1} with more careful calculations.

\paragraph{Any $\pmb{T>2}$, approximation hardness} 
This stronger poly-logarithmic inapproximability result than that in Theorem~\ref{hetero-thm1} is obtained by a reduction from 
{\sc Minrep}, a graph-theoretic abstraction of two prover multi-round protocol 
for any problem in $\NP$. Many technical complications in the reduction, 
culminating to a set of $22$ symbolic linear equations between the parameters that we must satisfy. Intuitively,
the two provers in \minrep\ correspond to two nodes in the network that cooperate to fail to another specified set of nodes.

\subsubsection{Heterogeneous Networks, $\pmb{\mbox{{{{\sc Dual-Stab$_{2,\Phi,\kappa}$}}}}}$,}

\vspace*{-23pt}
\paragraph{\hspace*{260pt}approximation hardness} 
The reduction for this 
stronger inapproximability result is from the {\em densest hyper-graph} problem.

\subsection{Implications of Our Results on Banking Networks}

\subsubsection{Effects of Topological Connectivity}

Though researchers agree that the connectivity of banking networks affects
its stability~\cite{AG2000,GK08}, the conclusions drawn are mixed, namely some researchers 
conclude that lesser connectivity implies more susceptibility to contagion whereas other researchers 
conclude in the opposite. Based on our results and their proofs, we found that topological connectivity does play a significant role in stability of the network 
in the following complex manner.

\medskip
\hangindent=11pt
\hangafter=0
\noindent
{\bf Even acyclic networks display complex stability behavior}:
Sometimes a cause of the instability of a banking network is attributed to {\em cyclical} dependencies of borrowing and lending mechanisms among major banks, 
\EG, banks $v_1$, $v_2$ and $v_3$ borrowing from banks $v_2$, $v_3$ and $v_1$, respectively.
Our results show that computing the stability measures may be difficult even without the presence of such cycles. Indeed, larger inapproximability 
results, especially for heterogeneous networks, are possible because slight change in network parameters can cause a large change in the
stability measure. On the other hand, acyclic small-degree rooted in-arborescence networks exhibit higher values of the stability measure, \EG, 
if the maximum in-degree of any node in a rooted in-arborescence is $5$ and the shock parameter $\Phi$ is no more than twice the value of the percentage
of equity to assets $\gamma$, then by Theorem~\ref{poly1} $\vi^\ast(G,T)>0.1$. 

\medskip
\hangindent=11pt
\hangafter=0
\noindent
{\bf Intersection of borrowing chains may cause lower stability}:
By a {\em borrowing chain} we mean a directed path from a node $v_1$ to another node $v_2$, indicating that bank $v_2$ effectively borrowed from bank $v_1$
through a sequence of successive intermediaries. Now, assume that there is another directed path from $v_1$ to another node $v_3$. Then, 
failure of $v_2$ and $v_3$ propagates the resulting shocks to $v_1$ and, if the shocks arrive at the same step, then the total shock
received by bank $v_1$ is the addition of these two shocks, which in turn passes this ``amplified'' shock to other nodes in the network.

\medskip
\noindent
Based on these kinds of observations, it can be reasonably inferred that 
homogeneous networks with topologies more like a small-degree in-arborescence have higher stabilities, whereas 
networks of other types of topologies may have lower stabilities even if the topologies are acyclic.
For example, as we observe later, 
when $\mathrm{deg}_{\mathrm{in}}^{\max}=3$, $\gamma=0.1$ and $\Phi=0.15$, we get $\vi^\ast(G,T)>0.22$ and 
the network cannot be put to death without shocking more than $22\%$ of the nodes. 

\subsubsection{Effects of Ratio of External to Internal Assets ($E/I$) and percentage of equity to assets ($\gamma$) for Homogeneous Networks}

As our relevant results and their proofs show, lower values of $E/I$ and $\gamma$ may cause the network stability to be extremely sensitive with respect
to variations of other parameters of a homogeneous network. For example, in the proof of Theorem~\ref{th1} we have 
$\lim_{n\to\infty} \nicefrac{E}{I}=\lim_{n\to\infty} \gamma=0$, leading to variation of the stability index by a logarithmic factor; however, in the proof of Theorem~\ref{apx1}
we have $\nicefrac{E}{I}=0.25$ and $\gamma=0.23$ leading to much smaller variation of the stability index. 

\subsubsection{Homogeneous vs. Heterogeneous Networks}

Our results and proofs show that heterogeneous networks of banks with diverse equities tend to exhibit wider fluctuations of the
stability index with respect to parameters, \EG, Theorem~\ref{hetero-thm2} shows a polylogarithmic fluctuation even if
the ratio $E/I$ is large. 

\subsubsection{Further Empirical Study}

Subsequent to writing this paper, DasGupta and Kaligounder in a separate article~\cite{arxiv2} performed a thorough {\em empirical} analysis of the stability measure 
over more than $700000$ combinations of networks types and parameters, and uncovered many interesting insights into the relationships of the stability with other 
relevant parameters of the network, such as:
\begin{description}
\item[\bf Effect of uneven distribution of assets:]
Networks where all banks have roughly the {\em same} external assets are more stable
over similar networks in which fewer banks have a disproportionately higher external assets compared to the remaining banks, and 
failures of those banks with {\em higher} assets contribute more damage to the stability of the network.
Furthermore, networks in which fewer banks have a disproportionately higher external assets compared to the remaining banks
has a minimal instability even if their equity to asset ratio is large and comparable to loss of external assets.
This is not the case for networks where all banks have roughly the same external assets.
Thus, in summary, they concluded that banks with disproportionately large external assets (``banks that are too big'') affect the stability of
the entire banking network in an {\em adverse} manner.

\item[Effect of connectivity:]
For banking networks where all banks have roughly the same amount of external assets, higher connectivity leads to {\em lower} stability.
In contrast, for banking networks in which few banks have disproportionately higher external assets compared to the remaining banks, 
higher connectivity leads to {\em higher} global stability.

\item[Correlated versus random failures:]
{\em Correlated} initial failures of banks causes more damage to the entire banking network as opposed to just {\em random} initial failures of banks.

\item[Phase transition properties of global stability:]
The global stability exhibits several sharp {\em phase transitions} for various banking networks within certain parameter ranges. 
\end{description}

\section{Preliminary Observations on Shock Propagation}

\begin{proposition}\label{obs1}
Let $\langle G=(V,F),\gamma,\beta,E \rangle$ be the given (homogeneous or heterogeneous) banking network. Then, the following are true:
\begin{description}
\item[(a)]
If $\dout(v)=0$ for some $v\in V$, then node $v$ must be given a shock (and, must fail due to this shock) 
for the entire network to fail.

\item[(b)]
Let $\alpha$ be the number of edges in the longest directed simple path in $G$. Then, no new
node fails at any time $t>\alpha$.

\item[(c)]
We can assume without loss of generality that $G$ is weakly connected, \IE, the un-oriented version of $G$ is connected.
\end{description}
\end{proposition}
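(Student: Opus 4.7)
The three parts of this proposition are structurally independent and each reduce to a direct inspection of the shock propagation dynamics described in Table~\ref{t1}. I would prove them in order; no single part is particularly deep, but the chain-of-failures argument in part (b) is the most substantive, so that is where I would focus attention.

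For part (a), my strategy is to read off the equity update equation
$c_u(t+1) = c_u(t) - \sum_{v\colon c_v(t)<0,\,(u,v)\in\ealive(t)}\frac{\min\{|c_v(t)|,b_v\}}{\din(v,t)}$
and observe that the summation for $u=v$ ranges over outgoing neighbors of $v$. Since $\dout(v)=0$, the sum is empty for every $t$, so $c_v(t)=c_v(1)$ for all $t\geq 1$. If $v\notin\vs$ then $c_v(1)=c_v>0$, so $v$ stays alive forever and the network is not dead; hence $v$ must be shocked. Likewise, if $v\in\vs$ but the shock is too mild (so $c_v(1)\geq 0$), then again $c_v(t)=c_v(1)\geq 0$ for all $t$ and $v$ never fails. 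So $v$ must be shocked \emph{and} this shock must actually drive its equity negative.

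For part (b), I plan to trace failures backward along the edge set. Suppose some $v_0$ newly fails at time $t_0>1$, meaning $v_0\in\valive(t_0)$ and $c_{v_0}(t_0)<0\leq c_{v_0}(t_0-1)$. Since the drop comes from the summation in the update formula, there must exist $v_1$ with $(v_0,v_1)\in\ealive(t_0-1)$ and $c_{v_1}(t_0-1)<0$; membership in $\ealive(t_0-1)$ forces $v_1\in\valive(t_0-1)$, so $v_1$ is itself a newly-failing node at time $t_0-1$. Iterating yields a chain $v_0,v_1,\ldots,v_k$ with $(v_{i-1},v_i)\in F$ and $v_i$ newly failing at time $t_0-i$, terminating at $v_k$ newly failing at time $1$ (forcing $v_k\in\vs$). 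The key observation is that the $v_i$ are pairwise distinct, since each fails at a distinct time step, so this chain is a \emph{directed simple path} in $G$ with $k=t_0-1$ edges. Hence $t_0-1\leq\alpha$, which bounds the latest time at which new failures can appear. The main technical point to nail down is precisely the distinctness of the $v_i$ and the fact that the shock-contributing neighbor at each step must itself have been newly failing (rather than long dead), which follows from the $\ealive(t)$ condition in the update.

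For part (c), I would decompose $G$ into its weakly connected components $G_1,\ldots,G_p$ with vertex sets $V_1,\ldots,V_p$. Since $F$ contains no edge between distinct components, the shock propagation formula restricted to any $G_i$ depends only on $\{c_v(t):v\in V_i\}$; the dynamics of the components are completely decoupled. Consequently, the entire network fails if and only if each component fails, and any optimal shock set partitions as $\vs=\bigsqcup_{i=1}^p (\vs\cap V_i)$ with each $\vs\cap V_i$ optimal for $G_i$. The same observation applies to $\dvi^\ast$ by maximizing over the allocation of the budget $\kappa$ among components. Therefore, computing the stability measures on $G$ reduces to computing them on each weakly connected component, and we may assume throughout the paper that $G$ is weakly connected.
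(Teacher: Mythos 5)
Your proof is correct and follows the same three-pronged approach as the paper: inspect the update rule for (a), trace a chain of newly-failing nodes through time for (b), and observe that shock propagation decouples across weakly connected components for (c). In part (b) you are in fact a little more careful than the paper's terse argument — you make the pairwise distinctness of the chain nodes explicit (each node fails at exactly one time step, so the chain is a \emph{simple} path), and you correctly orient the edges as $(v_{i-1},v_i)\in F$ from the later-failing creditor toward the earlier-failing debtor, whereas the paper's wording ``add directed edges $(u,v)$ from a node $u\in V(i)$ to a node $v\in V(i+1)$'' inadvertently reverses the direction of the $G$-edge — but the underlying backward-chain idea and the resulting bound $t_0-1\leq\alpha$ are the same.
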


\begin{proof}~\\
\noindent
{\bf (a)}
Since $\dout(v)=0$, no part of any shock given to any other nodes in the network can reach $v$. Thus, the network of 
$v$, namely $c_v=\gamma\,a_v$ stays strictly positive (since $\gamma>0$) and node $v$ never fails.

\vspace*{0.1in}
\noindent
{\bf (b)}
Let $\mathrm{t}_{\,\mathrm{last}}$ be the latest time a node of $G$ failed, and let $V(t)$ be the set of nodes that failed at time $t=1,2,\dots,\mathrm{t}_{\,\mathrm{last}}$. 
Then, $V(1),V(2),\dots,V(\mathrm{t}_{\,\mathrm{last}})$ is a partition of $V$. For every $i=1,2,\dots,\mathrm{t}_{\,\mathrm{last}}-1$, 
add directed edges $(u,v)$ from a node $u\in V(i)$ to a node $v\in V(i+1)$ if $u$ was last
node that transmitted any part of the shock to $v$ before $v$ failed. Note that $(u,v)$ is also an edge of $G$ and for every node $v\in V(i+1)$ there must
be an edge $(u,v)$ for some node $u\in V(i)$. Thus, $G$ has a path of length at least $\mathrm{t}_{\,\mathrm{last}}$.

\vspace*{0.1in}
\noindent
{\bf (c)}
This holds since otherwise the stability measures can be computed separately on each weakly connected component.
\end{proof}

\section{Homogeneous Networks, $\pmb{\mbox{{{{\sc Stab$_{2,\Phi}$}}}}}$, Logarithmic Inapproximability} 

\begin{theorem}\label{th1}
$\vi^\ast(G,2)$ cannot be approximated in polynomial time within a factor of $(1-\eps)\ln n$, for any constant $\eps>0$,
unless $\NP\subseteq\mathsf{DTIME}\left(n^{\log\log n}\right)$.
\end{theorem}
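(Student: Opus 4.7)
Following the hint in the overview, I would give an approximation-preserving reduction from the minimum Dominating Set problem on general graphs, whose inapproximability within $(1-\eps)\ln n$ under $\NP\not\subseteq\mathsf{DTIME}(n^{\log\log n})$ is inherited from Feige's Set Cover lower bound~\cite{F98} via the standard Set Cover$\to$Dominating Set reduction.

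\emph{Construction.} Given a Dominating Set instance $H=(V_H,E_H)$ with $|V_H|=n$, I build a homogeneous banking network with $V=V_H$ by replacing every undirected edge $\{u,v\}\in E_H$ with the two arcs $(u,v),(v,u)\in F$. This symmetric construction forces $\din(v)=\dout(v)=\deg_H(v)$ and hence $b_v=\iota_v$ and $e_v=E/n$ for every node $v$. I set $w=1$ (permissible by the parameter simplification of Section~2.6), $E=n$, $\Phi=\tfrac{1}{2}$ and $\gamma=n^{-3}$; all model constraints $0<\gamma<\Phi\leq 1$ hold for every sufficiently large $n$.

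\emph{Correctness.} I claim $\infl(V')=V$ under $T=2$ propagation if and only if $V'$ is a dominating set of $H$. The three required facts are all direct calculations:
\begin{description}
\item[(i)] Every shocked $v\in V'$ fails at $t=1$, since $c_v=\gamma(\deg_H(v)+1)\leq \gamma n = n^{-2}$ while the inflicted loss equals $\Phi e_v=\Phi=\tfrac{1}{2}$.
\item[(ii)] If $u\notin V'$ is $H$-adjacent to some $v\in V'$, then $v$ alone is sufficient to kill $u$: because $|c_v(1)|\leq\Phi\leq 1\leq\deg_H(v)=b_v$, the share of shock delivered along $(u,v)\in F$ is
\[
\frac{|c_v(1)|}{\din(v)} \;\geq\; \frac{\Phi-\gamma n}{\deg_H(v)} \;\geq\; \frac{\Phi-\gamma n}{n} \;>\; \gamma n \;\geq\; c_u ,
\]
the strict inequality using $\Phi>\gamma n^{2}$, which holds since $\gamma=n^{-3}$.
\item[(iii)] If $u\notin V'$ has no $H$-neighbour in $V'$, then no shock reaches $u$ at $t=2$; at $T=2$ only one round of propagation is possible, so $u$ survives.
\end{description}
Isolated vertices of $H$ lie in every dominating set and, by Proposition~\ref{obs1}(a), in every valid shock set, so they introduce no asymmetry.

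\emph{Conclusion.} The correspondence is exactly size-preserving: $n\cdot\vi^\ast(G,2)$ equals the domination number of $H$. Hence any polynomial-time $(1-\eps)\ln n$-approximation for $\vi^\ast(G,2)$ would yield a $(1-\eps)\ln n$-approximation for Dominating Set on an $n$-vertex graph, contradicting the assumed hardness. The main technical obstacle is the parameter bookkeeping in (i)--(ii): the constants $\Phi,E,\gamma$ must be jointly tuned so that every shocked node fails and, simultaneously, one single failed neighbour is already enough to kill an adjacent unshocked node, for all possible degrees $0\leq\deg_H(v)\leq n-1$. Choosing $\gamma$ polynomially small in $n$ provides enough slack for all these inequalities to hold uniformly, after which items (i)--(iii) reduce to short direct checks.
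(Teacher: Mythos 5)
Your construction is the same reduction from Dominating Set that the paper uses (replace each undirected edge by two oppositely directed arcs, make every node's net interbank position zero, and set $\gamma$ polynomially small so that a single shocked node kills itself and any one failed neighbor kills an unshocked node). The only difference is cosmetic parameter bookkeeping ($E=n$, $\Phi=\tfrac12$, $\gamma=n^{-3}$ instead of $E=10n$, $\Phi=1$, $\gamma=n^{-2}$), and both choices satisfy the required inequalities; your proof is correct and matches the paper's.
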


\begin{proof}
The {\em dominating set} problem for an undirected graph (DOMIN-SET) is defined as follows: 
{\em given an undirected graph $G=(V,F)$ with $n=|V|$ nodes, 
find a minimum cardinality subset of nodes $V'\subset V$ such that every node in $V\setminus V'$ is 
incident on at least one edge whose other end-point is in $V'$.}
It is known that DOMIN-SAT is equivalent to the minimum set-cover problem under L-reduction~\cite{BM84}, and thus
cannot be approximated within a factor of $(1-\eps)\ln n$ unless 
$\NP\subseteq\mathsf{DTIME}\left(n^{\log\log n}\right)$~\cite{F98}.

Consider an instance $G=(V,F)$ of DOMIN-SET with $n$ nodes and $m$ edges, and let $\opt$ denote the size of 
an optimal solution for this instance.
Our (directed) banking network $\overrightarrow{G}=(\overrightarrow{V},\overrightarrow{F})$ is obtained from $G$ by replacing each undirected edge $\{u,v\}$
by two directed edges $(u,v)$ and $(v,u)$. 
Thus we have $0<\din(v)=\dout(v)<n$ for every node $v\in V$.
We set the global parameters as follows: 
$E=10\,n$, $\gamma=n^{-2}$ and $\Phi=1$.

For a node $v$, let $\nbr(v)=\left\{\,u\,|\,\{u,v\}\in E\,\right\}$ be the set of neighbors of $v$ in $G$.
We claim that if a node $v$ is shocked at time $t=1$, then all nodes in 
in $\{v\}\cup \nbr(v)$ fail at time $t=2$. Indeed, suppose that $v$ is shocked at $t=1$. Then, $v$ surely fails
because
\begin{gather*}
\Phi\,e_v
=
\din(v)-\dout(v)+\frac{E}{n}
=
10
>
\frac{2}{n}
>
\frac{\din(v)+\frac{E}{n}}{n^2}
=
\gamma\,a_v
\end{gather*}
Now, consider $t=2$ and consider a node $v$ such that $v$ has not failed but a node $u\in\nbr(v)$ failed at time $t=1$.
Then, node $v$ surely fails because
\begin{gather*}
s_{v,2}
\geq
\frac{\min\{s_{u,1}-c_u,b_u\}}{\din(u,2)}
=
\frac{\min\{\Phi\,e_u-\gamma\,a_u,\din(u)\}}{\din(u)}
>
\min\left\{\, \frac{10-\frac{2}{n}}{\din(u)} ,\,1 \, \right\}
\\
>
\frac{2}{n}
>
\frac{\din(v)+\frac{E}{n}}{n^2}
=
\gamma\,a_v
\end{gather*}
Thus, we have a $1$--$1$ correspondence between the solutions of DOMIN-SET
and death of $\overrightarrow{G}$, namely $V'\subset V$ is a solution of DOMIN-SET if and 
only if shocking the nodes in $V'$ makes $\overrightarrow{G}$ fail at time $t=2$.
\end{proof}

\section{Homogeneous Networks, $\pmb{\mbox{{{{\sc Stab$_{2,\Phi}$}}}}}$, Logarithmic Approximation} 

\begin{theorem}\label{log1}
${\mbox{{{{\sc Stab$_{2,\Phi}$}}}}}$ admits a polynomial-time algorithm with approximation ratio
\linebreak
$\mathrm{O}\left(\log \left( \dfrac{n\,\,\Phi\,\,E}{\gamma\,\,(\Phi-\gamma)\,\,\left| E-\Phi \right|} \right)\right)$.
\end{theorem}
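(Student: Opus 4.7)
The plan is to recast $\mbox{{\sc Stab$_{2,\Phi}$}}$ as a polynomial-size covering integer linear program (ILP) and then apply the generalized greedy heuristic of Dobson~\cite{D82}, whose approximation guarantee depends only logarithmically on the magnitudes of the scaled integer coefficients.

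\textbf{Exact ILP characterization of death by $t=2$.} Because shocks propagate for only two discrete steps, the net worth $c_v(2)$ is an explicit affine function of the indicator vector $x \in \{0,1\}^n$ of the shocked set $V'$. Let $F_1 = \{u \in V : \Phi e_u > c_u\}$ be the set of nodes that fail when individually shocked at $t = 1$, and for each $u \in F_1$ let $\Delta_u = \min\{\Phi e_u - c_u,\,b_u\}/\din(u)$ be the damage transmitted along each of its incoming edges. A short case analysis (on whether $v$ is shocked and whether $v \in F_1$) shows that $v$ dies by $t=2$ iff
\[
\Phi\, e_v\, x_v \;+ \sum_{u \in F_1:\,(v,u)\in F} \Delta_u\, x_u \;>\; c_v,
\]
so $\vi^\ast(G,2)$ equals (up to normalization by $n$) the optimum of the covering ILP $\min \sum_v x_v$ subject to these $n$ strict inequalities and $x \in \{0,1\}^n$.

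\textbf{Applying Dobson's greedy.} All coefficients above are rationals whose common denominator is polynomial in the input size, so multiplying every constraint by a suitable integer $K$ converts each into $\sum_u a_{vu}\, x_u \geq K c_v + 1$ with nonnegative integer entries. By the result of Dobson~\cite{D82}, the natural greedy rule then returns a feasible integer solution of cost at most $H(D)\cdot\opt$, where $H(\cdot)$ is the harmonic number and $D = \max_u \sum_v \min\{a_{vu},\, Kc_v+1\}$ is the maximum clipped column sum. Thus the theorem reduces to showing $D = \mathrm{poly}\!\left(\dfrac{n\,\Phi\,E}{\gamma\,(\Phi-\gamma)\,|E-\Phi|}\right)$, since then $H(D) = O(\log D)$ matches the stated approximation ratio.

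\textbf{Coefficient bookkeeping --- the main obstacle.} The delicate part is choosing $K$ just large enough that every entry becomes integer while keeping the clipped column sums under control. A column $u$ contributes a diagonal entry of order $K\Phi e_u$ together with at most $\dout(u) \leq n-1$ off-diagonal entries of order $K\Delta_u$, each clipped at $Kc_v + 1$. The homogeneous balance-sheet identities $e_v = b_v - \iota_v + E/n$, $a_v = b_v + E/n$, $c_v = \gamma a_v$ yield $\max_v \Phi e_v = O(\Phi(n + E/n))$ and $\min_v c_v \geq \gamma E/n$. The required scaling $K$ is essentially the reciprocal of the smallest positive value of
\[
\Phi e_u - c_u \;=\; (\Phi-\gamma)\bigl(b_u + \tfrac{E}{n}\bigr) - \Phi\, \iota_u,
\]
which (for integer $b_u, \iota_u$) is controlled jointly by the precision of $\Phi - \gamma$ and by how close $E$ can be made to $\Phi$ in the appropriate arithmetic sense, and this is the origin of the $|E - \Phi|$ factor in the denominator. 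Plugging the max/min estimates above into the definition of $D$ and simplifying produces the claimed polynomial bound on $D$, so Dobson's guarantee $H(D) = O(\log D)$ yields the stated $O(\log(\cdot))$ approximation ratio.
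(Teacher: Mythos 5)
Your overall plan—recast the $T=2$ failure condition as a polynomial-size covering ILP and invoke Dobson's greedy heuristic—is exactly the strategy the paper uses, and your ILP is the same as the paper's (with $\delta_{v,u}$ playing the role of your $\Delta_u$ and $\Phi e_v$). So the architecture is right. But there are two substantive issues.

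First, the integerization by a large multiplier $K$ is an unnecessary detour that actually weakens the argument. Dobson's theorem applies to nonnegative \emph{real} covering constraints; the relevant normalization is to divide every coefficient and every right-hand side by $\zeta = \min\{\min_{u,v}\delta_{v,u},\ \min_u c_u\}$ so that every nonzero entry is at least $1$, after which the approximation ratio is $2 + \ln n + \ln\bigl(\max_v \sum_u \delta_{v,u}/\zeta\bigr)$. Phrasing the bound in terms of a "common denominator" $K$ makes the ratio depend on the bit-precision of the input, not on the intrinsic parameters $(n,\Phi,\gamma,E)$, which is \emph{not} what the theorem claims. The paper avoids this by never leaving the real-number setting.

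Second, and more seriously, the theorem's entire content lies in the specific form of the logarithm, and your proof never derives it. Saying that "plugging the max/min estimates above into the definition of $D$ and simplifying produces the claimed polynomial bound" is the whole argument, not a remark to defer. Concretely, one has to establish each of $\min_{u:\delta_{u,u}>0}\delta_{u,u} = \Omega(|E-\Phi|/n)$, $\min_{u,v:\delta_{u,v}>0}\delta_{u,v} = \Omega((\Phi-\gamma)E/n)$, $\min_u c_u = \Omega(\gamma E/n)$, and $\max_v\sum_u \delta_{v,u} = \mathrm{O}(n(\Phi-\gamma)E)$, and then combine them; without those four displays the stated ratio is an assertion, not a conclusion. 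You also locate the source of the $|E-\Phi|$ factor in $\Phi e_u - c_u$, but in the paper it arises from bounding the \emph{diagonal} coefficient $\delta_{u,u} = \Phi e_u = \Phi(\din(u)-\dout(u)+E/n)$ away from zero (exploiting integrality of $\din(u)-\dout(u)$), not from the off-diagonal net-damage term. As written, the proposal has the right skeleton but omits the quantitative heart of the proof.
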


\begin{proof}
Suppose that $\Phi e_u<0$ for some node $u\in V$. Then, there exists an optimal solution 
in which we do not shock the node $u$. Indeed, if $u$ was shocked, the equity of $u$
increases from $c_u$ to $c_u + |\,\Phi\,e_u\,|$ and $u$ does not propagate any shock to
other nodes. Thus, if $u$ still fails at $t=2$, then it also fails at $t=2$ if it was not shocked.

Let $\vs$ denote the set of nodes that we will select for shocking, and, 
for every node $v\in V$, let $\delta_{\,v,u}$ be defined as: 
$\delta_{\,v,u}
=\left\{
\begin{array}{ll}
\max\{\,0,\,\Phi\,e_v\},  & \mbox{if $u=v$} \\
\dfrac{\min\{\Phi\,e_v-c_v,\,b_v\}}{\din(v)}, & \mbox{if $\Phi\,e_v>c_v$ and $(u,v)\in F$} \\
0,                                           & \mbox{otherwise} \\
\end{array}
\right. \!\!\!\!\!\!\!.$
Then, our problem reduces to a covering problem of the following type: 

\begin{quote}
{\em 
\hspace*{-0.3in}
find a minimum cardinality subset $\vs\subseteq V$
such that, for every node $u$, $\sum_{v\in \vs}\!\!\delta_{\,v,u}>c_u$. 
}
\end{quote}

\noindent
Note that we cannot even explicitly enumerate, for a node $u\in V$, all subsets $V'\subseteq V\setminus\{u\}$
such that $\sum_{v\in V'}\!\!\delta_{\,v,u}>c_u$, since there are exponentially many such subsets.
Let the binary variable $x_v\in\{0,1\}$ be the indicator variable for a node $v\in V$ for inclusion in $\vs$.
However, we can reformulate our problem as the following integer linear programming problem:
\begin{gather}
\hspace*{-1.05in} \mbox{minimize } \sum_{v\in V} x_v \notag \\
\mbox{subject to } \,\forall u\in V\colon \sum_{v\in V}\!\!\delta_{\,v,u}x_v>c_u \label{ip1} \\
\hspace*{1in} x_u \in\{0,1\} \notag
\end{gather}
Let $\displaystyle\zeta=\min_{u\in V} \big\{ \, \min_{v\in V} \{\delta_{\,u,v}\},\, c_u\,  \big\}$.
We can rewrite each constraint $\displaystyle\sum_{v\in V}\!\!\delta_{\,v,u}x_v>c_u$ as 
$\displaystyle\sum_{v\in V}\!\!\frac{\delta_{\,v,u}}{\zeta}x_v>\frac{c_u}{\zeta}$ 
to ensure that every non-zero entry is at least $1$.
Since the coefficients of the constraints and the objective function are all positive real numbers,~\eqref{ip1} can be approximated by the 
greedy algorithm described in~\cite[Theorem 4.1]{D82} with an approximation ratio of 
$2 + \ln n + \ln \left( \max_{v\in V} \left\{ \sum_{u\in V} \frac{\delta_{\,v,u}}{\zeta} \right\} \right)$.
Now, observe that: 
\begin{gather*}
\min_{\substack{u\in V \\ \delta_{\,u,u}>0}} \left\{ \delta_{\,u,u} \right\}
=
\min_{\substack{u\in V \\ \delta_{\,u,u}>0}} \left\{  \Phi\, \left( \din(u)-\dout(u) + \frac{E}{n} \right) \right\}
=
\Omega \left( \, \frac{\big|\,E-\Phi\,\big|}{n} \, \right) 
\end{gather*}
\vspace*{-0.3in}
\begin{gather*}
\min_{u\in V} \min_{\substack{v\in V \\ \delta_{\,u,v}>0 } } \left\{ \delta_{\,u,v} \right\} 
=
\min_{u\in V} \min_{\substack{v\in V \\ \Phi\,e_v>c_v} } \left\{ (\Phi-\gamma) \left( 1+\frac{E}{\din(v)} \right) - \Phi \frac{\dout(v)}{\din(v)} \right\} 
=
\Omega \left( \frac{(\Phi-\gamma)\,E}{n} \right)
\end{gather*}
\vspace*{-0.3in}
\begin{gather*}
\min_{u\in V} \{c_u\}
=
\min_{u\in V} \left\{ \gamma\,\left( \din(u)+\frac{E}{n} \right) \right\}
=
\Omega \left( \frac{\gamma\,E}{n} \right)
\\
\zeta=\min \big\{ \, \min_{u\in V}\min_{v\in V} \{\delta_{\,u,v}\},\, \min_{u\in V} \{c_u\}\,  \big\}
=
\Omega \left( \min \left\{ \frac{\big|\,E-\Phi\,\big|}{n}, \, \frac{(\Phi-\gamma)\,E}{n}, \, \frac{\gamma\,E}{n} \right\} \right)
\\
\max_{v\in V} \sum_{u\in V} \delta_{\,v,u} 
\leq n\,
\max_{u\in V} \left\{ 
(\Phi-\gamma) \left( 1+\frac{E}{\din(u)} \right) - \Phi \frac{\dout(u)}{\din(u)}
\right\} 
=
\mathrm{O} \left( \,n\,(\Phi-\gamma)\,E \, \right)
\end{gather*}
and thus, 
$\max_{v\in V} \left\{ \sum_{u\in V} \frac{\delta_{\,v,u}}{\zeta} \right\} = \mathrm{O} \left( \mathrm{poly} \left(n , \frac{\Phi}{\gamma} , \frac{1}{\Phi-\gamma}, 
\frac{E}{\big| E - \Phi\big|} \right) \right)$,
giving the approximation bound.
\end{proof}

\section{Homogeneous Networks, $\pmb{\mbox{\nam}}$, any $T$, $\apx$-hardness} 

\begin{theorem}\label{apx1}
For any $T$, computing $\vi^\ast(G,T)$ is $\apx$-hard even if the banking network $G$ 
is a DAG with 
$\din(v)\leq 3$ and $\dout(v)\leq 2$ for every node $v$.
\end{theorem}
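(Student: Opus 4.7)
My plan is to prove APX-hardness by a gap-preserving (L-)reduction from the vertex cover problem on 3-regular graphs, which is APX-hard by Papadimitriou and Yannakakis. Given a 3-regular graph $G_{VC}=(V,E)$ with $|V|=n$ and $|E|=m=\frac{3n}{2}$, I will construct a banking network $\vec{G}=(\vec V,\vec F)$ that is a DAG with $\din(u)\le 3$ and $\dout(u)\le 2$ for every $u\in\vec V$, together with suitable global parameters $(\gamma,I,E,\Phi)$ (in the regime $\gamma\approx 0.23$, $E/I\approx 0.25$ hinted in the later discussion of Theorem~\ref{apx1} in the paper), so that $\vi^\ast(\vec G,T)$ and the minimum vertex cover of $G_{VC}$ agree up to a linear relation; this will transfer the vertex-cover inapproximability lower bound.

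The construction will use two constant-size gadgets. For each vertex $v\in V$ with three incident edges $e_1,e_2,e_3$ I introduce a vertex gadget containing a distinguished shockable ``trigger'' node $T_v$ and a constant number of intermediate ``splitter'' nodes; the splitters are arranged in a small fan-out tree so that $v$'s gadget can reach all three incident edge gadgets without violating $\dout\le 2$ (since the natural out-degree $3$ is forbidden), and they are duplicated into disjoint ``down'' and ``up'' copies so that down-propagation (from $T_v$ toward the edge gadgets) and up-propagation (from failing edge gadgets back toward $T_v$) can coexist without creating a directed cycle. For each $e\in E$ I introduce a small edge gadget that takes shocks from the down copies of the two endpoint vertex gadgets and feeds shocks into their up copies. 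The global parameters are then chosen so that: (a) shocking $T_v$ fails $v$'s entire gadget and transmits enough shock through the down layer to fail each of $v$'s three incident edge gadgets within one or two further steps; (b) simultaneous failure of all three incident edge gadgets transmits enough aggregate shock through the up layer to fail an un-shocked vertex gadget; and, crucially, (c) the failure of a single incident edge gadget is \emph{not} enough to fail an un-shocked vertex gadget, which blocks an otherwise ruinous chain reaction initiated by shocking just one edge-gadget node.

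Given (a)--(c), the correspondence is essentially syntactic. If $X\subseteq V$ is a vertex cover of $G_{VC}$, then shocking exactly $\{T_v:v\in X\}$ kills every edge gadget by (a) (each edge has a covered endpoint) and subsequently kills every non-cover vertex gadget by (b), so $\vi^\ast(\vec G,T)\le |X|$; conversely, any shock set $Y$ killing $\vec G$ can be converted by a simple exchange argument (replacing a shock at any splitter or edge-gadget node by a shock at an adjacent trigger $T_v$) into a trigger-only shock set $Y'$ with $|Y'|\le|Y|$, and $\{v:T_v\in Y'\}$ must form a vertex cover of $G_{VC}$ (otherwise, by (c), some edge gadget survives, contradicting death). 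Because the cascade has only a constant number of layers, the condition ``no new nodes fail for any $t>3$'' is automatic once the layer count is 3, and so the same construction yields APX-hardness uniformly for every $T\ge 2$.

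The main obstacle will be point (c) in combination with the strict acyclicity and bounded-degree requirements: enforcing the shock-failure threshold $(\Phi-\gamma)\bigl(\din(u)+\tfrac{E}{n}\bigr)>\Phi\,\dout(u)$ together with the per-creditor contribution $\min(|c_v(t)|,b_v)/\din(v,t)$ at every gadget node, so that one failing incident edge-gadget is insufficient but three are sufficient, while simultaneously respecting $\din\le 3$, $\dout\le 2$ and disjointness of the down and up channels, requires tight coordination between the gadget topology and the numerical parameter choice. The bulk of the technical work in the full proof will therefore consist of verifying these quantitative inequalities at each node type for a single parameter regime and checking that no alternative shock strategy beats the trigger-based one.
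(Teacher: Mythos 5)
Your high-level plan -- an L-reduction from vertex cover on $3$-regular graphs -- coincides with the paper's, and your intuition about the role of the numerical parameters (``one incident failure must not suffice; a direct shock must'') is also on the right track. However, the gadget design as sketched cannot be made to satisfy the DAG requirement, and the core reason is structural, not numerical.

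In this model, shock propagates \emph{against} the orientation of the edges: an edge $(u,v)$ means $u$ lent to $v$, so when $v$ fails, $u$ takes the hit. Consequently, if you want the failure of $T_v$ to reach an edge gadget $E_e$, you need a directed path \emph{from} $E_e$ \emph{to} $T_v$; and if you also want the failure of $E_e$ to reach back into $v$'s gadget, you need a directed path \emph{from} $v$'s gadget \emph{to} $E_e$. Concatenating these two paths yields a directed cycle passing through $E_e$ and through $v$'s gadget, no matter how many disjoint ``up'' and ``down'' splitter copies you insert: the round trip is still a cycle. Splitting the trigger into separate down-source and up-sink nodes $T_v$ and $T_v'$ removes the cycle but then leaves $T_v$ (and the down splitters) with no incoming shock when $v$ is not in the cover, so those nodes never fail and the network is not dead. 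This is the essential obstruction your plan does not resolve. Your secondary worry about out-degree $3$ is also misplaced for the same directional reason: fanning a shock out from $T_v$ to three edge gadgets means $T_v$ has \emph{in}-degree $3$, which the theorem explicitly allows ($\din\le 3$), so the splitter trees are not needed even if the direction issue were fixed.

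The paper sidesteps the cycle problem by never propagating shock from edge gadgets back to vertex gadgets. For each graph vertex $v_i$ it creates a ``middle'' node $u_i$ (with $\din(u_i)=3$, $\dout(u_i)=1$) and a ``super-source'' $u_i'$ with $\dout(u_i')=0$; by Proposition~\ref{obs1}(a) every $u_i'$ \emph{must} be shocked, which forces every $u_i$ to eventually fail for free. For each graph edge $\{v_i,v_j\}$ it creates a sink node $e_{i,j}$ with $\din=0$, $\dout=2$, lending to $u_i$ and $u_j$. The parameters ($\E=1$, $\gamma=0.23$, $\Phi=0.7$) are tuned so that the indirect shock $u_i'\to u_i$ is too weak to kill $e_{i,j}$, while a direct shock to $u_i$ is strong enough. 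The $n$ super-source shocks are an additive constant; the extra shocks to the $u_i$ nodes correspond exactly to a vertex cover, giving the L-reduction. This one-directional, three-layer DAG is what makes the acyclicity, degree bounds, and quantitative inequalities all compatible; if you rework your gadget to use a forced-shock source layer in place of the ``up'' channel, the rest of your plan (the exchange argument and the use of the fact that vertex covers in cubic graphs have size $\Theta(n)$) goes through along the same lines as the paper.
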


\begin{proof}
We reduce the 
$3$-{\sf MIN}-{\sf NODE}-{\sf COVER} 
problem to \nam. 
$3$-{\sf MIN}-{\sf NODE}-{\sf COVER} 
is defined as follows. 
We are given an undirected $3$-regular graph $G$, \IE, an undirected graph $G=(V,F)$ in which the degree of every node 
is exactly $3$ (and thus $|F|=1.5\,|V|$).
A valid solution (node cover) is a subset of nodes $V'\subseteq V$ such that every edge is incident to at least
one node in $V'$. The goal is then to find a node cover $V'\subseteq V$ such that $|V'|$ is {\em minimized}.
This problem is known to be $\apx$-hard~\cite{BK98}.

\begin{figure}[htbp]
\begin{pspicture}(-1,-3.5)(7,3)
\psset{xunit=1cm,yunit=1cm}
\pscircle[linewidth=1pt,origin={0,0},fillstyle=none,fillcolor=lightgray](0,0){0.15}
\rput(-0.3,0){$\pmb{v_1}$}
\pscircle[linewidth=1pt,origin={1,-1},fillstyle=none,fillcolor=lightgray](0,0){0.15}
\rput(1,-1.3){$\pmb{v_2}$}
\pscircle[linewidth=1pt,origin={2,-1},fillstyle=none,fillcolor=lightgray](0,0){0.15}
\rput(2,-1.3){$\pmb{v_3}$}
\pscircle[linewidth=1pt,origin={3,0},fillstyle=none,fillcolor=lightgray](0,0){0.15}
\rput(3.4,0){$\pmb{v_4}$}
\pscircle[linewidth=1pt,origin={2,1},fillstyle=none,fillcolor=lightgray](0,0){0.15}
\rput(2,1.3){$\pmb{v_5}$}
\pscircle[linewidth=1pt,origin={1,1},fillstyle=none,fillcolor=lightgray](0,0){0.15}
\rput(1,1.3){$\pmb{v_6}$}
\psline[linewidth=1pt,arrowsize=1.5pt 4,linecolor=black](0.1,-0.1)(0.9,-0.9)
\psline[linewidth=1pt,arrowsize=1.5pt 4,linecolor=black](1.1,-1)(1.9,-1)
\rput(1.5,-0.8){$\scriptscriptstyle\pmb{e_{2,3}}$}
\psline[linewidth=1pt,arrowsize=1.5pt 4,linecolor=black](2.1,-0.9)(2.9,-0.1)
\psline[linewidth=1pt,arrowsize=1.5pt 4,linecolor=black](2.1,0.9)(2.9,0.1)
\psline[linewidth=1pt,arrowsize=1.5pt 4,linecolor=black](1.1,1)(1.9,1)
\psline[linewidth=1pt,arrowsize=1.5pt 4,linecolor=black](0.1,0.1)(0.9,0.9)
\psline[linewidth=1pt,arrowsize=1.5pt 4,linecolor=black](0.1,0)(2.9,0)
\psline[linewidth=1pt,arrowsize=1.5pt 4,linecolor=black](1,-0.9)(1,0.9)
\psline[linewidth=1pt,arrowsize=1.5pt 4,linecolor=black](2,-0.9)(2,0.9)
\rput(1.5,-1.8){$\pmb{G=(V,F)}$}
\pscircle[linewidth=1pt,origin={6,-1},fillstyle=none,fillcolor=lightgray](0,0){0.15}
\rput(5.6,-1){$\pmb{u_1}$}
\pscircle[linewidth=1pt,origin={6,-3},fillstyle=none,fillcolor=lightgray](0,0){0.15}
\rput(6,-3.4){$\pmb{u_1'}$}
\psline[linewidth=1pt,arrowsize=1.5pt 4,linecolor=black]{->}(6,-1.15)(6,-2.85)
\pscircle[linewidth=1pt,origin={7,-1},fillstyle=none,fillcolor=lightgray](0,0){0.15}
\rput(6.6,-1){$\pmb{u_2}$}
\pscircle[linewidth=1pt,origin={7,-3},fillstyle=none,fillcolor=lightgray](0,0){0.15}
\rput(7,-3.4){$\pmb{u_2}'$}
\psline[linewidth=1pt,arrowsize=1.5pt 4,linecolor=black]{->}(7,-1.15)(7,-2.85)
\pscircle[linewidth=1pt,origin={8,-1},fillstyle=none,fillcolor=lightgray](0,0){0.15}
\pscircle[linewidth=1pt,origin={8,-3},fillstyle=none,fillcolor=lightgray](0,0){0.15}
\rput(7.6,-1){$\pmb{u_3}$}
\rput(8,-3.4){$\pmb{u_3'}$}
\psline[linewidth=1pt,arrowsize=1.5pt 4,linecolor=black]{->}(8,-1.15)(8,-2.85)
\pscircle[linewidth=1pt,origin={9,-1},fillstyle=none,fillcolor=lightgray](0,0){0.15}
\pscircle[linewidth=1pt,origin={9,-3},fillstyle=none,fillcolor=lightgray](0,0){0.15}
\rput(8.6,-1){$\pmb{u_4}$}
\rput(9,-3.4){$\pmb{u_4'}$}
\psline[linewidth=1pt,arrowsize=1.5pt 4,linecolor=black]{->}(9,-1.15)(9,-2.85)
\pscircle[linewidth=1pt,origin={10,-1},fillstyle=none,fillcolor=lightgray](0,0){0.15}
\pscircle[linewidth=1pt,origin={10,-3},fillstyle=none,fillcolor=lightgray](0,0){0.15}
\rput(9.6,-1){$\pmb{u_5}$}
\rput(10,-3.4){$\pmb{u_5'}$}
\psline[linewidth=1pt,arrowsize=1.5pt 4,linecolor=black]{->}(10,-1.15)(10,-2.85)
\pscircle[linewidth=1pt,origin={11,-1},fillstyle=none,fillcolor=lightgray](0,0){0.15}
\pscircle[linewidth=1pt,origin={11,-3},fillstyle=none,fillcolor=lightgray](0,0){0.15}
\rput(10.6,-1){$\pmb{u_6}$}
\rput(11,-3.4){$\pmb{u_6'}$}
\psline[linewidth=1pt,arrowsize=1.5pt 4,linecolor=black]{->}(11,-1.15)(11,-2.85)
\pscircle[linewidth=1pt,origin={5,1},fillstyle=none,fillcolor=lightgray](0,0){0.15}
\rput(5,1.3){$\pmb{e_{1,2}}$}
\psline[linewidth=1pt,arrowsize=1.5pt 4,linecolor=black]{->}(5,0.85)(5.9,-0.9)
\psline[linewidth=1pt,arrowsize=1.5pt 4,linecolor=black]{->}(5.05,0.9)(6.9,-0.9)
\pscircle[linewidth=1pt,origin={6,1},fillstyle=none,fillcolor=lightgray](0,0){0.15}
\rput(6,1.3){$\pmb{e_{1,4}}$}
\psline[linewidth=1pt,arrowsize=1.5pt 4,linecolor=black]{->}(6,0.85)(5.95,-0.9)
\psline[linewidth=1pt,arrowsize=1.5pt 4,linecolor=black]{->}(6.05,0.9)(8.9,-0.9)
\pscircle[linewidth=1pt,origin={7,1},fillstyle=none,fillcolor=lightgray](0,0){0.15}
\rput(7,1.3){$\pmb{e_{1,6}}$}
\psline[linewidth=1pt,arrowsize=1.5pt 4,linecolor=black]{->}(7,0.85)(6,-0.95)
\psline[linewidth=1pt,arrowsize=1.5pt 4,linecolor=black]{->}(7.05,0.9)(10.9,-0.9)
\pscircle[linewidth=1pt,origin={8,1},fillstyle=none,fillcolor=lightgray](0,0){0.15}
\rput(8,1.3){$\pmb{e_{2,3}}$}
\rput(8,1.7){$\scriptstyle{\{v_2,v_3\}}$}
\psline[linewidth=1pt,arrowsize=1.5pt 4,linecolor=black]{->}(8,0.85)(6.9,-0.9)
\psline[linewidth=1pt,arrowsize=1.5pt 4,linecolor=black]{->}(8.05,0.9)(7.9,-0.9)
\pscircle[linewidth=1pt,origin={9,1},fillstyle=none,fillcolor=lightgray](0,0){0.15}
\rput(9,1.3){$\pmb{e_{2,5}}$}
\psline[linewidth=1pt,arrowsize=1.5pt 4,linecolor=black]{->}(8.95,0.9)(6.95,-0.95)
\psline[linewidth=1pt,arrowsize=1.5pt 4,linecolor=black]{->}(9,0.95)(9.95,-0.95)
\pscircle[linewidth=1pt,origin={10,1},fillstyle=none,fillcolor=lightgray](0,0){0.15}
\rput(10,1.3){$\pmb{e_{3,4}}$}
\psline[linewidth=1pt,arrowsize=1.5pt 4,linecolor=black]{->}(9.95,0.9)(7.95,-0.95)
\psline[linewidth=1pt,arrowsize=1.5pt 4,linecolor=black]{->}(10,0.9)(8.95,-0.95)
\pscircle[linewidth=1pt,origin={11,1},fillstyle=none,fillcolor=lightgray](0,0){0.15}
\rput(11,1.3){$\pmb{e_{3,5}}$}
\psline[linewidth=1pt,arrowsize=1.5pt 4,linecolor=black]{->}(10.95,0.9)(8.1,-1)
\psline[linewidth=1pt,arrowsize=1.5pt 4,linecolor=black]{->}(11.05,0.9)(9.95,-0.95)
\pscircle[linewidth=1pt,origin={12,1},fillstyle=none,fillcolor=lightgray](0,0){0.15}
\rput(12,1.3){$\pmb{e_{4,5}}$}
\psline[linewidth=1pt,arrowsize=1.5pt 4,linecolor=black]{->}(11.95,0.9)(9.1,-1)
\psline[linewidth=1pt,arrowsize=1.5pt 4,linecolor=black]{->}(12.05,0.9)(10,-0.95)
\pscircle[linewidth=1pt,origin={13,1},fillstyle=none,fillcolor=lightgray](0,0){0.15}
\rput(13,1.3){$\pmb{e_{5,6}}$}
\psline[linewidth=1pt,arrowsize=1.5pt 4,linecolor=black]{->}(12.95,0.9)(10.1,-1)
\psline[linewidth=1pt,arrowsize=1.5pt 4,linecolor=black]{->}(13,0.95)(11.1,-1)
\rput(13.8,1.1){\bf\small sink}
\rput(13.8,0.75){\bf\small nodes}
\rput(13,-3.3){\bf\small super-source nodes}
\rput(4.5,-1.8){$\pmb{\overrightarrow{G}=(\overrightarrow{V},\overrightarrow{F})}$}
\end{pspicture}
\caption{\label{fig1}A $3$-regular graph $G=(V,F)$ and its corresponding banking network $\stackrel{\to}{G}=(\stackrel{\to}{V},\stackrel{\to}{F})$.}
\end{figure}
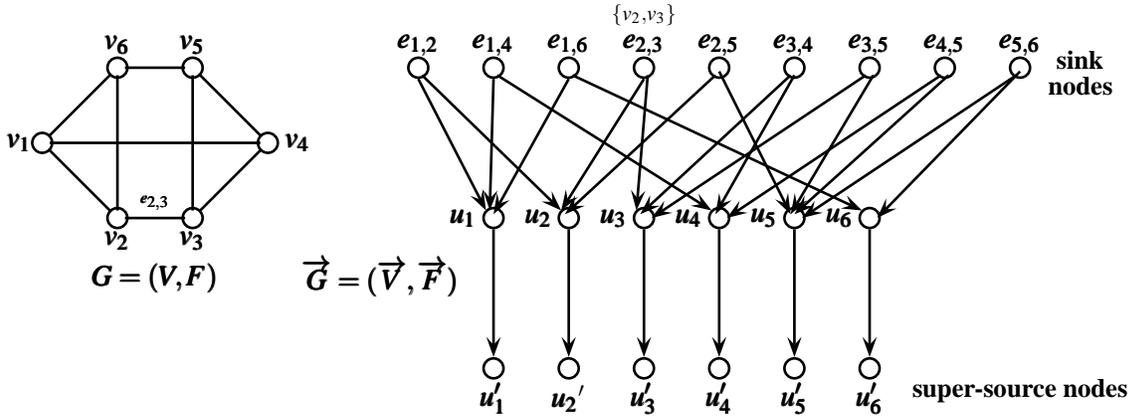

Given such an instance $G=(V,F)$ of 
$3$-{\sf MIN}-{\sf NODE}-{\sf COVER}, we construct an instance of 
the banking network $\overrightarrow{G}=(\overrightarrow{V},\overrightarrow{F})$ as follows:
\begin{itemize}
\item
For every node $v_i\in V$, we have two nodes $u_i,u_i'$ in $\overrightarrow{V}$, and a directed edge $(u_i,u_i')$.  
We refer to $u_i'$ as a ``super-source'' node.

\item
For every edge 
$\{v_i,v_j\}\in F$ with $i<j$, we have a (``sink'') node $e_{i,j}$ in $\overrightarrow{V}$ and two directed edges
$(e_{i,j},u_i)$ and $(e_{i,j},u_j)$ in $\overrightarrow{F}$.
For notational convenience, the node $e_{i,j}$ is also sometimes referred to as the node $e_{j,i}$.
\end{itemize}
Thus, $|\overrightarrow{V}|=3.5\,|V|$, and $|\overrightarrow{F}|=4\,|V|$.
See~\FI{fig1} for an illustration.
Observe that:
\begin{itemize}
\item
$\din\left(u_i\right)=3$ and $\dout\left(u_i\right)=1$ for all $i=1,2,\dots,|V|$. 

\item
$\din\left(u_i'\right)=1$ and $\dout\left(u_i'\right)=0$ for all $i=1,2,\dots,|V|$. 
Thus, by Proposition~\ref{obs1}(a), 
every node $u_i'$ must be shocked to make the network fail.

\pichskip{0.9cm}
\piccaption{\label{fig10}Case~{\bf (III)}: if node $u_2$ is shocked then the nodes $e_{1,2},e_{2,3}$ and $e_{2,5}$ must fail at $t=2$.}
\parpic(0pt,0pt)[r][c]
{
\begin{pspicture}(1,-2.6)(6.5,1.45)
\psset{xunit=0.5cm,yunit=0.7cm}
\hspace*{-0.6in}
\rput(5,1.4){$\pmb{e_{1,2}}$}
\rput(9.2,1.4){$\pmb{e_{2,5}}$}
\pscircle[linewidth=1pt,origin={6,-1},fillstyle=none,fillcolor=lightgray](0,0){0.15}
\pscircle[linewidth=1pt,origin={6,-3},fillstyle=none,fillcolor=lightgray](0,0){0.15}
\psline[linewidth=1pt,arrowsize=1.5pt 4,linecolor=black]{->}(6,-1.15)(6,-2.85)
\pscircle[linewidth=1pt,origin={7,-1},fillstyle=solid,fillcolor=black](0,0){0.15}
\rput(6.6,-1.3){$\pmb{u_2}$}
\pscircle[linewidth=1pt,origin={7,-3},fillstyle=none,fillcolor=lightgray](0,0){0.15}
\psline[linewidth=1pt,arrowsize=1.5pt 4,linecolor=black]{->}(7,-1.15)(7,-2.85)
\pscircle[linewidth=1pt,origin={8,-1},fillstyle=none,fillcolor=lightgray](0,0){0.15}
\pscircle[linewidth=1pt,origin={8,-3},fillstyle=none,fillcolor=lightgray](0,0){0.15}
\psline[linewidth=1pt,arrowsize=1.5pt 4,linecolor=black]{->}(8,-1.15)(8,-2.85)
\pscircle[linewidth=1pt,origin={9,-3},fillstyle=none,fillcolor=lightgray](0,0){0.15}
\psline[linewidth=1pt,arrowsize=1.5pt 4,linecolor=black]{->}(9,-1.15)(9,-2.85)
\pscircle[linewidth=1pt,origin={9,-1},fillstyle=none,fillcolor=lightgray](0,0){0.15}
\pscircle[linewidth=1pt,origin={10,-3},fillstyle=none,fillcolor=lightgray](0,0){0.15}
\psline[linewidth=1pt,arrowsize=1.5pt 4,linecolor=black]{->}(10,-1.15)(10,-2.85)
\pscircle[linewidth=1pt,origin={10,-1},fillstyle=none,fillcolor=lightgray](0,0){0.15}
\pscircle[linewidth=1pt,origin={11,-3},fillstyle=none,fillcolor=lightgray](0,0){0.15}
\psline[linewidth=1pt,arrowsize=1.5pt 4,linecolor=black]{->}(11,-1.15)(11,-2.85)
\pscircle[linewidth=1pt,origin={11,-1},fillstyle=none,fillcolor=lightgray](0,0){0.15}
\pscircle[linewidth=1pt,origin={5,1},fillstyle=solid,fillcolor=lightgray](0,0){0.15}
\psline[linewidth=1pt,arrowsize=1.5pt 4,linecolor=black]{->}(5,0.85)(5.9,-0.9)
\psline[linewidth=1pt,arrowsize=1.5pt 4,linecolor=black]{->}(5.05,0.9)(6.9,-0.9)
\pscircle[linewidth=1pt,origin={6,1},fillstyle=none,fillcolor=lightgray](0,0){0.15}
\psline[linewidth=1pt,arrowsize=1.5pt 4,linecolor=black]{->}(6,0.85)(5.95,-0.9)
\psline[linewidth=1pt,arrowsize=1.5pt 4,linecolor=black]{->}(6.05,0.9)(8.9,-0.9)
\pscircle[linewidth=1pt,origin={7,1},fillstyle=none,fillcolor=lightgray](0,0){0.15}
\psline[linewidth=1pt,arrowsize=1.5pt 4,linecolor=black]{->}(7,0.85)(6,-0.95)
\psline[linewidth=1pt,arrowsize=1.5pt 4,linecolor=black]{->}(7.05,0.9)(10.9,-0.9)
\pscircle[linewidth=1pt,origin={8,1},fillstyle=solid,fillcolor=lightgray](0,0){0.15}
\rput(8,1.4){$\pmb{e_{2,3}}$}
\psline[linewidth=1pt,arrowsize=1.5pt 4,linecolor=black]{->}(8,0.85)(6.9,-0.9)
\psline[linewidth=1pt,arrowsize=1.5pt 4,linecolor=black]{->}(8.05,0.9)(7.9,-0.9)
\pscircle[linewidth=1pt,origin={9,1},fillstyle=solid,fillcolor=lightgray](0,0){0.15}
\psline[linewidth=1pt,arrowsize=1.5pt 4,linecolor=black]{->}(8.95,0.9)(6.95,-0.95)
\psline[linewidth=1pt,arrowsize=1.5pt 4,linecolor=black]{->}(9,0.95)(9.95,-0.95)
\pscircle[linewidth=1pt,origin={10,1},fillstyle=none,fillcolor=lightgray](0,0){0.15}
\psline[linewidth=1pt,arrowsize=1.5pt 4,linecolor=black]{->}(9.95,0.9)(7.95,-0.95)
\psline[linewidth=1pt,arrowsize=1.5pt 4,linecolor=black]{->}(10,0.9)(8.95,-0.95)
\pscircle[linewidth=1pt,origin={11,1},fillstyle=none,fillcolor=lightgray](0,0){0.15}
\psline[linewidth=1pt,arrowsize=1.5pt 4,linecolor=black]{->}(10.95,0.9)(8.1,-1)
\psline[linewidth=1pt,arrowsize=1.5pt 4,linecolor=black]{->}(11.05,0.9)(9.95,-0.95)
\pscircle[linewidth=1pt,origin={12,1},fillstyle=none,fillcolor=lightgray](0,0){0.15}
\psline[linewidth=1pt,arrowsize=1.5pt 4,linecolor=black]{->}(11.95,0.9)(9.1,-1)
\psline[linewidth=1pt,arrowsize=1.5pt 4,linecolor=black]{->}(12.05,0.9)(10,-0.95)
\pscircle[linewidth=1pt,origin={13,1},fillstyle=none,fillcolor=lightgray](0,0){0.15}
\psline[linewidth=1pt,arrowsize=1.5pt 4,linecolor=black]{->}(12.95,0.9)(10.1,-1)
\psline[linewidth=1pt,arrowsize=1.5pt 4,linecolor=black]{->}(13,0.95)(11.1,-1)
\rput(9,2){$\pmb{t=1}$}
\pscircle[linewidth=0pt,origin={12.5,-2},fillstyle=solid,fillcolor=black](0,0){0.15}
\rput(13.8,-2){failed}
\pscircle[linewidth=1pt,origin={12.5,-2.75},fillstyle=solid,fillcolor=lightgray](0,0){0.15}
\rput(14.7,-2.75){not shocked}
\pscircle[linewidth=1pt,origin={12.5,-3.5},fillstyle=solid,fillcolor=white](0,0){0.15}
\rput(14.3,-3.5){arbitrary}
\hspace*{5.6cm}
\rput(5,1.4){$\pmb{e_{1,2}}$}
\rput(9.2,1.4){$\pmb{e_{2,5}}$}
\pscircle[linewidth=1pt,origin={6,-1},fillstyle=none,fillcolor=lightgray](0,0){0.15}
\pscircle[linewidth=1pt,origin={6,-3},fillstyle=none,fillcolor=lightgray](0,0){0.15}
\psline[linewidth=1pt,arrowsize=1.5pt 4,linecolor=black]{->}(6,-1.15)(6,-2.85)
\pscircle[linewidth=1pt,origin={7,-1},fillstyle=solid,fillcolor=black](0,0){0.15}
\rput(6.6,-1.3){$\pmb{u_2}$}
\pscircle[linewidth=1pt,origin={7,-3},fillstyle=none,fillcolor=lightgray](0,0){0.15}
\psline[linewidth=1pt,arrowsize=1.5pt 4,linecolor=black]{->}(7,-1.15)(7,-2.85)
\pscircle[linewidth=1pt,origin={8,-1},fillstyle=none,fillcolor=black](0,0){0.15}
\pscircle[linewidth=1pt,origin={8,-3},fillstyle=none,fillcolor=lightgray](0,0){0.15}
\psline[linewidth=1pt,arrowsize=1.5pt 4,linecolor=black]{->}(8,-1.15)(8,-2.85)
\pscircle[linewidth=1pt,origin={9,-3},fillstyle=none,fillcolor=lightgray](0,0){0.15}
\psline[linewidth=1pt,arrowsize=1.5pt 4,linecolor=black]{->}(9,-1.15)(9,-2.85)
\pscircle[linewidth=1pt,origin={9,-1},fillstyle=none,fillcolor=lightgray](0,0){0.15}
\pscircle[linewidth=1pt,origin={10,-3},fillstyle=none,fillcolor=lightgray](0,0){0.15}
\psline[linewidth=1pt,arrowsize=1.5pt 4,linecolor=black]{->}(10,-1.15)(10,-2.85)
\pscircle[linewidth=1pt,origin={10,-1},fillstyle=none,fillcolor=lightgray](0,0){0.15}
\pscircle[linewidth=1pt,origin={11,-3},fillstyle=none,fillcolor=lightgray](0,0){0.15}
\psline[linewidth=1pt,arrowsize=1.5pt 4,linecolor=black]{->}(11,-1.15)(11,-2.85)
\pscircle[linewidth=1pt,origin={11,-1},fillstyle=none,fillcolor=lightgray](0,0){0.15}
\pscircle[linewidth=0pt,origin={5,1},fillstyle=solid,fillcolor=black](0,0){0.15}
\psline[linewidth=1pt,arrowsize=1.5pt 4,linecolor=black]{->}(5,0.85)(5.9,-0.9)
\psline[linewidth=1pt,arrowsize=1.5pt 4,linecolor=black]{->}(5.05,0.9)(6.9,-0.9)
\pscircle[linewidth=1pt,origin={6,1},fillstyle=none,fillcolor=lightgray](0,0){0.15}
\psline[linewidth=1pt,arrowsize=1.5pt 4,linecolor=black]{->}(6,0.85)(5.95,-0.9)
\psline[linewidth=1pt,arrowsize=1.5pt 4,linecolor=black]{->}(6.05,0.9)(8.9,-0.9)
\pscircle[linewidth=1pt,origin={7,1},fillstyle=none,fillcolor=lightgray](0,0){0.15}
\psline[linewidth=1pt,arrowsize=1.5pt 4,linecolor=black]{->}(7,0.85)(6,-0.95)
\psline[linewidth=1pt,arrowsize=1.5pt 4,linecolor=black]{->}(7.05,0.9)(10.9,-0.9)
\pscircle[linewidth=1pt,origin={8,1},fillstyle=solid,fillcolor=black](0,0){0.15}
\rput(8,1.4){$\pmb{e_{2,3}}$}
\psline[linewidth=1pt,arrowsize=1.5pt 4,linecolor=black]{->}(8,0.85)(6.9,-0.9)
\psline[linewidth=1pt,arrowsize=1.5pt 4,linecolor=black]{->}(8.05,0.9)(7.9,-0.9)
\pscircle[linewidth=1pt,origin={9,1},fillstyle=solid,fillcolor=black](0,0){0.15}
\psline[linewidth=1pt,arrowsize=1.5pt 4,linecolor=black]{->}(8.95,0.9)(6.95,-0.95)
\psline[linewidth=1pt,arrowsize=1.5pt 4,linecolor=black]{->}(9,0.95)(9.95,-0.95)
\pscircle[linewidth=1pt,origin={10,1},fillstyle=none,fillcolor=lightgray](0,0){0.15}
\psline[linewidth=1pt,arrowsize=1.5pt 4,linecolor=black]{->}(9.95,0.9)(7.95,-0.95)
\psline[linewidth=1pt,arrowsize=1.5pt 4,linecolor=black]{->}(10,0.9)(8.95,-0.95)
\pscircle[linewidth=1pt,origin={11,1},fillstyle=none,fillcolor=lightgray](0,0){0.15}
\psline[linewidth=1pt,arrowsize=1.5pt 4,linecolor=black]{->}(10.95,0.9)(8.1,-1)
\psline[linewidth=1pt,arrowsize=1.5pt 4,linecolor=black]{->}(11.05,0.9)(9.95,-0.95)
\pscircle[linewidth=1pt,origin={12,1},fillstyle=none,fillcolor=lightgray](0,0){0.15}
\psline[linewidth=1pt,arrowsize=1.5pt 4,linecolor=black]{->}(11.95,0.9)(9.1,-1)
\psline[linewidth=1pt,arrowsize=1.5pt 4,linecolor=black]{->}(12.05,0.9)(10,-0.95)
\pscircle[linewidth=1pt,origin={13,1},fillstyle=none,fillcolor=lightgray](0,0){0.15}
\psline[linewidth=1pt,arrowsize=1.5pt 4,linecolor=black]{->}(12.95,0.9)(10.1,-1)
\psline[linewidth=1pt,arrowsize=1.5pt 4,linecolor=black]{->}(13,0.95)(11.1,-1)
\rput(9,2){$\pmb{t=2}$}
\end{pspicture}
}

\item
$\din(e_{i,j})=0$ and 
\\ 
$\dout(e_{i,j})=2$ for all $i$ and $j$.
Since $\din(e_{i,j})=0$, if a node $e_{i,j}$ is shocked, no part of the shock is propagated to any other node in the network.

\item
Since the longest path in $\overrightarrow{G}$ has $2$ edges, by Proposition \ref{obs1}(b) no new node fails at any $t>3$.
\end{itemize}
For notational convenience, let $n=|V|$, $\E=E/n$, 
and $e_{i,j_1},e_{i,j_2}$ and $e_{i,j_3}$ be the three edges $\{v_i,v_{j_1}\}$, $\{v_i,v_{j_2}\}$ and $\{v_i,v_{j_3}\}$ in $G$ that are incident on the node $v_i$.
We will select the remaining network parameters, namely $\gamma$, $\Phi$ and $\E$, based on the following desirable properties.

\vspace*{0.1in}
\noindent
{\bf (I)}
If a node $u_i'$ is shocked at $t=1$, it fails: 
\begin{gather}
\Phi\,\left( \din(u_i')-\dout(u_i')+\E\right) > \gamma\,\left( \din(u_i')+\E \right)
\,\,\,\equiv\,\,\,
\Phi\,\left(1+\E\right) > \gamma\,\left(1+\E\right) 
\,\,\,\equiv\,\,\,
\Phi > \gamma
\label{eq1} 
\end{gather}
{\bf (II)}
If a node $e_{i,j}$ is shocked, it does not fail:
\begin{gather}
\din\left(e_{i,j}\right)-\dout\left(e_{i,j}\right)+\E<0
\,\,\,\equiv\,\,\,
\E < 2
\label{eq1point5} 
\end{gather}

\vspace*{0.1in}
\noindent
{\bf (III)}
If a node $u_i$ is shocked at $t=1$, then $u_i$ fails at $t=1$, and 
the nodes $e_{i,j_1},e_{i,j_2}$ and $e_{i,j_3}$ fail at time $t=2$ if they were not shocked (see \FI{fig10} for an illustration):
\begin{gather*}
\frac{\min\big\{\,\Phi\,\left( \din(u_i)-\dout(u_i)+\E\right) - \gamma\,\left( \din(u_i)+\E \right),\,\din(u_i)\,\big\}}{\din(u_i)} 
> 
\gamma\,\left( \din(e_{i,j_1})+\E \right) \\
\,\,\,\equiv\,\,\,
\frac{\min\big\{\,\Phi\,(2+\E) - \gamma\,(3+\E),\,3\,\big\}}{3} > \gamma\,\E
\end{gather*}
The above inequality is satisfied provided:
\begin{gather}
\Phi\,(2+\E) > \gamma\,(3+4\E)
\label{eq2-1} 
\\
1 > \gamma\,\E
\,\,\,\equiv\,\,\,
\gamma < \frac{1}{\E} 
\label{eq2-2} 
\end{gather}
{\bf (IV)}
Consider a sink node $e_{i,j}$. Then, we require that 
if one or both of the super-source node $u_i'$ and $u_j'$ are shocked at $t=1$ but the none of the nodes $u_i$, $u_j$ and $e_{i,j}$ were shocked, then we require that 
one or both of the corresponding nodes $u_i$ and $u_j$ fail at $t=2$, but the node $e_{i,j}$ {\em never} fails. Pictorially, we want a situation as depicted in~\FI{fig2}. This is satisfied 
provided the following inequalities hold:
\begin{description}
\item[(IV-1)]
$u_i$ fails at $t=2$ if $u_i'$ was shocked (the case of $u_j$ and $u_j'$ is similar):
\begin{gather*}
\frac{\min\big\{\,\Phi\,\left( \din(u_i')-\dout(u_i')+\E\right) - \gamma\,\left( \din(u_i')+\E\,\right),\,\din(u_i')\,\big\}}{\din(u_i')} > 
\gamma\,\left( \din(u_i)+\E \right)  \\
\,\,\,\equiv\,\,\,
\frac{\min\big\{\,(\Phi-\gamma)(1+\E),\,1\,\big\}}{1} > \gamma\,(3+\E)
\end{gather*}
The above inequality is satisfied provided:
\begin{gather}
(\Phi-\gamma)(1+\E) > \gamma\,(3+\E)
\,\,\,\equiv\,\,\,
\Phi(1+\E) > \gamma\,(4+2\,\E)
\label{eq3} 
\\
1 > \gamma\,(3+\E) 
\,\,\,\equiv\,\,\,
\gamma < \frac{1}{3+\E} 
\label{eq4} 
\end{gather}

\item[(IV-2)]
$e_{i,j}$ never fails even if both $u_i$ and $u_j$ have failed:
\begin{gather*}
\frac{\min\big\{\,(\Phi-\gamma)(1+\E),\,1\,\big\}}{1}-\gamma\,(3+\E)
\leq
\frac{\gamma\,\E}{2} 
\,\,\,\equiv\,\,\,
\min\big\{\,(\Phi-\gamma)(1+\E),\,1\,\big\}
\leq
3\,\gamma\,\left(1+\frac{\E}{2}\right) 
\end{gather*}
The above inequality is satisfied provided:
\begin{gather}
(\Phi-\gamma)(1+\E)
\leq
3\,\gamma\,\left(1+\frac{\E}{2}\right) 
\,\,\,\equiv\,\,\,
\Phi\,(1+\E)
\leq
\gamma\,\left(4+\frac{5\E}{2}\right) 
\label{eq5} 
\\
1
\leq
3\,\gamma\,\left(1+\frac{\E}{2}\right) 
\,\,\,\equiv\,\,\,
\gamma
\geq
\frac{2}{6+3\,E}
\label{eq6} 
\end{gather}
\end{description}

\pichskip{0.9cm}
\piccaption{\label{fig2}Case~{\bf (IV)}: to make $e_{2,3}$ fail, at least one of $u_2$ or $u_3$ must be shocked.}
\parpic[r][c]
{
\scalebox{0.7}{
\begin{pspicture}(3,-3.5)(7,2.2)
\psset{xunit=0.6cm,yunit=0.8cm}
\pscircle[linewidth=1pt,origin={6,-1},fillstyle=none,fillcolor=lightgray](0,0){0.15}
\pscircle[linewidth=1pt,origin={6,-3},fillstyle=none,fillcolor=lightgray](0,0){0.15}
\psline[linewidth=1pt,arrowsize=1.5pt 4,linecolor=black]{->}(6,-1.15)(6,-2.85)
\pscircle[linewidth=1pt,origin={7,-1},fillstyle=solid,fillcolor=lightgray](0,0){0.15}
\pscircle[linewidth=1pt,origin={7,-3},fillstyle=solid,fillcolor=black](0,0){0.15}
\psline[linewidth=1pt,arrowsize=1.5pt 4,linecolor=black]{->}(7,-1.15)(7,-2.85)
\pscircle[linewidth=1pt,origin={8,-1},fillstyle=solid,fillcolor=lightgray](0,0){0.15}
\pscircle[linewidth=1pt,origin={8,-3},fillstyle=solid,fillcolor=black](0,0){0.15}
\psline[linewidth=1pt,arrowsize=1.5pt 4,linecolor=black]{->}(8,-1.15)(8,-2.85)
\pscircle[linewidth=1pt,origin={9,-3},fillstyle=none,fillcolor=lightgray](0,0){0.15}
\psline[linewidth=1pt,arrowsize=1.5pt 4,linecolor=black]{->}(9,-1.15)(9,-2.85)
\pscircle[linewidth=1pt,origin={9,-1},fillstyle=none,fillcolor=lightgray](0,0){0.15}
\pscircle[linewidth=1pt,origin={10,-3},fillstyle=none,fillcolor=lightgray](0,0){0.15}
\psline[linewidth=1pt,arrowsize=1.5pt 4,linecolor=black]{->}(10,-1.15)(10,-2.85)
\pscircle[linewidth=1pt,origin={10,-1},fillstyle=none,fillcolor=lightgray](0,0){0.15}
\pscircle[linewidth=1pt,origin={11,-3},fillstyle=none,fillcolor=lightgray](0,0){0.15}
\psline[linewidth=1pt,arrowsize=1.5pt 4,linecolor=black]{->}(11,-1.15)(11,-2.85)
\pscircle[linewidth=1pt,origin={11,-1},fillstyle=none,fillcolor=lightgray](0,0){0.15}
\pscircle[linewidth=1pt,origin={5,1},fillstyle=none,fillcolor=lightgray](0,0){0.15}
\psline[linewidth=1pt,arrowsize=1.5pt 4,linecolor=black]{->}(5,0.85)(5.9,-0.9)
\psline[linewidth=1pt,arrowsize=1.5pt 4,linecolor=black]{->}(5.05,0.9)(6.9,-0.9)
\pscircle[linewidth=1pt,origin={6,1},fillstyle=none,fillcolor=lightgray](0,0){0.15}
\psline[linewidth=1pt,arrowsize=1.5pt 4,linecolor=black]{->}(6,0.85)(5.95,-0.9)
\psline[linewidth=1pt,arrowsize=1.5pt 4,linecolor=black]{->}(6.05,0.9)(8.9,-0.9)
\pscircle[linewidth=1pt,origin={7,1},fillstyle=none,fillcolor=lightgray](0,0){0.15}
\psline[linewidth=1pt,arrowsize=1.5pt 4,linecolor=black]{->}(7,0.85)(6,-0.95)
\psline[linewidth=1pt,arrowsize=1.5pt 4,linecolor=black]{->}(7.05,0.9)(10.9,-0.9)
\pscircle[linewidth=1pt,origin={8,1},fillstyle=solid,fillcolor=lightgray](0,0){0.15}
\rput(8,1.4){$\pmb{e_{2,3}}$}
\psline[linewidth=1pt,arrowsize=1.5pt 4,linecolor=black]{->}(8,0.85)(6.9,-0.9)
\psline[linewidth=1pt,arrowsize=1.5pt 4,linecolor=black]{->}(8.05,0.9)(7.9,-0.9)
\pscircle[linewidth=1pt,origin={9,1},fillstyle=none,fillcolor=lightgray](0,0){0.15}
\psline[linewidth=1pt,arrowsize=1.5pt 4,linecolor=black]{->}(8.95,0.9)(6.95,-0.95)
\psline[linewidth=1pt,arrowsize=1.5pt 4,linecolor=black]{->}(9,0.95)(9.95,-0.95)
\pscircle[linewidth=1pt,origin={10,1},fillstyle=none,fillcolor=lightgray](0,0){0.15}
\psline[linewidth=1pt,arrowsize=1.5pt 4,linecolor=black]{->}(9.95,0.9)(7.95,-0.95)
\psline[linewidth=1pt,arrowsize=1.5pt 4,linecolor=black]{->}(10,0.9)(8.95,-0.95)
\pscircle[linewidth=1pt,origin={11,1},fillstyle=none,fillcolor=lightgray](0,0){0.15}
\psline[linewidth=1pt,arrowsize=1.5pt 4,linecolor=black]{->}(10.95,0.9)(8.1,-1)
\psline[linewidth=1pt,arrowsize=1.5pt 4,linecolor=black]{->}(11.05,0.9)(9.95,-0.95)
\pscircle[linewidth=1pt,origin={12,1},fillstyle=none,fillcolor=lightgray](0,0){0.15}
\psline[linewidth=1pt,arrowsize=1.5pt 4,linecolor=black]{->}(11.95,0.9)(9.1,-1)
\psline[linewidth=1pt,arrowsize=1.5pt 4,linecolor=black]{->}(12.05,0.9)(10,-0.95)
\pscircle[linewidth=1pt,origin={13,1},fillstyle=none,fillcolor=lightgray](0,0){0.15}
\psline[linewidth=1pt,arrowsize=1.5pt 4,linecolor=black]{->}(12.95,0.9)(10.1,-1)
\psline[linewidth=1pt,arrowsize=1.5pt 4,linecolor=black]{->}(13,0.95)(11.1,-1)
\rput(9,2){$\pmb{t=1}$}
\pscircle[linewidth=0pt,origin={12.1,-2},fillstyle=solid,fillcolor=black](0,0){0.15}
\rput(13.4,-2){failed}
\pscircle[linewidth=1pt,origin={12.1,-2.75},fillstyle=solid,fillcolor=lightgray](0,0){0.15}
\rput(14.3,-2.75){not shocked}
\pscircle[linewidth=1pt,origin={12.1,-3.5},fillstyle=solid,fillcolor=white](0,0){0.15}
\rput(13.9,-3.5){arbitrary}
\pscircle[linewidth=1pt,origin={12.1,-4.25},fillstyle=crosshatch](0,0){0.15}
\rput(14.1,-4.25){never fails}
\rput(6.7,-1.3){$\pmb{u_2}$}
\rput(7.7,-1.3){$\pmb{u_3}$}
\rput(8,1.4){$\pmb{e_{2,3}}$}
\hspace*{6.5cm}
\pscircle[linewidth=1pt,origin={6,-1},fillstyle=none,fillcolor=lightgray](0,0){0.15}
\pscircle[linewidth=1pt,origin={6,-3},fillstyle=none,fillcolor=lightgray](0,0){0.15}
\psline[linewidth=1pt,arrowsize=1.5pt 4,linecolor=black]{->}(6,-1.15)(6,-2.85)
\pscircle[linewidth=1pt,origin={7,-1},fillstyle=solid,fillcolor=black](0,0){0.15}
\pscircle[linewidth=1pt,origin={7,-3},fillstyle=solid,fillcolor=black](0,0){0.15}
\psline[linewidth=1pt,arrowsize=1.5pt 4,linecolor=black]{->}(7,-1.15)(7,-2.85)
\pscircle[linewidth=1pt,origin={8,-1},fillstyle=solid,fillcolor=black](0,0){0.15}
\pscircle[linewidth=1pt,origin={8,-3},fillstyle=solid,fillcolor=black](0,0){0.15}
\psline[linewidth=1pt,arrowsize=1.5pt 4,linecolor=black]{->}(8,-1.15)(8,-2.85)
\pscircle[linewidth=1pt,origin={9,-3},fillstyle=none,fillcolor=lightgray](0,0){0.15}
\psline[linewidth=1pt,arrowsize=1.5pt 4,linecolor=black]{->}(9,-1.15)(9,-2.85)
\pscircle[linewidth=1pt,origin={9,-1},fillstyle=none,fillcolor=lightgray](0,0){0.15}
\pscircle[linewidth=1pt,origin={10,-3},fillstyle=none,fillcolor=lightgray](0,0){0.15}
\psline[linewidth=1pt,arrowsize=1.5pt 4,linecolor=black]{->}(10,-1.15)(10,-2.85)
\pscircle[linewidth=1pt,origin={10,-1},fillstyle=none,fillcolor=lightgray](0,0){0.15}
\pscircle[linewidth=1pt,origin={11,-3},fillstyle=none,fillcolor=lightgray](0,0){0.15}
\psline[linewidth=1pt,arrowsize=1.5pt 4,linecolor=black]{->}(11,-1.15)(11,-2.85)
\pscircle[linewidth=1pt,origin={11,-1},fillstyle=none,fillcolor=lightgray](0,0){0.15}
\pscircle[linewidth=1pt,origin={5,1},fillstyle=none,fillcolor=lightgray](0,0){0.15}
\psline[linewidth=1pt,arrowsize=1.5pt 4,linecolor=black]{->}(5,0.85)(5.9,-0.9)
\psline[linewidth=1pt,arrowsize=1.5pt 4,linecolor=black]{->}(5.05,0.9)(6.9,-0.9)
\pscircle[linewidth=1pt,origin={6,1},fillstyle=none,fillcolor=lightgray](0,0){0.15}
\psline[linewidth=1pt,arrowsize=1.5pt 4,linecolor=black]{->}(6,0.85)(5.95,-0.9)
\psline[linewidth=1pt,arrowsize=1.5pt 4,linecolor=black]{->}(6.05,0.9)(8.9,-0.9)
\pscircle[linewidth=1pt,origin={7,1},fillstyle=none,fillcolor=lightgray](0,0){0.15}
\psline[linewidth=1pt,arrowsize=1.5pt 4,linecolor=black]{->}(7,0.85)(6,-0.95)
\psline[linewidth=1pt,arrowsize=1.5pt 4,linecolor=black]{->}(7.05,0.9)(10.9,-0.9)
\pscircle[linewidth=1pt,origin={8,1},fillstyle=crosshatch](0,0){0.15}
\rput(8,1.4){$\pmb{e_{2,3}}$}
\psline[linewidth=1pt,arrowsize=1.5pt 4,linecolor=black]{->}(8,0.85)(6.9,-0.9)
\psline[linewidth=1pt,arrowsize=1.5pt 4,linecolor=black]{->}(8.05,0.9)(7.9,-0.9)
\pscircle[linewidth=1pt,origin={9,1},fillstyle=none,fillcolor=lightgray](0,0){0.15}
\psline[linewidth=1pt,arrowsize=1.5pt 4,linecolor=black]{->}(8.95,0.9)(6.95,-0.95)
\psline[linewidth=1pt,arrowsize=1.5pt 4,linecolor=black]{->}(9,0.95)(9.95,-0.95)
\pscircle[linewidth=1pt,origin={10,1},fillstyle=none,fillcolor=lightgray](0,0){0.15}
\psline[linewidth=1pt,arrowsize=1.5pt 4,linecolor=black]{->}(9.95,0.9)(7.95,-0.95)
\psline[linewidth=1pt,arrowsize=1.5pt 4,linecolor=black]{->}(10,0.9)(8.95,-0.95)
\pscircle[linewidth=1pt,origin={11,1},fillstyle=none,fillcolor=lightgray](0,0){0.15}
\psline[linewidth=1pt,arrowsize=1.5pt 4,linecolor=black]{->}(10.95,0.9)(8.1,-1)
\psline[linewidth=1pt,arrowsize=1.5pt 4,linecolor=black]{->}(11.05,0.9)(9.95,-0.95)
\pscircle[linewidth=1pt,origin={12,1},fillstyle=none,fillcolor=lightgray](0,0){0.15}
\psline[linewidth=1pt,arrowsize=1.5pt 4,linecolor=black]{->}(11.95,0.9)(9.1,-1)
\psline[linewidth=1pt,arrowsize=1.5pt 4,linecolor=black]{->}(12.05,0.9)(10,-0.95)
\pscircle[linewidth=1pt,origin={13,1},fillstyle=none,fillcolor=lightgray](0,0){0.15}
\psline[linewidth=1pt,arrowsize=1.5pt 4,linecolor=black]{->}(12.95,0.9)(10.1,-1)
\psline[linewidth=1pt,arrowsize=1.5pt 4,linecolor=black]{->}(13,0.95)(11.1,-1)
\rput(9,2){$\pmb{t=2}$}
\rput(6.7,-1.3){$\pmb{u_2}$}
\rput(7.7,-1.3){$\pmb{u_3}$}
\rput(8,1.4){$\pmb{e_{2,3}}$}
\hspace*{6cm}
\pscircle[linewidth=1pt,origin={6,-1},fillstyle=none,fillcolor=lightgray](0,0){0.15}
\pscircle[linewidth=1pt,origin={6,-3},fillstyle=none,fillcolor=lightgray](0,0){0.15}
\psline[linewidth=1pt,arrowsize=1.5pt 4,linecolor=black]{->}(6,-1.15)(6,-2.85)
\pscircle[linewidth=1pt,origin={7,-1},fillstyle=solid,fillcolor=black](0,0){0.15}
\pscircle[linewidth=1pt,origin={7,-3},fillstyle=solid,fillcolor=black](0,0){0.15}
\psline[linewidth=1pt,arrowsize=1.5pt 4,linecolor=black]{->}(7,-1.15)(7,-2.85)
\pscircle[linewidth=1pt,origin={8,-1},fillstyle=solid,fillcolor=black](0,0){0.15}
\pscircle[linewidth=1pt,origin={8,-3},fillstyle=solid,fillcolor=black](0,0){0.15}
\psline[linewidth=1pt,arrowsize=1.5pt 4,linecolor=black]{->}(8,-1.15)(8,-2.85)
\pscircle[linewidth=1pt,origin={9,-3},fillstyle=none,fillcolor=lightgray](0,0){0.15}
\psline[linewidth=1pt,arrowsize=1.5pt 4,linecolor=black]{->}(9,-1.15)(9,-2.85)
\pscircle[linewidth=1pt,origin={9,-1},fillstyle=none,fillcolor=lightgray](0,0){0.15}
\pscircle[linewidth=1pt,origin={10,-3},fillstyle=none,fillcolor=lightgray](0,0){0.15}
\psline[linewidth=1pt,arrowsize=1.5pt 4,linecolor=black]{->}(10,-1.15)(10,-2.85)
\pscircle[linewidth=1pt,origin={10,-1},fillstyle=none,fillcolor=lightgray](0,0){0.15}
\pscircle[linewidth=1pt,origin={11,-3},fillstyle=none,fillcolor=lightgray](0,0){0.15}
\psline[linewidth=1pt,arrowsize=1.5pt 4,linecolor=black]{->}(11,-1.15)(11,-2.85)
\pscircle[linewidth=1pt,origin={11,-1},fillstyle=none,fillcolor=lightgray](0,0){0.15}
\pscircle[linewidth=1pt,origin={5,1},fillstyle=none,fillcolor=lightgray](0,0){0.15}
\psline[linewidth=1pt,arrowsize=1.5pt 4,linecolor=black]{->}(5,0.85)(5.9,-0.9)
\psline[linewidth=1pt,arrowsize=1.5pt 4,linecolor=black]{->}(5.05,0.9)(6.9,-0.9)
\pscircle[linewidth=1pt,origin={6,1},fillstyle=none,fillcolor=lightgray](0,0){0.15}
\psline[linewidth=1pt,arrowsize=1.5pt 4,linecolor=black]{->}(6,0.85)(5.95,-0.9)
\psline[linewidth=1pt,arrowsize=1.5pt 4,linecolor=black]{->}(6.05,0.9)(8.9,-0.9)
\pscircle[linewidth=1pt,origin={7,1},fillstyle=none,fillcolor=lightgray](0,0){0.15}
\psline[linewidth=1pt,arrowsize=1.5pt 4,linecolor=black]{->}(7,0.85)(6,-0.95)
\psline[linewidth=1pt,arrowsize=1.5pt 4,linecolor=black]{->}(7.05,0.9)(10.9,-0.9)
\pscircle[linewidth=1pt,origin={8,1},fillstyle=crosshatch](0,0){0.15}
\rput(8,1.4){$\pmb{e_{2,3}}$}
\psline[linewidth=1pt,arrowsize=1.5pt 4,linecolor=black]{->}(8,0.85)(6.9,-0.9)
\psline[linewidth=1pt,arrowsize=1.5pt 4,linecolor=black]{->}(8.05,0.9)(7.9,-0.9)
\pscircle[linewidth=1pt,origin={9,1},fillstyle=none,fillcolor=lightgray](0,0){0.15}
\psline[linewidth=1pt,arrowsize=1.5pt 4,linecolor=black]{->}(8.95,0.9)(6.95,-0.95)
\psline[linewidth=1pt,arrowsize=1.5pt 4,linecolor=black]{->}(9,0.95)(9.95,-0.95)
\pscircle[linewidth=1pt,origin={10,1},fillstyle=none,fillcolor=lightgray](0,0){0.15}
\psline[linewidth=1pt,arrowsize=1.5pt 4,linecolor=black]{->}(9.95,0.9)(7.95,-0.95)
\psline[linewidth=1pt,arrowsize=1.5pt 4,linecolor=black]{->}(10,0.9)(8.95,-0.95)
\pscircle[linewidth=1pt,origin={11,1},fillstyle=none,fillcolor=lightgray](0,0){0.15}
\psline[linewidth=1pt,arrowsize=1.5pt 4,linecolor=black]{->}(10.95,0.9)(8.1,-1)
\psline[linewidth=1pt,arrowsize=1.5pt 4,linecolor=black]{->}(11.05,0.9)(9.95,-0.95)
\pscircle[linewidth=1pt,origin={12,1},fillstyle=none,fillcolor=lightgray](0,0){0.15}
\psline[linewidth=1pt,arrowsize=1.5pt 4,linecolor=black]{->}(11.95,0.9)(9.1,-1)
\psline[linewidth=1pt,arrowsize=1.5pt 4,linecolor=black]{->}(12.05,0.9)(10,-0.95)
\pscircle[linewidth=1pt,origin={13,1},fillstyle=none,fillcolor=lightgray](0,0){0.15}
\psline[linewidth=1pt,arrowsize=1.5pt 4,linecolor=black]{->}(12.95,0.9)(10.1,-1)
\psline[linewidth=1pt,arrowsize=1.5pt 4,linecolor=black]{->}(13,0.95)(11.1,-1)
\rput(9,2){$\pmb{T>2}$}
\rput(6.7,-1.3){$\pmb{u_2}$}
\rput(7.7,-1.3){$\pmb{u_3}$}
\rput(8,1.4){$\pmb{e_{2,3}}$}
\end{pspicture}
} 
}

\noindent
There are obviously many choices of parameters $\gamma$, $\Phi$ and $\E$ that satisfy 
Equations~\eqref{eq1}--\eqref{eq6}; here we exhibit just one. 
Let $\E=1$ which satisfied Equation~\eqref{eq1point5}. Choosing $\gamma=0.23$ satisfies Equations~\eqref{eq2-2}, \eqref{eq4} and \eqref{eq6}.
Letting $\Phi=0.7$ satisfies Equations~\eqref{eq1}, \eqref{eq2-1}, \eqref{eq3} and \eqref{eq5}.

Suppose that $V'\subset V$ is a solution of $3$-{\sf MIN}-{\sf NODE}-{\sf COVER}. 
Then, we shock all the super-nodes, and the nodes in $V'$. By {\bf (I)} and {\bf (III)} all the super-nodes and the nodes in $\left(\cup_{v_i\in V\setminus V'} \{v_i\}\right)$ 
fails at $t=1$, and by {\bf (III)} the nodes in $\cup_{\substack{\{v_i,v_j\}\in E \\ i < j}} \{e_{i,j}\}$ fails $t=2$.
Thus, we obtain a solution of $\overrightarrow{G}$ by shocking $|V'|+n$ nodes.

Conversely, consider a solution of the \nam\ problem on $\overrightarrow{G}$.
Remember that all the super-nodes must be shocked, which ensures 
that we need to shock $n+a$ nodes for some integer $a\geq 0$, and 
that any node $v_i$ that is not shocked will fail at $t=2$. By {\bf (II)} it is of no use to shock the sink nodes.
Thus, the shocked nodes consist of all super-nodes and a subset $V'$ of cardinality $a$ of the nodes $u_1,u_2,\dots,u_n$.
By {\bf (IV)} for every node $e_{i,j}$ at least one of the nodes $u_i$ or $u_j$ must be in $U$.
Thus, the set of nodes $\{v_i\,|\,u_i\in U\}$ form a node cover of $G$ of size $a$.

That the reduction is an L-reduction follows from the observation that any locally improvable solution of 
$3$-{\sf MIN}-{\sf NODE}-{\sf COVER} has between $n/3$ and $n$ nodes.
\end{proof}

\section{Restricted Homogeneous Networks, $\pmb{\mbox{\nam}}$, Any $T$, Exact Solution} 

The $\apx$-hardness result of Theorem~\ref{apx1} has constant values for both $\Phi$ and $\gamma$, 
and requires $\dout(v)=2$ for some nodes $v$. We show that if $\dout(v)\leq 1$ for every node $v$ then 
under mild technical assumptions $\vi^\ast(G,T)$ can be computed in polynomial time for any $T$ and, in addition, if $\din(v)$ is bounded by a constant 
for every node $v$ then the network is highly stable (\IE, $\vi^\ast(G,T)$ is large).
Recall that an in-arborescence is a directed rooted tree where all edges are oriented towards the root.

\begin{theorem}\label{poly1}
If the banking network $G$ is a rooted in-arborescence then $\vi^\ast(G,T)>\dfrac{1}{1+\mathrm{deg}_{\mathrm{in}}^{\max}\left( \frac{\Phi}{\gamma}-1 \right)}$,
where $\mathrm{deg}_{\mathrm{in}}^{\max}=\max_{\,v\in V}\big\{\din(v)\big\}$.
Moreover, under the assumption that every node of $G$ can be individually failed by shocking, 
$\vi^\ast(G,T)$ can be computed exactly in $O\left(n^2\right)$ time.
\end{theorem}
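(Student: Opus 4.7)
The plan is to prove the lower bound by a global shock-conservation argument on the in-arborescence and to compute $\vi^\ast(G,T)$ exactly by a bottom-up tree DP. Since the root $r$ of the arborescence is the unique node with $\dout(r)=0$, Proposition~\ref{obs1}(a) forces $r\in\vs$, which pins down the aggregate value of $\iota_v$ on $\vs$. For the lower bound, I would argue that whenever the whole network fails, the total equity destroyed cannot exceed the total shock injected, so that
\[
\gamma\sum_{v\in V}a_v \;=\; \sum_{v\in V}c_v \;\le\; \sum_{v\in\vs}\Phi e_v \;=\; \Phi\sum_{v\in\vs}(a_v-\iota_v).
\]
Using the in-arborescence identities $\sum_v b_v=n-1$ and $\iota_v\in\{0,1\}$ (with $\iota_v=0$ only at $r$), together with the uniform upper bound $a_v=b_v+E/n\le D+E/n$ on the shocked nodes (where $D=\mathrm{deg}_{\mathrm{in}}^{\max}$), an algebraic rearrangement extracts the factor $1+D(\Phi/\gamma-1)$ and yields $|\vs|/n > 1/(1+D(\Phi/\gamma-1))$. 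The strict inequality is a consequence of the fact that each non-root shocked node necessarily loses a unit of $\Phi\iota_v$ in the passage from $\Phi e_v$ to $\Phi a_v$.

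For the exact algorithm, I would process the in-arborescence bottom-up and maintain, for each subtree $T_v$, a function $f_v(\sigma)$ giving the minimum number of nodes in $T_v$ that must be shocked to fail the whole subtree, given that shock of magnitude $\sigma\in[0,1]$ arrives at $v$ from its parent. Under the assumption that every node can be individually failed by shocking (i.e.\ $\Phi e_v\ge c_v$), the leaf base case collapses to $f_v(\sigma)=0$ if $\sigma\ge c_v$ and $1$ otherwise. For an internal node $v$ with children $w_1,\dots,w_k$, the recursion takes the pointwise minimum
\[
f_v(\sigma) \;=\; \min\!\Bigl\{\,1 + \sum_{i=1}^{k} f_{w_i}\!\bigl(\tfrac{\min\{\sigma+\Phi e_v-c_v,\,b_v\}}{k}\bigr)\,,\ \sum_{i=1}^{k} f_{w_i}\!\bigl(\tfrac{\min\{\sigma-c_v,\,b_v\}}{k}\bigr)\cdot\mathbf{1}[\sigma\ge c_v]\,\Bigr\},
\]
and the final answer is $f_{r}(0)$. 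Each $f_v$ is a non-increasing, integer-valued, piecewise-constant function of $\sigma$; by induction it has at most $|T_v|$ breakpoints, since the argument transformation $\sigma\mapsto\min\{\cdot,b_v\}/k$ is piecewise linear and only rescales/shifts existing breakpoints, while sums and pointwise $\min$'s add only $O(k)$ more at $v$. Merging the children at $v$ therefore costs $O(|T_v|)$, and summing over all nodes yields total running time $O(n^2)$.

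The main obstacle is the careful algebraic extraction of precisely the $1+D(\Phi/\gamma-1)$ form in the lower bound—specifically, partitioning the $\iota_v$ contribution across $\vs$ (one zero at $r$, the rest one each) and balancing the per-node cap $b_v\le D$ against the global tree-edge identity $\sum_v b_v=n-1$, so that the factor $D$ in the denominator does not get inflated by a spurious additive term. The DP itself is then routine once one recognises that the piecewise-constant representation of $f_v$ has only linearly many breakpoints, and that the assumption $\Phi e_v\ge c_v$ rules out any infeasible branch in the recursion.
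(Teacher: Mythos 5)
Your lower-bound argument is a genuinely different route from the paper's and, as sketched, it does not reach the claimed constant. The paper proves a \emph{local} bound (Lemma~\ref{bound1}): the influence zone of any single shocked node satisfies $|\iz(u)|<1+\din(u)\bigl(\Phi/\gamma-1\bigr)$, a quantity independent of $E/n$, and the theorem follows from $n\le\sum_{u\in\vs}|\iz(u)|$. Your \emph{global} shock-conservation inequality $\gamma\sum_{v}a_v\le\Phi\sum_{v\in\vs}e_v$ is indeed correct on a tree (propagated loss only shrinks at each hop, and a node that fails absorbs at least its equity), but the ``algebraic rearrangement'' you defer does not extract the factor $1+D(\Phi/\gamma-1)$. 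Writing $\E=E/n$ and using $\sum_v a_v=(n-1)+n\E$, $\sum_{v\in\vs}\iota_v=|\vs|-1$ (since $r\in\vs$) and $b_v\le D$, the conservation inequality gives $|\vs|/n\gtrsim\frac{\gamma(1+\E)}{\Phi(D-1+\E)}$; comparing with $\frac{\gamma}{\gamma+D(\Phi-\gamma)}$ shows this dominates the target only when $\Phi\ge(D-1)\bigl[\gamma-\E(\Phi-\gamma)\bigr]$, i.e.\ roughly $\E\gtrsim\gamma/(\Phi-\gamma)$. For $D$ large and $\E$ small the conservation bound is strictly weaker, so the per-shocked-node zone bound is an essential idea your plan is missing. (I would note, though, that the paper's own proof of Lemma~\ref{bound1} uses ``$c_{v'}\ge\gamma(1+\E)$'' which fails at leaves where $c_{v'}=\gamma\E$; a star with small $\E$ and $\Phi<2\gamma$ strains both proofs, so the stated constant plausibly needs the ``every node fails when shocked'' hypothesis, under which $\E>\Phi/(\Phi-\gamma)$ and your conservation route actually does suffice.)

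On the algorithm, your DP indexed by the incoming-shock magnitude $\sigma$ is a legitimate reparametrisation of the paper's DP, which is indexed by the nearest shocked ancestor $u'$; on a tree these are in bijection, and your linear-breakpoint count matches the paper's $O(n^2)$ count of $(u,u')$ pairs. However, the recursion as written has a model error in the shocked branch. Under the standing assumption a shocked $v$ fails at $t=1$ and is removed from $\valive$ before its parent's loss is transmitted, so $v$ never receives $\sigma$; the outgoing loss in that branch must be $\min\{\Phi e_v-c_v,\,b_v\}/k$, \emph{not} $\min\{\sigma+\Phi e_v-c_v,\,b_v\}/k$. In particular the shocked branch is $\sigma$-independent, which is exactly the paper's split between $\ssvi^\ast(G,T,u)$ (shocked, self-contained) and $\snsvi^\ast(G,T,u,u')$ (not shocked, driven by the ancestor). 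With that correction the rest of your breakpoint bookkeeping goes through.
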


\begin{remark}
Thus, for example, when $\mathrm{deg}_{\mathrm{in}}^{\max}=3$, $\gamma=0.1$ and $\Phi=0.15$, we get $\vi^\ast(G,T)>0.22$ and 
the network cannot be put to death without shocking more than $22\%$ of the nodes. The proof gives an example for which the lower bound is tight.
\end{remark}

In the rest of this section, we prove the above theorem. Let $G=(V,F)$ be the given in-arborescence rooted at node $r$. We will 
use the following notations and terminologies:
\begin{itemize}
\item
$u\rightarrow v$ and $u\leadsto v$ denote a directed edge and a directed path of one of more edges, respectively, from node $u$ to node $v$. 

\item
If $(u,v)\in F$ then $v$ is the {\em parent} of $u$ and $u$ is a {\em child} of $v$. Similarly, 
if $u\leadsto v$ exists in $G$ then $v$ an {\em ancestor} of $u$ and $u$ a {\em descendent} of $v$. 

\item
Let $\nabla(u)=\{\,v\,|\,u\leadsto v \mbox{ exists in } G\,\}$ denote the set of all proper ancestors of $u$, and 
$\Delta(u)=\{\,v\,|\,v\leadsto u \mbox{ exists in } G\,\}\cup\{u\}$ denote the set of all descendants of $u$ (including the node $u$ itself).
Note that for the network $G$ to fail, at least one node in $\nabla(u)\cup\{u\}$ must be shocked for every node $u$. 
\end{itemize}
Suppose that we shock a node $u$ of $G$ (and shock no other nodes in $\Delta(u)$). If $u$ fails, then the shock splits and propagates to a subset of nodes in $\Delta(u)$ 
until each split part of the shock terminates because of one of the following reasons:
\begin{itemize}
\item
the component of the shock reaches a ``leaf'' node $v$ with $\din(v)=0$, or  

\item
the component of the shock reaches a node $v$ with a sufficiently high $c_v$ such that $v$ does not fail.
\end{itemize}
Based on the above observations, we define the following quantities.

\begin{definition}[see~\FI{fr} for illustrations]
The {\em influence zone} of a shock on $u$, denoted by $\iz(u)$, is the set of all failed nodes $v\in\Delta(u)$ within time $T$ when $u$ is shocked (and, no
other node in $\Delta(u)$ is shocked). Note that $u\in\iz(u)$. 
\end{definition}

Note that, for any node $u$, $\iz(u)$ can be computed in $O(n)$ time.

\begin{figure}[htbp]
\begin{minipage}[c]{3.8in}
\begin{pspicture}(-3.5,-3.2)(5.5,3)
\psset{xunit=0.5cm,yunit=0.5cm}
\pscurve[linewidth=2pt,linecolor=gray,fillstyle=none,fillcolor=lightgray]
      (-3.8,-6.5)(-3.2,-5.5)(-1,-2.6)(0.2,-2.6)(0.5,-4.9)(4,-4.9)(1.3,0.5)(-2,0.5)(-6,-6.5)(-3.8,-6.5)
\rput(-5.5,-1.2){\color{black}\bf\Large $\pmb{\iz(u)}$}
\pscircle[linewidth=0pt,origin={0,0},fillstyle=solid,fillcolor=gray](0,0){0.15}
\rput(0.7,0){$u$}
\rput(-1.3,0){\footnotesize\bf parent}
\pscircle[linewidth=1pt,origin={-2,-2},fillstyle=solid,fillcolor=black](0,0){0.15}
\rput(-3,-2){\footnotesize\bf child}
\pscircle[linewidth=1pt,origin={0,-2},fillstyle=solid,fillcolor=black](0,0){0.15}
\pscircle[linewidth=1pt,origin={2,-2},fillstyle=solid,fillcolor=black](0,0){0.15}
\pscircle[linewidth=1pt,origin={-3,-4},fillstyle=solid,fillcolor=black](0,0){0.15}
\pscircle[linewidth=1pt,origin={-1,-4},fillstyle=none,fillcolor=lightgray](0,0){0.15}
\pscircle[linewidth=1pt,origin={1,-4},fillstyle=solid,fillcolor=black](0,0){0.15}
\pscircle[linewidth=1pt,origin={2,-4},fillstyle=solid,fillcolor=black](0,0){0.15}
\pscircle[linewidth=1pt,origin={3,-4},fillstyle=solid,fillcolor=black](0,0){0.15}
\pscircle[linewidth=1pt,origin={-5,-6},fillstyle=solid,fillcolor=black](0,0){0.15}
\pscircle[linewidth=1pt,origin={-4,-6},fillstyle=solid,fillcolor=black](0,0){0.15}
\pscircle[linewidth=1pt,origin={-3,-6},fillstyle=none,fillcolor=lightgray](0,0){0.15}
\pscircle[linewidth=1pt,origin={-2,-6},fillstyle=none,fillcolor=lightgray](0,0){0.15}
\pscircle[linewidth=1pt,origin={-1,-6},fillstyle=none,fillcolor=lightgray](0,0){0.15}
\pscircle[linewidth=1pt,origin={3,-6},fillstyle=none,fillcolor=lightgray](0,0){0.15}
\pscircle[linewidth=1pt,origin={4,-6},fillstyle=none,fillcolor=lightgray](0,0){0.15}
\psline[linewidth=1pt,arrowsize=1.5pt 4,linecolor=black]{<-}(2,2)(0.15,0.15)
\psline[linewidth=1pt,linestyle=dashed,linecolor=black](2,2)(3,3)
\psline[linewidth=1pt,arrowsize=1.5pt 4,linecolor=black]{<-}(-0.15,-0.15)(-2,-2)
\psline[linewidth=1pt,arrowsize=1.5pt 4,linecolor=black]{<-}(0,-0.15)(0,-2)
\psline[linewidth=1pt,arrowsize=1.5pt 4,linecolor=black]{<-}(0.15,-0.15)(2,-2)
\psline[linewidth=1pt,arrowsize=1.5pt 4,linecolor=black]{<-}(-2.15,-2.15)(-3,-4)
\psline[linewidth=1pt,arrowsize=1.5pt 4,linecolor=black]{<-}(-1.85,-2.15)(-0.85,-3.85)
\psline[linewidth=1pt,arrowsize=1.5pt 4,linecolor=black]{<-}(1.85,-2.15)(1,-4)
\psline[linewidth=1pt,arrowsize=1.5pt 4,linecolor=black]{<-}(2,-2.15)(2,-4)
\psline[linewidth=1pt,arrowsize=1.5pt 4,linecolor=black]{<-}(2.15,-2.15)(3,-4)
\psline[linewidth=1pt,arrowsize=1.5pt 4,linecolor=black]{<-}(-3.15,-4.15)(-5,-5.85)
\psline[linewidth=1pt,arrowsize=1.5pt 4,linecolor=black]{<-}(-3.05,-4.15)(-4,-5.85)
\psline[linewidth=1pt,arrowsize=1.5pt 4,linecolor=black]{<-}(-3,-4.15)(-3,-5.85)
\psline[linewidth=1pt,arrowsize=1.5pt 4,linecolor=black]{<-}(-2.95,-4.15)(-2,-5.85)
\psline[linewidth=1pt,arrowsize=1.5pt 4,linecolor=black]{<-}(-2.85,-4.15)(-1,-5.85)
\psline[linewidth=1pt,arrowsize=1.5pt 4,linecolor=black]{<-}(3,-4.15)(3,-5.85)
\psline[linewidth=1pt,arrowsize=1.5pt 4,linecolor=black]{<-}(3.15,-4.15)(4,-5.85)
\pscircle[linewidth=0pt,origin={4.7,5.6},fillstyle=solid,fillcolor=gray](0,0){0.15}
\rput(2.8,5.6){shocked}
\pscircle[linewidth=0pt,origin={4.7,4.6},fillstyle=solid,fillcolor=black](0,0){0.15}
\rput(0.9,4.6){failed (due to shock)}
\pscircle[linewidth=1pt,origin={4.7,3.6},fillstyle=none,fillcolor=black](0,0){0.15}
\rput(0.1,3.6){not shocked and not failed}
\end{pspicture}
\caption{\label{fr}Influence zone of a shock on $u$. 
}
\end{minipage}
\hspace*{0.1in}
\begin{minipage}[c]{2.4in}
\begin{pspicture}(-0,0)(5.8,-4.2)
\psset{xunit=1cm,yunit=1cm}
\pscircle[linewidth=1pt,origin={2,0},fillstyle=none,fillcolor=lightgray](0,0){0.2}
\rput(2,0){$\displaystyle\pmb{u}$}
\psline[linewidth=1pt,arrowsize=1.5pt 4,linecolor=black]{<-}(1.8,0)(0,-1)
\psline[linewidth=1pt,arrowsize=1.5pt 4,linecolor=black]{<-}(1.9,-0.1)(1,-1)
\psline[linewidth=1pt,arrowsize=1.5pt 4,linecolor=black,linestyle=dotted]{<-}(2,-0.2)(2,-1)
\psline[linewidth=1pt,arrowsize=1.5pt 4,linecolor=black]{<-}(2.1,-0.1)(3,-1)
\psline[linewidth=1pt,arrowsize=1.5pt 4,linecolor=black]{<-}(2.2,0)(4,-1)
\psline[linewidth=1pt,arrowsize=1.5pt 4,linecolor=black]{<-}(0,-1.2)(0,-1.8)
\psline[linewidth=1pt,arrowsize=1.5pt 4,linecolor=black]{<-}(1,-1.2)(1,-1.8)
\psline[linewidth=1pt,arrowsize=1.5pt 4,linecolor=black]{<-}(3,-1.2)(3,-1.8)
\psline[linewidth=1pt,arrowsize=1.5pt 4,linecolor=black]{<-}(4,-1.2)(4,-1.8)
\psline[linewidth=1pt,arrowsize=1.5pt 2,linecolor=black,linestyle=dotted]{<-}(0,-2.2)(0,-2.8)
\psline[linewidth=1pt,arrowsize=1.5pt 2,linecolor=black,linestyle=dotted]{<-}(1,-2.2)(1,-2.8)
\psline[linewidth=1pt,arrowsize=1.5pt 2,linecolor=black,linestyle=dotted]{<-}(3,-2.2)(3,-2.8)
\psline[linewidth=1pt,arrowsize=1.5pt 2,linecolor=black,linestyle=dotted]{<-}(4,-2.2)(4,-2.8)
\psline[linewidth=1pt,arrowsize=1.5pt 2,linecolor=black,linestyle=dotted]{<-}(0,-3.2)(0,-3.8)
\psline[linewidth=1pt,arrowsize=1.5pt 2,linecolor=black,linestyle=dotted]{<-}(1,-3.2)(1,-3.8)
\psline[linewidth=1pt,arrowsize=1.5pt 2,linecolor=black,linestyle=dotted]{<-}(3,-3.2)(3,-3.8)
\psline[linewidth=1pt,arrowsize=1.5pt 2,linecolor=black,linestyle=dotted]{<-}(4,-3.2)(4,-3.8)
\rput[mc](4.6,-2.5){\scalebox{2}[8.5]{\}}}
\rput[mc](5.4,-2.5){$\left\lfloor\frac{\Phi}{\gamma}-1\right\rfloor$}
\pscircle[linewidth=1pt,origin={0,-1},fillstyle=solid,fillcolor=white](0,0){0.2}
\pscircle[linewidth=1pt,origin={1,-1},fillstyle=solid,fillcolor=white](0,0){0.2}
\rput[mc](2,-1){\Large $\displaystyle\pmb{\ldots\ldots}$}
\pscircle[linewidth=1pt,origin={3,-1},fillstyle=solid,fillcolor=white](0,0){0.2}
\pscircle[linewidth=1pt,origin={4,-1},fillstyle=solid,fillcolor=white](0,0){0.2}
\pscircle[linewidth=1pt,origin={0,-2},fillstyle=none,fillcolor=lightgray](0,0){0.2}
\pscircle[linewidth=1pt,origin={1,-2},fillstyle=none,fillcolor=lightgray](0,0){0.2}
\rput[mc](2,-2){\Large $\displaystyle\pmb{\ldots\ldots}$}
\pscircle[linewidth=1pt,origin={3,-2},fillstyle=none,fillcolor=lightgray](0,0){0.2}
\pscircle[linewidth=1pt,origin={4,-2},fillstyle=none,fillcolor=lightgray](0,0){0.2}
\rput[mc](0,-2.9){\Large $\displaystyle\pmb{\vdots}$}
\rput[mc](1,-2.9){\Large $\displaystyle\pmb{\vdots}$}
\rput[mc](2,-3){\Large $\displaystyle\pmb{\ldots\ldots}$}
\rput[mc](3,-2.9){\Large $\displaystyle\pmb{\vdots}$}
\rput[mc](4,-2.9){\Large $\displaystyle\pmb{\vdots}$}
\pscircle[linewidth=1pt,origin={0,-4},fillstyle=none,fillcolor=lightgray](0,0){0.2}
\pscircle[linewidth=1pt,origin={1,-4},fillstyle=none,fillcolor=lightgray](0,0){0.2}
\rput[mc](2,-4){\Large $\displaystyle\pmb{\ldots\ldots}$}
\pscircle[linewidth=1pt,origin={3,-4},fillstyle=none,fillcolor=lightgray](0,0){0.2}
\pscircle[linewidth=1pt,origin={4,-4},fillstyle=none,fillcolor=lightgray](0,0){0.2}
\end{pspicture}
\caption{\label{tight1}A tight example for the bound in Lemma~\ref{bound1} ($\E=0$).}
\end{minipage}
\end{figure}

\begin{lemma}\label{bound1}
For any node $u$, $\big|\,\iz(u)\,\big|<1+\din(u)\left(\frac{\Phi}{\gamma}-1\right)$. 
\end{lemma}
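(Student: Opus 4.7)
The plan is to combine the subtree structure of $\iz(u)$ in the in-arborescence with a shock-conservation argument. First I will note that $\iz(u)$ is a connected subtree of $\Delta(u)$ rooted at $u$: in an in-arborescence every non-root node has a unique outgoing edge, to its parent, so shocks propagate strictly from a failed parent (the debtor) to its children (the creditors). Hence any failed $v\in\iz(u)\setminus\{u\}$ must have received its shock from its unique parent in the tree, which itself lies in $\iz(u)$. Writing $k:=|\iz(u)|$, this subtree contains exactly $k-1$ parent--child edges, each contributing to some in-degree within $\iz(u)$, so $\sum_{v\in\iz(u)}\din(v)\ge k-1$ and hence $\sum_{v\in\iz(u),\,v\ne u}\din(v)\ge (k-1)-\din(u)$.

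Second, I will establish the shock-conservation inequality $T_u \ \ge \ \sum_{v\in\iz(u)\setminus\{u\}} c_v$, where $T_u=\min\{\Phi e_u-c_u,\,b_u\}$ is the total shock that $u$ transmits. The reason is that every unit of shock entering $\Delta(u)\setminus\{u\}$ via $u$'s transmission is eventually absorbed either as equity of a failed descendant (contributing $c_v$), as depositor loss at a failed descendant, or as residual equity reduction at a non-failed boundary node; since the latter two quantities are non-negative, the bound follows. Using $c_v=\gamma(\din(v)+\E)$ and the lower bound on $\sum_{v\ne u}\din(v)$ from Step~1, this gives
\[
T_u\ \ge\ \gamma\Big((k-1)(1+\E)\ -\ \din(u)\Big).
\]

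Third, I will upper-bound $T_u$ using the transmission formula $T_u\le \Phi e_u-c_u = (\Phi-\gamma)\din(u)-\Phi\dout(u)+(\Phi-\gamma)\E$. Chaining this with the lower bound from Step~2 and cancelling the matching $\E$-contributions on the two sides yields, after routine algebra, the desired estimate $k-1<\din(u)\big(\Phi/\gamma-1\big)$.

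The main obstacle will be securing the strict inequality in this last step. The underlying conservation $T_u\ge\sum c_v$ is non-strict, and the tight example sketched in Figure~\ref{tight1} (a $\din(u)$-fold bundle of chains of length $\lfloor \Phi/\gamma-1\rfloor$) shows the bound is approached arbitrarily closely. The strict ``$<$'' must therefore be extracted from an always-present positive slack: either the depositor-loss term together with boundary absorption by non-failed descendants (strictly positive whenever some chain in $\iz(u)$ does not saturate its equity budget exactly), or the additive correction $\Phi\dout(u)/\gamma$ whenever $u$ is not the root of the in-arborescence. I expect the write-up to dispatch this by a brief case analysis separating the root case from the non-root case of $u$ and exhibiting in each case a slack term strictly bounded away from zero.
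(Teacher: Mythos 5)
Your high-level strategy (charge each failed node its equity against the transmitted shock budget) is the same one the paper uses, so the framework is sound; but the accounting in Step~1 is too weak, and Step~3's ``routine algebra'' does not actually produce the stated bound. The paper argues per child of $u$: it first shows the per-child budget satisfies $\Game<(\Phi-\gamma)(1+\E)$ (the strictness coming from dropping the positive term $\Phi/\din(u)$), and then charges each failed node $v'$ in that child's subtree the \emph{pointwise} amount $c_{v'}=\gamma(\din(v')+\E)\ge\gamma(1+\E)$. Your aggregate edge count $\sum_{v\in\iz(u),\,v\ne u}\din(v)\ge k-1-\din(u)$ captures only an \emph{average} in-degree over the $k-1$ failed descendants, which is strictly below~$1$, and this is genuinely weaker. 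Carrying your Steps~2--3 through, what actually comes out is
\[
k-1\ \le\ \din(u)\left(\frac{\Phi}{\gamma}-1\right)\ +\ \frac{\gamma\,\din(u)\,-\,\Phi\,\dout(u)\,-\,(\Phi-\gamma)(\din(u)-1)\,\E}{\gamma\,(1+\E)}\,,
\]
and the correction term is not $\le 0$ in general: at the root ($\dout(u)=0$) it equals $\din(u)>0$ when $\E=0$ and remains positive for every $\E<\gamma\din(u)/\bigl((\Phi-\gamma)(\din(u)-1)\bigr)$. So the claimed estimate $k-1<\din(u)(\Phi/\gamma-1)$ does not follow. The $\E$-terms you hoped to ``cancel'' do not match either: one side carries a factor $(k-1)$ and the other does not.

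You also misdiagnose the obstacle as one of strictness. It is not a slack hidden in $T_u\ge\sum_{v\ne u}c_v$, nor a root/non-root dichotomy; the strict $<$ in the paper is manufactured at the very first step, from $\Game<(\Phi-\gamma)\bigl(1+\E/\din(u)\bigr)\le(\Phi-\gamma)(1+\E)$. Once each failed descendant is charged $\ge\gamma(1+\E)$, each of the $\din(u)$ child subtrees contributes $<\Game/(\gamma(1+\E))<\Phi/\gamma-1$ failures, and summing plus $1$ for $u$ gives the lemma. To repair the proposal, replace the global subtree edge count by the per-node bound $c_{v'}\ge\gamma(1+\E)$ applied child-subtree by child-subtree, and take the strict inequality from the bound on $\Game$ rather than from a putative conservation slack.
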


\begin{proof}
For notational simplicity, let $\E=E/n$. 
If the node $u$ does not fail when shocked, or $u$ fails but it has no child, 
then $\big|\,\iz(u)\,\big|\leq 1$ and our claim holds since $\Phi>\gamma$. 
Otherwise, $u$ fails and each of its $\din(u)$ children at level~$2$ receives 
a part of the shock given by 
\begin{gather*}
\Game 
=
\min\left\{\,\frac{\Phi\left(\din(u)-1+\E\right) - \gamma\left(\din(u)+\E\right)}{\din(u)},\,1\right\}
\\
<
\Phi\left(1+\frac{\E}{\din(u)}\right)
-\gamma\left(1+\frac{\E}{\din(u)}\right)
\leq
\Phi\left(1+\E\right) - \gamma\left(1+\E\right)
\end{gather*}
Consider a child $v$ of $u$.  Each node $v'\in\Delta(v)$ that fails due to the shock subtracts an amount of $\gamma\left(\din(v')+\E\right)\geq\gamma\left(1+\E\right)$
from $\Game$ provided this subtraction does not result in a negative value. Thus, the total 
number of failed nodes is strictly less than 
$1+\din(u)\,\frac{\Phi\left(1+\E\right) - \gamma\left(1+\E\right)}{\gamma\left(1+\E\right)}=1+ \din(u) \left(\frac{\Phi}{\gamma}-1\right)$.
\end{proof}

\begin{remark}
The bound in Lemma~\ref{bound1} is tight as shown in \FI{tight1}. 
\end{remark}

Lemma~\ref{bound1} immediately implies that 
\[
\vi^\ast(G,T)
>
\frac{\displaystyle\dfrac{n}{ \displaystyle \max_{u\in V} \Big\{\, \iz(u)\, \Big\}  }}{n}
>
\dfrac{n\big/\left(1+\mathrm{deg}_{\mathrm{in}}^{\max} \left( \frac{\Phi}{\gamma}-1 \right) \right) }{n}
=
\dfrac{1}{1+\mathrm{deg}_{\mathrm{in}}^{\max} \left( \frac{\Phi}{\gamma}-1 \right)}
\]
We now provide a polynomial time algorithm to compute $\vi^\ast(G,T)$ exactly {\em assuming each node can be shocked to fail
individually}. For a node $u$, define the following:
\begin{itemize}
\item
For every node $u'\in\nabla(u)$, $\snsvi^\ast(G,T,u,u')$ is the number of nodes in an optimal solution of \nam\ 
for the subgraph induced by the nodes in $\Delta(u)$ (or $\infty$, if there is no feasible solution of \nam\ for this subgraph
under the stated conditions) assuming the following:
\begin{itemize}
\item
$u'$ was shocked,

\item
$u$ was {\em not} shocked, and 

\item
no node in the path $u'\leadsto u$ excluding $u'$ was shocked.
\end{itemize}

\item
$\ssvi^\ast(G,T,u)$ is the number of nodes in an optimal solution of \nam\ 
for the subgraph induced by the nodes in $\Delta(u)$ 
(or $\infty$, if there is no feasible solution of \nam\ under the stated conditions)\footnote{Intuitively, a value of $\infty$
signifies that the corresponding quantity is undefined.}
assuming that the node $u$ was shocked (and therefore failed).
\end{itemize}
We consider the usual partition of the nodes of $G$ into {\em levels}: $\level(r)=1$ and 
$\level(u)=\level(v)+1$ if $u$ is a child of $v$. 
We will compute $\ssvi^\ast(G,T,u)$ and $\snsvi^\ast(G,T,u,v)$
for the nodes $u$ level by level, starting with the 
highest level and proceeding to successive lower levels. By Observation~\ref{obs1}(a), the root $r$ must be shocked to fail 
for the entire network to fail, and thus $\ssvi^\ast(G,T,r)$ will provide us with our required optimal solution.

Every node $u$ at the highest level has $\din(u)=0$. In general, $\ssvi^\ast(G,T,u)$ and $\snsvi^\ast(G,T,u,u')$ 
can be computed for any node $u$ with $\din(u)=0$ as follows: 
\begin{description}
\item[Computing $\ssvi^\ast(G,T,u)$ when $\din(u)=0$:]
$\ssvi^\ast(G,T,u)=1$ by our assumption that every node can be shocked to fail.

\item[Computing $\snsvi^\ast(G,T,u,u')$ when $\din(u)=0$:]
$\,$

\begin{itemize}
\item
If $u\in\iz(u')$ then shocking node $v$ makes node $u$ fail. Since node $u$ fails without being shocked, we have $\snsvi^\ast(G,T,u,u')=0$.

\item
Otherwise, node $u$ does not fail. Thus, there is no feasible solution and $\snsvi^\ast(G,T,u,u')=\infty$.
\end{itemize}
\end{description}
Note that we only count the number of nodes in $\Delta(u)$ in the calculations of $\snsvi^\ast(G,T,u,u')$ and $\ssvi^\ast(G,T,u)$.

Now, consider a node $u$ at some level $\ell$ with $\din(u)>0$. Let $v_1,v_2,\dots,v_{\din(u)}$ be the children of $u$ at level $\ell+1$.
Note that $\nabla(v_1)=\nabla(v_2)=\dots=\nabla(v_{\din(u)})$.
\begin{description}
\item[Computing $\ssvi^\ast(G,T,u)$ when $\din(u)>0$:]
By our assumption, $u$ fails when shocked. Note that no node in $\Delta(u)\setminus\{u\}$ can receive any component of a shock
given to a node in $V\setminus\Delta(u)$ since $u$ failed.
For each child $v_i$ of $u$ we have two choices: 
$v_i$ is shocked and (and, therefore, fails), or 
$v_i$ is not shocked.
Thus, in this case we have 
$\ssvi^\ast(G,T,u) = 1 + \sum_{i=1}^{\din(u)} \min \Big\{\, \ssvi^\ast(G,T,v_i),\, \snsvi^\ast(G,T,v_i,u)\, \Big\}$.

\item[Computing $\snsvi^\ast(G,T,u,u')$ when $\din(u)>0$:]
Since $u'$ is shocked and $u$ is not shocked, the following cases arise:
\begin{itemize}
\item
If $u\not\in\iz(u')$ then then $u$ does not fail. 
Thus, there is no feasible solution for the subgraph induced by the nodes in $\Delta(u)$ under this
condition, and $\snsvi^\ast(G,T,u,u')=\infty$. 

\item
Otherwise, $u\in\iz(u')$, and therefore $u$ fails when $u'$ is shocked.
For each child $v_i$ of $u$, there are two options: $v_i$ is shocked and fails, or $v_i$ is not shocked.
Thus, in this case we have 
$\snsvi^\ast(G,T,u,u') = \sum_{i=1}^{\din(u)} \min \Big\{\, \ssvi^\ast(G,T,v_i),\, \snsvi^\ast(G,T,v_i,u')\,\Big\}$.
\end{itemize}
\end{description}
Let $\ell_{\max}$ be the maximum level number of any node in $G$.
Based on the above observations, we can design the dynamic programming algorithm as shown in \FI{dyn1} to compute an 
optimal solution of \nam\ on $G$.
It is easy to check that the running time of our algorithm is $O\left(n^2\right)$.

\begin{figure}
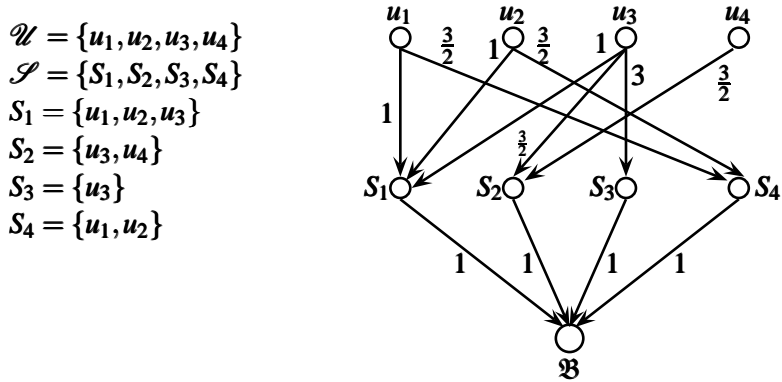

\begin{tabular}{p{6.5in}} 
\toprule
(* preprocessing *)
\\
$\forall u\in V\colon$ compute $\iz(u)$
\\
(* dynamic programming *)
\\
{\bf for} $\ell=\ell_{\max},\ell_{\max}\!-\!1,\dots,1$ {\bf do} 
\\
\hspace*{0.2in}
{\bf for} each node $u$ at level $\ell$ {\bf do}
\\
\hspace*{0.4in}
{\bf if} $\din(u)=0$ {\bf then} 
\\
\hspace*{0.6in}
$\ssvi^\ast(G,T,u)=1$ 
\\
\hspace*{0.6in}
$\forall u'\in \nabla(u)\colon$ {\bf if} $u\in\iz(u')$ {\bf then} $\snsvi^\ast(G,T,u,u')=0$ {\bf else} $\snsvi^ast(G,T,u,u')=\infty$ 
\\
\hspace*{0.4in}
{\bf else}  \hspace*{0.2in} (* $\din(u)>0$ *)
\\
\hspace*{0.6in}
$\ssvi^\ast(G,T,u) = 1 + \sum_{i=1}^{\din(u)} \min \Big\{\, \ssvi^\ast(G,T,v_i),\, \snsvi^\ast(G,T,v_i,u)\, \Big\}$
\\
\hspace*{0.6in}
$\forall u'\in \nabla(u)\colon$ {\bf if} $u\notin\iz(u')$ {\bf then} $\snsvi^\ast(G,T,u,u') = \infty$ 
\\
\hspace*{1.7in}
{\bf else} 
\\
\hspace*{1.8in}
$\snsvi^\ast(G,T,u,u') = \sum_{i=1}^{\din(u)} \min \Big\{\, \ssvi^\ast(G,T,v_i),\, \snsvi^\ast(G,T,v_i,u')\,\Big\}$
\\
\hspace*{1.5in}
{\bf endif} 
\\
\hspace*{0.4in}
{\bf endif} 
\\
\hspace*{0.2in}
{\bf endfor} 
\\
{\bf endfor}
\\
{\bf return} $\ssvi^\ast(G,T,r)$ as the solution \\
\bottomrule
\end{tabular}
\caption{\label{dyn1}A polynomial time algorithm to compute $\vi^\ast(G,T)$ when $G$ is a rooted in-arborescence and 
each node of $G$ fails individually when shocked.}
\end{figure}

\section{Heterogeneous Networks, $\pmb{\mbox{\nam}}$, Any $T$, Logarithmic Inapproximability}

\begin{theorem}\label{hetero-thm1}
Assuming $\NP\not\subset\mathsf{DTIME}\left(n^{\log\log n}\right)$, 
for any constant $0<\eps<1$ and any $T$, it is impossible to approximate $\vi^\ast(G,T)$
within a factor of $\,\,(1-\eps)\ln n\,\,$ in polynomial time even if $G$ is a DAG.
\end{theorem}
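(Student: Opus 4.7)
The plan is to reduce from MIN-SET-COVER, which by Feige's theorem~\cite{F98} is $(1-\eps)\ln N$-hard to approximate on universe size $N$ unless $\NP\subseteq\mathsf{DTIME}(N^{\log\log N})$. Given an instance with universe $\mathcal{U}=\{u_1,\ldots,u_N\}$ and family $\mathcal{S}=\{S_1,\ldots,S_M\}$ with $M=\mathrm{poly}(N)$, I aim to build a heterogeneous acyclic banking network $\overrightarrow{G}=(\overrightarrow{V},\overrightarrow{F})$ with $|\overrightarrow{V}|=\mathrm{poly}(N)$ whose minimum shock set has size $k^{\ast}+O(1)$, where $k^{\ast}$ is the optimum cover size. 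Since $\log|\overrightarrow{V}|=\Theta(\log N)$, the Feige gap then transfers (absorbing the additive $O(1)$ slack into $\eps$) into the claimed $(1-\eps)\ln n$-inapproximability of $\vi^\ast(\overrightarrow{G},T)$ for $n=|\overrightarrow{V}|$.

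The backbone of the construction is a three-layer DAG. At the deepest layer I place a unique terminal sink $t$, which by Proposition~\ref{obs1}(a) is forced into every feasible $\vs$. Above $t$ I place a ``set'' node $s_i$ for each $S_i$ with edge $(s_i,t)$, so that $s_i$ is not itself a sink. Above those, an ``element'' node $e_j$ for each $u_j$ with edge $(e_j,s_i)\in\overrightarrow{F}$ whenever $u_j\in S_i$; with this orientation $s_i$ is a debtor of $e_j$, so failing a shocked $s_i$ propagates shock forward to every $e_j$ it covers. Using the heterogeneity of $\mathbf{w}$ and $\pmb{\alpha}$, I calibrate: (a)~the per-capita share of $t$'s loss arriving at each $s_i$ is strictly below $c_{s_i}$, so $s_i$ can fail only via a \emph{direct} shock; (b)~the loss a failed $s_i$ delivers to each $e_j$ with $u_j\in S_i$, combined with $t$'s tiny contribution, is just enough to fail $e_j$; and (c)~$t$'s share alone is not enough to fail any $e_j$. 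Under these calibrations, shocking $\{t\}\cup\{s_i:S_i\in\mathcal{C}\}$ kills every $e_j$ if and only if $\mathcal{C}$ is a cover of $\mathcal{U}$.

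The main technical obstacle---and what distinguishes this theorem from the homogeneous Theorem~\ref{th1} which uses bidirectional edges---is that once all $e_j$'s have died, any $s_i\notin\mathcal{C}$ still has only $t$ as a debtor and hence does not itself die by cascade; a direct back-edge $e_j\to s_i$ would violate acyclicity via a $2$-cycle. I therefore plan to attach a back-propagation gadget using fresh nodes at new DAG layers so that the cascade of failed $e_j$'s feeds back into the unshocked $s_i$'s through side channels that never reuse the forward edges $(e_j,s_i)$. The delicate design constraint is that every such side channel must be routed through nodes whose layers are topologically consistent with a global DAG ordering---in particular, any edge directed into the $s_i$-layer must originate from a layer strictly \emph{between} the forward element layer and the $s_i$-layer---and the edge weights must be tuned so that the side channel fires at every $s_i$ exactly when \emph{all} of $S_i$'s elements have been killed (i.e., when $\mathcal{C}$ is a cover), but not at any partial cover. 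Writing out the resulting small system of inequalities on $\gamma,\Phi,\mathbf{w},\pmb{\alpha}$ (analogous to equations~\eqref{eq1}--\eqref{eq6} of Theorem~\ref{apx1}) and verifying its simultaneous feasibility will form the bulk of the proof.

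Correctness then follows the usual template. Forwards, if $\mathcal{C}$ is a cover of size $k$, then shocking $\{t\}\cup\{s_i:S_i\in\mathcal{C}\}$ of size $1+k$ cascades in a constant number of time steps through the forward edges to fail every $e_j$ and then through the back-gadget to fail every remaining node; the construction therefore works for every $T$ larger than this constant, and by extending the back-chain length or by appealing to Theorem~\ref{th1} for $T=2$ we handle smaller $T$ as well. Conversely, any $V'\subseteq\overrightarrow{V}$ killing the network must contain $t$ by Proposition~\ref{obs1}(a), and by calibration~(c) each $e_j\notin V'$ must have received shock from a covering $s_i\in V'$; hence $\{i:s_i\in V'\}$ together with singleton sets for any element-side nodes in $V'$ is a set cover of size at most $|V'|-1$, so $|V'|\geq k^{\ast}+1$. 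Thus the optimum of $\nam$ on $\overrightarrow{G}$ equals $k^{\ast}+1$, and Feige's $(1-\eps)\ln N$-inapproximability yields the claimed $(1-\eps)\ln n$-inapproximability of $\vi^\ast(\overrightarrow{G},T)$.
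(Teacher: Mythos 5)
Your construction and calibration items (b) and (c) are on the right track, and the reduction source (MIN-SET-COVER via Feige) matches the paper's. But calibration (a) forces you into a corner that the paper deliberately avoids, and the fix you propose—a ``back-propagation gadget''—is topologically impossible. Here is why. In this model, when a node $v$ fails, the shock is transmitted along each edge $(u,v)\in F$ to the tail $u$. If you fix any topological ordering $\tau$ of the DAG (so $(u,v)\in F\Rightarrow\tau(u)<\tau(v)$), shock propagation strictly \emph{decreases} $\tau$: a failing node can only ever influence nodes with lower topological index. In your layered construction, $\tau(e_j)<\tau(s_i)<\tau(t)$, and moreover $\din(e_j)=0$, so a failed element node has no creditors and is an absorbing dead end. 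There is therefore no cycle-free way to route a signal that ``all of $S_i$'s elements are now dead'' back into $s_i$: any chain of failures $y_1\leftarrow y_2\leftarrow\cdots$ feeding into $s_i$ via a new edge $(s_i,y_1)$ must have $\tau(s_i)<\tau(y_1)<\tau(y_2)<\cdots$, and no such $y_k$ can ever be hit by shock originating at the $e_j$'s, which live strictly \emph{below} $s_i$. Your ``delicate design constraint'' in fact forbids the very channel you are trying to build.

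The paper's proof sidesteps this entirely by inverting your calibration (a). Instead of ensuring $s_i$ fails \emph{only} when directly shocked, it ensures that shocking $\mathfrak{B}$ (your $t$) makes \emph{every} set node fail at $t=2$ unconditionally (conditions (I) and (III-1)), so there is never an unshocked, surviving $s_i$ that needs back-propagation. The load-bearing distinction is in the \emph{magnitude} of loss each failing $s_i$ passes downward to its elements: a set node that merely cascade-fails from $\mathfrak{B}$ transmits a strictly smaller loss than one that was directly shocked, and the heterogeneous external-asset assignments ($E_S=0$, $E_u=\nicefrac{1}{100n}$, etc.) are tuned so that the former loss falls short of $c_u$ (condition (III-2)) while the latter exceeds it (condition (II)). Thus the element $u$ dies precisely when some covering $S$ was directly shocked, which is exactly the set-cover correspondence, and the whole network dies in two steps. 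If you adopt this ``weak cascade vs.~strong shock'' dichotomy for the set layer, the remainder of your argument (completeness and soundness, optimum $=k^\ast+1$, Feige gap transfer) goes through as you outlined.
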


\begin{figure}[htbp]
\begin{center}
\begin{minipage}[c]{4in}
\begin{pspicture}(0,-2.5)(5,1.5)
\psset{xunit=1.5cm,yunit=1cm}
\rput[ml](0,1){$\pmb{\cU=\{u_1,u_2,u_3,u_4\}}$}
\rput[ml](0,0.5){$\pmb{\cS=\{S_1,S_2,S_3,S_4\}}$}
\rput[ml](0,0){$\pmb{S_1=\{u_1,u_2,u_3\}}$}
\rput[ml](0,-0.5){$\pmb{S_2=\{u_3,u_4\}}$}
\rput[ml](0,-1){$\pmb{S_3=\{u_3\}}$}
\rput[ml](0,-1.5){$\pmb{S_4=\{u_1,u_2\}}$}
\hspace*{-1.5in}
\pscircle[linewidth=1pt,origin={6,-1},fillstyle=none,fillcolor=lightgray](0,0){0.15}
\rput(5.8,-1){$\pmb{S_1}$}
\pscircle[linewidth=1pt,origin={7.5,-3},fillstyle=none,fillcolor=lightgray](0,0){0.2}
\rput(7.5,-3.4){$\pmb{\mathfrak{B}}$}
\psline[linewidth=1pt,arrowsize=1.5pt 4,linecolor=black]{->}(6,-1.15)(7.45,-2.85)
\rput[mr](6.6,-2){$\pmb{1}$}
\pscircle[linewidth=1pt,origin={7,-1},fillstyle=none,fillcolor=lightgray](0,0){0.15}
\rput(6.8,-1){$\pmb{S_2}$}
\psline[linewidth=1pt,arrowsize=1.5pt 4,linecolor=black]{->}(7,-1.15)(7.5,-2.85)
\rput[mr](7.2,-2){$\pmb{1}$}
\pscircle[linewidth=1pt,origin={8,-1},fillstyle=none,fillcolor=lightgray](0,0){0.15}
\rput(7.8,-1){$\pmb{S_3}$}
\psline[linewidth=1pt,arrowsize=1.5pt 4,linecolor=black]{->}(8,-1.15)(7.5,-2.85)
\rput[mr](7.95,-2){$\pmb{1}$}
\pscircle[linewidth=1pt,origin={9,-1},fillstyle=none,fillcolor=lightgray](0,0){0.15}
\rput[ml](9.15,-1){$\pmb{S_4}$}
\psline[linewidth=1pt,arrowsize=1.5pt 4,linecolor=black]{->}(9,-1.15)(7.55,-2.85)
\rput[mr](8.55,-2){$\pmb{1}$}
\pscircle[linewidth=1pt,origin={6,1},fillstyle=none,fillcolor=lightgray](0,0){0.15}
\rput(6,1.3){$\pmb{u_1}$}
\psline[linewidth=1pt,arrowsize=1.5pt 4,linecolor=black]{->}(6,0.85)(6,-0.85)
\rput[mr](5.95,0){$\pmb{1}$}
\psline[linewidth=1pt,arrowsize=1.5pt 4,linecolor=black]{->}(6,0.85)(8.9,-0.9)
\rput[mr](6.5,0.9){$\pmb{\frac{3}{2}}$}
\pscircle[linewidth=1pt,origin={7,1},fillstyle=none,fillcolor=lightgray](0,0){0.15}
\rput(7,1.3){$\pmb{u_2}$}
\psline[linewidth=1pt,arrowsize=1.5pt 4,linecolor=black]{->}(7,0.85)(6.05,-0.9)
\rput[mr](6.9,0.85){$\pmb{1}$}
\psline[linewidth=1pt,arrowsize=1.5pt 4,linecolor=black]{->}(7,0.85)(9.05,-0.85)
\rput[ml](7.2,0.9){$\pmb{\frac{3}{2}}$}
\pscircle[linewidth=1pt,origin={8,1},fillstyle=none,fillcolor=lightgray](0,0){0.15}
\rput(8,1.3){$\pmb{u_3}$}
\psline[linewidth=1pt,arrowsize=1.5pt 4,linecolor=black]{->}(8,0.85)(6.1,-1)
\rput[mr](7.85,0.93){$\pmb{1}$}
\psline[linewidth=1pt,arrowsize=1.5pt 4,linecolor=black]{->}(8,0.85)(7,-0.85)
\rput[mr](7.15,-0.4){$\scriptscriptstyle\pmb{\frac{3}{2}}$}
\psline[linewidth=1pt,arrowsize=1.5pt 4,linecolor=black]{->}(8,0.85)(8,-0.85)
\rput[ml](8.05,0.5){$\pmb{3}$}
\pscircle[linewidth=1pt,origin={9,1},fillstyle=none,fillcolor=lightgray](0,0){0.15}
\rput(9,1.3){$\pmb{u_4}$}
\psline[linewidth=1pt,arrowsize=1.5pt 4,linecolor=black]{->}(8.95,0.85)(7.1,-0.9)
\rput[ml](8.8,0.4){$\pmb{\frac{3}{2}}$}
\end{pspicture}
\end{minipage}
\hspace*{0.1in}
\begin{minipage}[c]{2in}
\caption{\label{fig-sc1}An instance $\langle\cU,\cS\rangle$ of \SC\ and its corresponding banking network $G=(V,F)$.}
\end{minipage}
\end{center}
\end{figure}

\begin{proof}
The (unweighted) \SC\ problem is defined as follows. We have an universe $\cU$ of $n$ elements, 
a collection of $m$ sets $\cS$ over $\cU$. 
The goal is to pick a sub-collection $\cS'\subseteq\cS$ containing a {\em minimum} number of sets 
such that these sets ``cover'' $\cU$, \IE, $\cup_{S\in\cS'}S=\cU$. 
It is known that there exists instances of \SC\
that cannot be approximated within a factor of $(1-\delta)\ln n$, for any constant $0<\delta<1$, unless 
$\NP\subseteq\mathsf{DTIME}\left(n^{\log\log n}\right)$~\cite{F98}.
Without any loss of generality, one may assume that every element $u\in\cU$ belongs to at least two sets in $\cS$ 
since otherwise the only set containing $u$ must be selected in any solution.

Given such an instance $\langle \cU,\cS\rangle$ of \SC, we now construct an instance of 
the banking network $G=(V,F)$ as follows:
\begin{itemize}
\item
We have a special node $\mathfrak{B}$.

\item
For every set $S\in\cS$, we have a node $S$, and a directed edge $(S,\mathfrak{B})$.  

\item
For every element $u\in\cU$, we have a node $u$, and directed edges $(u,S)$
for every set $S$ that contains $u$.
\end{itemize}
Thus, $|V|=n+m+1$, and $|F|<n\,m+m$. See~\FI{fig-sc1} for an illustration.
We set the shares of internal assets for each bank as follows: 
\begin{itemize}
\item
For each set $S\in\cS$, if $S$ contains $k>1$ elements then, for each element $u\in S$,
we set the weight of the edge $e=(u,S)$ as $w(e)=\frac{3}{k}$. 

\item
For each set $S\in\cS$, we set the weight of the edge $(S,\mathfrak{B})$ as $1$. 
\end{itemize}
Thus, $I=4m$. Also, observe that:
\begin{itemize}
\item
For any $S\in\cS$, $b_S=3$, and $\iota_S=1$.

\item
For any $u\in\cU$, $b_u=0$. Also, since $u$ belongs to at least two sets in $\cS$ and any set has at most $n-1$ elements, 
$\frac{2}{n}\leq\iota_u<\frac{3n}{2}$.

\item
$b_{\mathfrak{B}}=m$ and $\iota_{\mathfrak{B}}=0$.

\item
Since $\din(u)=0$ for any element $u\in\cU$, if a node $u$ is shocked, no part of the shock is propagated to any other node in the network.

\item
Since the longest path in $G$ has $2$ edges, by Proposition~\ref{obs1}(b) no new node in $G$ fails for $T>3$.
\end{itemize}
Let the share of external assets for a node (bank) $y$ be denoted by $E_y$ (thus, $\sum_{y\in V}E_y=E$).
We will select the remaining network parameters, namely $\gamma$, $\Phi$ and the $E_y$ values, based on the following properties.

\vspace*{0.1in}
\noindent
{\bf (I)}
If the node $\mathfrak{B}$ is shocked at $t=1$, it fails: 
\begin{gather}
\Phi\,( b_{\mathfrak{B}}-\iota_{\mathfrak{B}}+E_{\mathfrak{B}}\,) > \gamma\,( b_{\mathfrak{B}}+E_{\mathfrak{B}}\,)
\,\,\,\equiv\,\,\,
\Phi\,(m+E_{\mathfrak{B}}\,) > \gamma\,(m+E_{\mathfrak{B}}\,) 
\,\,\,\equiv\,\,\,
\Phi > \gamma
\label{eq1-new} 
\end{gather}

\noindent
{\bf (II)}
For any $S\in\cS$, 
if node $S$ is shocked at $t=1$, then $S$ fails at $t=1$, and,  
for every $u\in S$, node $u$ fails at time $t=2$: 
\begin{gather*}
\frac{\min\big\{\,\Phi\,\left( b_S-\iota_S+E_S\right) - \gamma\,\left( b_S+E_S \right),\,b_S\,\big\}}{\din(S)} 
> 
\gamma\,\left( b_u+E_u \right) \\
\,\,\,\equiv\,\,\,
\frac{\min\big\{\,\Phi\,(2+E_S) - \gamma\,(3+E_S),\,3\,\big\}}{|S|} > \gamma\,E_u
\end{gather*}
The above inequality is satisfied if:
\begin{gather}
\Phi\,(2+E_S) > \gamma\,(\,3+E_S+|S|\,E_u\,)
\label{eq2-1-new} 
\\
\Phi\,(2+E_S) - \gamma\,(3+E_S)\leq 3
\label{eq2-2-new} 
\end{gather}

\noindent
{\bf (III)}
For any $u\in\cU$, consider the node $u$, and let $S_1,S_2,\dots,S_p\in\cS$ be the $p$ sets that contain $u$. Then, we require that 
if the node $\mathfrak{B}$ is shocked at $t=1$ then $\mathfrak{B}$ fails at $t=1$, 
every node among the set of nodes $\{\,S_1,S_2,\dots,S_p\,\}$ that was not shocked at $t=1$ fails at $t=2$, 
but the node $u$ does not fail if the none of the nodes $u,S_1,S_2,\dots,S_p$ were shocked, 
This is satisfied provided the following inequalities hold:
\begin{description}
\item[(III-1)]
Any node among the set of nodes $\{\,S_1,S_2,\dots,S_p\,\}$ that was not shocked at $t=1$ fails at $t=2$. This is satisfies provided for any set $S\in\cS$ the following holds:
\begin{multline*}
\frac{\min\big\{\,\Phi\,\left( b_{\mathfrak{B}}-\iota_{\mathfrak{B}}+E_{\mathfrak{B}} \right) - \gamma\,\left( b_{\mathfrak{B}} +E_{\mathfrak{B}}\,\right),\,b_{\mathfrak{B}}\,\big\}}{\din(\mathfrak{B})} 
> 
\gamma\,\left( b_{S}+E_{S} \right) 
\\
\,\,\equiv\,\,
\min\left\{\,(\Phi-\gamma)\left( 1+\frac{E_{\mathfrak{B}}}{m} \right),\,1\,\right\} > \gamma\,(3+E_S)
\end{multline*}
The above inequality is satisfied provided:
\begin{gather}
(\Phi-\gamma) \left( 1+\frac{E_{\mathfrak{B}}}{m} \right) > \gamma\,(3+E_S)
\,\,\,\equiv\,\,\,
\Phi \left( 1+ \frac{E_{\mathfrak{B}}}{m} \right) > \gamma\, \left( 4+E_S+  \frac{E_{\mathfrak{B}}}{m} \right)
\label{eq3-new} 
\\
1 > \gamma\,(3+E_S) 
\,\,\,\equiv\,\,\,
\gamma < \frac{1}{3+E_S} 
\label{eq4-new} 
\end{gather}

\item[(III-2)]
$u$ does not fail if the none of the nodes $u,S_1,S_2,\dots,S_p$ were shocked: 
\begin{multline*}
\min \left\{ (\Phi-\gamma) \left(1+ \frac{E_{\mathfrak{B}}}{m} \right),\,1 \right\} -\gamma\,(3+E_S)
\leq
\frac{\gamma\,E_u}{n} 
\\
\,\,\,\equiv\,\,\,
\min \left\{ (\Phi-\gamma) \left(1+ \frac{E_{\mathfrak{B}}}{m} \right),\,1 \right\}
\leq
\gamma\,\left(3+E_{S}+\frac{E_u}{n}\right) 
\end{multline*}
The above inequality is satisfied provided:
\begin{gather}
(\Phi-\gamma) \left(1+ \frac{E_{\mathfrak{B}}}{m} \right)
\leq
\gamma\,\left(3+E_{S}+\frac{E_u}{n}\right) 
\,\,\,\equiv\,\,\,
\Phi\, \left(1+ \frac{E_{\mathfrak{B}}}{m} \right)
\leq
\gamma\,\left(4+E_S+ \frac{E_{\mathfrak{B}}}{m} +\frac{E_u}{n}\right) 
\label{eq5-new} 
\\
(\Phi-\gamma) \left(1+ \frac{E_{\mathfrak{B}}}{m} \right) \leq 1 
\,\,\,\equiv\,\,\,
\gamma
\geq
\Phi\,-\frac{1}{1+ \frac{E_{\mathfrak{B}}}{m} }
\label{eq6-new} 
\end{gather}
\end{description}
There are many choices of parameters $\gamma$, $\Phi$ and $E_y$'s satisfying 
Equations~\eqref{eq1-new}--\eqref{eq6-new}; we exhibit just one:
\[
\forall S\in\cS\colon E_S=0
\,\,\,\,\,\,\,\,\,\,\,\,\,\,\,\,\,\,\,
E_{\mathfrak{B}}=0
\,\,\,\,\,\,\,\,\,\,\,\,\,\,\,\,\,\,\,
\forall u\in\cU\colon E_u=\frac{1}{100n}
\,\,\,\,\,\,\,\,\,\,\,\,\,\,\,\,\,\,\,
\gamma=0.1
\,\,\,\,\,\,\,\,\,\,\,\,\,\,\,\,\,\,\,
\Phi=0.4+\frac{1}{n^{10000}} 
\]
Suppose that $\cS'\subset \cS$ is a solution of \SC. 
Then, we shock the node $\mathfrak{B}$ and the nodes $S$ for each $S\in\cS'$. 
By {\bf (I)} and {\bf (II)} the node $\mathfrak{B}$ and the nodes $S$ for each $S\in\cS'$ 
fails at $t=1$, and by {\bf (II)} the nodes $u$ for every $u\in\cU$ fails $t=2$.
Thus, we obtain a solution of $G$ by shocking $|\cS'|+1$ nodes.

Conversely, consider a solution of the \nam\ problem on $G$.
If a node $u$ for some $u\in\cU$ was shocked, we can instead shock the node $S$ for any set $S$ that contains 
$a$, which by {\bf (II)} still fails all the nodes in the network and does not increase the number of shocked nodes.
Thus, after such normalizations, we may assume that the shocked nodes consist of $\mathfrak{B}$ and 
a subset $\cS'\subseteq\cS$ of nodes.
By {\bf (II)} and {\bf (III)} for every node $u\in\cU$ at least one set that contains $u$ must be in $\cS'$.
Thus, the collection of sets in $\cS'$ form a cover of $\cU$ of size $|cS'|$.
\end{proof}

\section{Heterogeneous Networks, $\pmb{\mbox{{{{\sc Stab$_{2,\Phi}$}}}}}$, Logarithmic Approximation} 

For any positive real $x>0$, let $\overline{x}=\max\left\{x,\nicefrac{1}{x}\right\}$ and $\underline{x}=\min\left\{x,\nicefrac{1}{x}\right\}$.
Let 
$w_{\min}=\min_{e\colon w(e)>0} \big\{w(e)\big\}$, 
\linebreak
$w_{\max}=\max_e \big\{w(e)\big\}$, 
$\alpha_{\min}=\min_{v\colon \alpha_v>0} \big\{\alpha_v\big\}$, and 
$\alpha_{\max}=\max_v \big\{\alpha_v\big\}$.

\begin{theorem}\label{log2}
$\mathbf{\mbox{{{{\sc Stab$_{2,\Phi}$}}}}}$ admits 
a poly-time algorithm with 
approximation ratio
\linebreak
$\mathrm{O}\left( \log \dfrac{n\,\,\,\overline{E}\,\,\,\overline{w_{\max}}\,\,\,\overline{w_{\min}}\,\,\,\overline{\alpha_{\max}}}
{\Phi\,\,\gamma\,\, (\Phi-\gamma)\,\,\underline{E}\,\,\underline{w_{\min}}\,\,\underline{\alpha_{\min}}\,\,\underline{w_{\max}} } \, \right)$.
\end{theorem}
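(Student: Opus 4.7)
The proof will follow the template of Theorem~\ref{log1}, reformulating the stability problem for $T=2$ as a positive integer linear program and applying the greedy algorithm of Dobson~\cite{D82}, with the bookkeeping on coefficients redone to accommodate the heterogeneous weights $w(e)$ and shares $\alpha_v$. As a first step, exactly as in Theorem~\ref{log1}, I would observe that any node $v$ with $\Phi e_v < 0$ need not be shocked in an optimal solution, since shocking $v$ only raises its equity and $v$ propagates no shock onward.

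Next, I would define the heterogeneous analog of $\delta_{v,u}$: set $\delta_{v,v} = \max\{0,\Phi e_v\}$; for an edge $(u,v)\in F$ with $\Phi e_v > c_v$, set $\delta_{v,u} = \frac{\min\{\Phi e_v - c_v,\,b_v\}\cdot w((u,v))}{b_v}$ to reflect the fact that the loss of a failed bank is distributed to its creditors in proportion to their edge weights; and $\delta_{v,u}=0$ otherwise. Then, at $t=2$, node $u$ fails iff $\sum_{v\in\vs}\delta_{v,u} > c_u$, so computing $\vi^\ast(G,2)$ reduces to the same covering ILP~\eqref{ip1} as before. Rescaling each constraint by $\zeta = \min_{u}\{\min_{v}\delta_{v,u},\,c_u\}$ (the minimum being over positive values) gives an instance with all nonzero coefficients at least $1$, on which~\cite[Theorem~4.1]{D82} delivers an approximation ratio of $2 + \ln n + \ln\max_v \sum_u \delta_{v,u}/\zeta$.

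The heart of the proof is then an estimation of $\zeta$ from below and of $\max_v \sum_u \delta_{v,u}$ from above in terms of the stated parameters. I would expand $\Phi e_v = \Phi(b_v-\iota_v+\alpha_v E)$, $\Phi e_v - c_v = (\Phi-\gamma)(b_v+\alpha_v E) - \Phi\iota_v$, and $c_u = \gamma(b_u+\alpha_u E)$, then use the obvious ranges $\alpha_{\min}\le\alpha_v\le\alpha_{\max}$, $w_{\min}\le w(e)\le w_{\max}$, and $0\le b_v,\iota_v\le n\,w_{\max}$ to get explicit lower bounds on each positive $\Phi e_v$, $\Phi e_v - c_v$, $c_u$, and on each weight-ratio $w((u,v))/b_v\ge w_{\min}/(n\,w_{\max})$, and correspondingly an upper bound $\delta_{v,u} = \mathrm{O}\bigl((\Phi-\gamma)(n\,w_{\max} + \alpha_{\max} E)\bigr)$. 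Because individual quantities such as $E$, $w_{\max}$, or $\alpha_{\max}$ can be either above or below $1$ and different estimates use them in the numerator vs.\ the denominator, the appropriate uniform packaging is precisely the $\overline{x}/\underline{x}$ notation of the theorem statement, which allows the bounds to absorb either case without a split of cases.

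The main obstacle is purely bookkeeping: tracking how the minimum of $\Phi e_v - c_v$ and $b_v$ inside $\delta_{v,u}$, together with the weight ratio $w((u,v))/b_v$, combines with the extremes of $\alpha_v$ and $w(e)$ to produce the exact product of $\overline{\cdot}$ and $\underline{\cdot}$ factors appearing in the claimed ratio; I do not foresee any new conceptual difficulty beyond Theorem~\ref{log1}, because the ILP formulation and the greedy analysis transfer verbatim and only the coefficient bounds must be recomputed in the heterogeneous setting.
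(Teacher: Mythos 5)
Your proposal is correct and matches the paper's proof of Theorem~\ref{log2} step for step: reuse the ILP reformulation and the Dobson greedy bound $2+\ln n+\ln\max_v\sum_u\delta_{v,u}/\zeta$ from Theorem~\ref{log1}, then redo the estimation of $\zeta$ and $\max_v\sum_u\delta_{v,u}$ using the heterogeneous $e_v$, $c_v$, $b_v$ together with the ranges $w_{\min}\le w(e)\le w_{\max}$ and $\alpha_{\min}\le\alpha_v\le\alpha_{\max}$, arriving at exactly the stated bound. One minor caveat: you state that a failed bank's loss is distributed to creditors in proportion to edge weights, whereas the shock rule in Table~\ref{t1} divides equally by $\din(v,t)$; since $\din(v)\,w_{\min}\le b_v\le\din(v)\,w_{\max}$, the two conventions change $\delta_{v,u}$ only by a factor lying in $[w_{\min},w_{\max}]$, which the logarithm absorbs, so the claimed $\mathrm{O}(\cdot)$ ratio is unaffected.
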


\begin{proof}
We can reuse the proof of the corresponding approximation for homogeneous networks in Theorem~\ref{log1} to obtain an approximation ratio of 
$2 + \ln n + \ln \left( \max_{v\in V} \left\{ \sum_{u\in V} \frac{\delta_{\,v,u}}{\zeta} \right\} \right)$, 
where $\displaystyle\zeta=\min_{u\in V} \big\{ \, \min_{v\in V} \{\delta_{\,u,v}\},\, c_u\,  \big\}$,
provided we recalculate $\max_{v\in V} \left\{ \sum_{u\in V} \frac{\delta_{\,v,u}}{\zeta} \right\}$.
Then,
\begin{gather*}
\min_{\substack{u\in V \\ \delta_{\,u,u}>0}} \left\{ \delta_{\,u,u} \right\}
=
\min_{\substack{u\in V \\ \delta_{u,u}>0}} \left\{  \Phi\, \left( \sum_{e=(v',u)\in F} \hspace*{-0.15in}w(e) - \hspace*{-0.2in}\sum_{e=(u,v')\in F} \hspace*{-0.15in}w(e) \,\,+ \alpha_v E \right) \right\}
=
\Omega\Big(\,\mathrm{poly} \left( s, \Phi, \underline{E}, \underline{\alpha_{\min}} \right)\,\Big)
\end{gather*}
\begin{multline*}
\min_{u\in V} \min_{\substack{v\in V \\ \delta_{\,u,v}>0 } } \left\{ \delta_{\,u,v} \right\} 
=
\min_{u\in V} \min_{\substack{v\in V \\ \Phi\,e_v>c_v} } 
\left\{\! (\Phi-\gamma) \left( \!\!\! 1+\frac{\alpha_v\,E}{\hspace*{-0.2in}\displaystyle \sum_{e=(v',v)\in F} \hspace*{-0.2in}w(e)} \right) 
- \Phi \,\frac{\hspace*{-0.3in}\displaystyle \sum_{\hspace*{0.3in}e=(v,v')\in F} \hspace*{-0.4in}w(e)}{\hspace*{-0.3in}\displaystyle \sum_{\hspace*{0.3in}e=(v',v)\in F} \hspace*{-0.4in}w(e) } \,\,\right\} 
\\
=
\Omega \Big( \mathrm{poly} \left( n^{-1}, \Phi-\gamma, \Phi , \underline{E} , \underline{w_{\max}} , \underline{w_{\min}}, \underline{\alpha_{\min}}\, \right) \Big)
\end{multline*}
\begin{gather*}
\min_{u\in V} \{c_u\}
=
\min_{u\in V} \left\{ \gamma\,\left(\displaystyle \sum_{e=(v',u)\in F} \hspace*{-0.2in}w(e)
+\alpha_u\,E \right) \right\}
=
\Omega \Big( \mathrm{poly} \left( n^{-1} , \gamma , \underline{E} ,\underline{\alpha_{\min}} , \underline{w_{\min}}\, \right) \, \Big)
\\
\zeta=\min \big\{ \, \min_{u\in V}\min_{v\in V} \{\delta_{\,u,v}\},\, \min_{u\in V} \{c_u\}\,  \big\}
=
\Omega \Big( \mathrm{poly} \left( n^{-1} , \Phi-\gamma , \Phi , \gamma , \underline{E}, \underline{w_{\min}} , \underline{\alpha_{\min}}, \underline{w_{\max}}\, \right) \Big)
\\
\max_{v\in V} \sum_{u\in V} \delta_{\,v,u} 
\leq n\,
\max_{u\in V}
\left\{\! (\Phi-\gamma) \left( \!\!\! 1+\frac{\alpha_v\,E}{\hspace*{-0.2in}\displaystyle \sum_{e=(v',v)\in F} \hspace*{-0.2in}w(e)} \right) 
- \Phi \,\frac{\hspace*{-0.3in}\displaystyle \sum_{\hspace*{0.3in}e=(v,v')\in F} \hspace*{-0.4in}w(e)}{\hspace*{-0.3in}\displaystyle \sum_{\hspace*{0.3in}e=(v',v)\in F} \hspace*{-0.4in}w(e) } \,\,\right\} 
=
\text{O} \Big(\,\mathrm{poly} \left( n , \overline{E} , \overline{w_{\max}} , \overline{w_{\min}} , \overline{\alpha_{\max}}\, \right) \, \Big)
\end{gather*}
and thus, 
\[
\max_{v\in V} \left\{ \sum_{u\in V} \frac{\delta_{\,v,u}}{\zeta} \right\} = \mathrm{O} \left( \mathrm{poly} \left( n ,  {\Phi}^{-1}, {\gamma}^{-1}, {(\Phi-\gamma)}^{-1} , \overline{E}, 
{\underline{E}}^{-1} , \overline{w_{\max}} , \overline{w_{\min}}, \overline{\alpha_{\max}} , {\underline{w_{\min}}}^{-1}, {\underline{\alpha_{\min}}}^{-1} , {\underline{w_{\max}}}^{-1} \right) \right)
\]
giving the desired approximation bound.
\end{proof}

\section{Heterogeneous Networks, $\pmb{\mbox{\nam}}$, $T>3$, Poly-logarithmic Inapproximability}

\begin{theorem}\label{hetero-thm2}
Assuming $\NP\not\subseteq\mathsf{DTIME}\left(n^{\mathrm{poly}(\log n)}\right)$, 
for any constant $0<\eps<1$ and any $T>3$, it is impossible to approximate $\vi^\ast(G,T)$
within a factor of $2^{\log^{1-\eps}n}$ in polynomial time even if $G$ is a DAG.
\end{theorem}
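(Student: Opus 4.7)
The plan is to reduce from \minrep, which is known to be inapproximable within a factor of $2^{\log^{1-\eps}n}$ unless $\NP\subseteq\mathsf{DTIME}\left(n^{\mathrm{poly}(\log n)}\right)$. Recall that an instance of \minrep\ consists of a bipartite graph with vertex sets $\vl$ and $\vr$, each partitioned into groups; a solution selects a minimum number of ``representative'' vertices so that every group-pair joined by at least one edge contains a pair of selected representatives spanning such an edge. The two provers of the underlying two-prover protocol correspond to the selections on $\vl$ and $\vr$, which is exactly the cooperative structure we wish to mimic: in the constructed banking network $G$, two nodes (one encoding a left vertex, one encoding a right vertex) must \emph{simultaneously} be shocked in order to fail a common downstream ``witness'' node, so that covering a group-pair in \minrep\ corresponds to failing a specific node in $G$.

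The construction would build a layered DAG, roughly along the following lines. For each vertex $v\in\vl\cup\vr$, introduce a ``selector'' gadget consisting of a source node and an internal booster node so that shocking the source propagates through the gadget and reaches a distinguished ``delivery'' node at the bottom. For each \minrep\ edge $(\ell,r)$ with $\ell\in\vl$ and $r\in\vr$, introduce a merge/witness node $x_{\ell,r}$ that receives shares of shocks emanating from the delivery nodes of $\ell$ and $r$; the heterogeneous weights $w(e)$ and external-asset shares $\alpha_v$ must be chosen so that $x_{\ell,r}$ fails only when \emph{both} upstream deliveries have fired (analogous to \FI{ex1}, where a single shock is insufficient but two cooperating shocks put a node down). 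Finally, for each group-pair $(L,R)$ with at least one edge, install a sink aggregator $y_{L,R}$ that fails iff at least one witness $x_{\ell,r}$ with $\ell\in L$, $r\in R$ fails; add a global super-sink that demands all aggregators to fail. The depth of this DAG is a small constant, comfortably accommodated by $T>3$, and Proposition~\ref{obs1}(b) ensures no further propagation beyond that.

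The heart of the proof is the parameter engineering. One must choose $\gamma$, $\Phi$, the edge weights $\mathbf{w}$ (one class of weight per edge ``type'' in the gadgetry), and the external-asset shares $\pmb{\alpha}$ (again one class per node type) so that the following list of monotonicity conditions all hold simultaneously: every source/selector node fails when individually shocked; no ``spurious'' failure happens from partial shocks (e.g., a single delivery does not fail its witness, a single witness does not fail its aggregator); every desired cascade does happen (two cooperating deliveries do fail the witness; any failing witness does fail its aggregator; all failing aggregators do kill the super-sink); and sink nodes should not fail if shocked directly, so as to force any solution to shock the source nodes of selectors. Formalizing each such threshold as a linear inequality in the symbolic parameters yields the announced system of $22$ linear (in)equalities; the existence of a strictly feasible solution, together with the polynomial magnitudes of the coefficients, must be checked (this parallels the explicit choice $\E=1,\gamma=0.23,\Phi=0.7$ made in the proof of Theorem~\ref{apx1}, only much more involved).

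The main obstacle is precisely this parameter-engineering step: with heterogeneous weights we have enough freedom to encode cooperation, but we must simultaneously rule out all cheating strategies (e.g., shocking a sink or a witness node directly, or shocking a non-minimal selection that still happens to kill the network faster than shocking lifted \minrep\ representatives). Once a strictly feasible setting of parameters is fixed, the gap preservation follows by a standard two-direction argument: a \minrep\ solution of size $k$ yields an \nam\ solution of size $k+O(1)$ by shocking the selector sources corresponding to the chosen representatives (plus a constant number of mandatory nodes, e.g., a super-source dictated by Proposition~\ref{obs1}(a)); conversely, given any \nam\ solution, a normalization argument replaces shocks on witnesses, aggregators, or internal gadget nodes by shocks on selector sources without increasing the size, after which the shocked selectors must induce a valid \minrep\ cover. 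Since $|V(G)|$ is polynomial in the \minrep\ instance size, the $2^{\log^{1-\eps}n}$ hardness gap transfers with only a negligible loss in the exponent, yielding the claimed inapproximability for every $T>3$.
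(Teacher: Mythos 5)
Your high-level plan matches the paper's: reduce from \minrep, build a shallow layered DAG in which shocking a left vertex and a right vertex together sends enough combined shock downstream to witness a super-edge, and encode all the required threshold behaviors as a system of linear inequalities in $\gamma$, $\Phi$, $\mathbf{w}$, $\pmb{\alpha}$ (the paper indeed ends up with $25$ numbered inequalities, advertised as ``$22$ symbolic linear equations''). The completeness/soundness shape you describe, including the normalization that replaces shocks on downstream nodes by shocks on selectors, is also what the paper does, up to at most a factor $2$ in the number of shocked nodes.

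However, there is a genuine gap in your construction that the paper's design specifically has to solve. In \nam, \emph{every} node must fail. You arrange for each witness node $x_{\ell,r}$ to fail ``only when both upstream deliveries have fired,'' and then for the aggregator $y_{L,R}$ to fail once any witness beneath it fails. But then any solution must fail all $x_{\ell,r}$, which forces every \minrep\ edge $(\ell,r)$ to have both endpoints selected --- a much stronger requirement than covering every super-edge, and no longer an encoding of \minrep. The paper avoids this by introducing a second mandatory shock source (a ``side'' super-node $\vsi$) and an auxiliary chain $\vsi \to \fgerightB_j \to \fgeleftB_j \to p_j$ feeding each edge-node $p_j = f_{\ovr{u},\ovr{v}}$: this guarantees that \emph{all} edge-nodes eventually fail (at $t=4$) regardless of whether their endpoints were shocked, so they never gate the death of the network. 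The parameters are then tuned so that the shock arriving at the super-edge node $\ovr{h_{i,j}}$ exceeds its threshold only when the stronger, earlier ($t=2$) shock caused by explicitly shocking both $\ovr{u}$ and $\ovr{v}$ propagates through $f_{\ovr{u},\ovr{v}}$; the weaker, later ($t=4$) shock from the $\vsi$-chain plus the ``free'' cascade through $\vt$ is calibrated to be insufficient. Your ``global super-sink'' does not do this job --- it only aggregates; it cannot make upstream nodes fail. Without this back-up mechanism (and the distinction between early strong cascades and late weak ones), your reduction does not encode \minrep, so the claimed gap would not transfer. You would need something equivalent to the paper's $\vt$/$\vsi$ machinery, which is the real technical content here and is far from obvious.
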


\begin{proof}
The \minrep problem (with minor modifications from the original setup) is defined as follows.
We are given a bipartite graph $G=(\vl,\vr,F)$ such that the degree of every node of $G$ 
is at least $10$, a partition of
$\vl$ into $\frac{|\vl|}{\alpha}$ equal-size subsets $\vl_1,\vl_2,\dots,\vl_\alpha$, and a partition of $\vr$
into $\frac{|\vr|}{\beta}$ equal-size subsets $\vr_1,\vr_2,\dots,\vr_\beta$. 

These partitions define a natural ``bipartite super-graph'' $\GS=(\VS,\ES)$ in the following manner.
$\GS$ has a ``super-node'' for every $\vl_i$ (for $i=1,2,\dots,\alpha$) and for every $\vr_j$ (for $j=1,2,\dots,\beta$). 
There exists an ``super-edge'' $h_{i,j}$ between the super-node for $\vl_i$ and the
super-node for $\vr_j$ if and only if there exists $u\in \vl_i$ and $v\in \vr_j$ such that
$\{u,v\}$ is an edge of $G$. 
A pair of nodes $u$ and $v$ of $G$ ``witnesses'' the super-edge $h_{i,j}$ of $H$ provided 
$u$ is in $\vl_i$, $v$ is in $\vr_j$ and the edge $\{u,v\}$ exists in $G$, and a set of nodes $V'\subseteq V$ of $G$
witnesses a super-edge if and only if there exists at least one pair of nodes in $S$ that witnesses
the super-edge. 

The goal of \minrep is to find $V_1\subseteq\vl$ and $V_2\subseteq\vr$
such that $V_1\cup V_2$ witnesses {\em every} super-edge of $H$ and the {\em size} of the solution, 
namely $|V_1|+|V_2|$, is {\em minimum}.
For notational simplicity, let $n=|\vl|+|\vr|$.
The following result is a consequence of Raz's parallel repetition theorem~\cite{R98,KKL04}. 

\begin{theorem}~{\rm\cite{KKL04}}
Let $L$ be any language in $\NP$ and $0<\delta<1$ be any constant. Then, there exists a reduction
running in $n^{\mathrm{poly}(\log n)}$ time that, given an input instance $x$ of $L$, produces an instance of \minrep such that:
\begin{itemize}
\item
if $x\in L$ then \minrep has a solution of size
$\alpha+\beta$; 

\item
if $x\not\in L$ then \minrep has a solution of size at least 
$(\alpha+\beta)\cdot 2^{\log^{1-\delta} n}$.
\end{itemize}
\end{theorem}
Thus, the above theorem provides a $2^{\log^{1-\delta} n}$-inapproximability 
for \minrep under the complexity-theoretic assumption of 
$\NP\not\subseteq\mathsf{DTIME}\left(n^{\rm polylog(n)}\right)$.

\begin{figure}[ht]
\begin{pspicture}(-0.5,-7.8)(32,5)
\psset{xunit=0.45cm,yunit=0.5cm}
\rput(30,8){\framebox{\parbox{1.5in}{$\pmb{G=(\vl,\vr,F)}$ \\ $\pmb{\GS=(\VS,\ES)}$}}}
\rput(30,9.7){\bf\minrep instance}
\rput(30,5.5){\scalebox{2}[2]{\rotatebox{225}{$\pmb{\Longrightarrow}$}}}
\pscircle[linewidth=0.5pt,origin={0,-11},fillstyle=none,fillcolor=white](0,0){0.15}
\rput[ml](0.6,-11.2){$\displaystyle\pmb{\vsi}$}
\rput(3,-5.5){$\pmb{\fgerightB}$}
\psframe[origin={0,0},linewidth=0.5pt,linestyle=dashed,linecolor=black](2.1,-6)(4,-16)
\rput(5.15,-5.5){$\pmb{\fgeleftB}$}
\psframe[origin={0,0},linewidth=0.5pt,linestyle=dashed,linecolor=black](4.3,-6)(6,-16)
\pscircle[linewidth=0.5pt,origin={3,-8},fillstyle=none,fillcolor=white](0,0){0.15}
\rput[ml](2.5,-7.4){$\pmb{\fgerightB_1}$}
\pscircle[linewidth=0.5pt,origin={3,-9},fillstyle=none,fillcolor=white](0,0){0.15}
\rput[ml](2.5,-9.7){$\pmb{\fgerightB_2}$}
\rput[mc](3,-11){$\displaystyle\pmb{\vdots}$}
\rput[mc](3,-12){$\displaystyle\pmb{\vdots}$}
\pscircle[linewidth=0.5pt,origin={3,-14},fillstyle=none,fillcolor=white](0,0){0.15}
\rput[ml](2.5,-14.7){$\pmb{\fgerightB_{|F|}}$}
\pscircle[linewidth=0.5pt,origin={5,-8},fillstyle=none,fillcolor=white](0,0){0.15}
\rput[ml](4.5,-7.4){$\pmb{\fgeleftB_1}$}
\pscircle[linewidth=0.5pt,origin={5,-9},fillstyle=none,fillcolor=white](0,0){0.15}
\rput[ml](4.5,-9.7){$\pmb{\fgeleftB_2}$}
\rput[mc](5,-11){$\displaystyle\pmb{\vdots}$}
\rput[mc](5,-12){$\displaystyle\pmb{\vdots}$}
\pscircle[linewidth=0.5pt,origin={5,-14},fillstyle=none,fillcolor=white](0,0){0.15}
\rput[ml](4.5,-14.7){$\pmb{\fgeleftB_{|F|}}$}
\psline[linewidth=0.5pt,arrowsize=1.5pt 6,linecolor=black]{<-}(0.1,-10.8)(2.75,-7.9)
\psline[linewidth=0.5pt,arrowsize=1.5pt 6,linecolor=black]{<-}(0.25,-11)(2.75,-9)
\psline[linewidth=0.5pt,arrowsize=1.5pt 6,linecolor=black]{<-}(0.1,-11.2)(2.75,-14.1)
\psline[linewidth=0.5pt,arrowsize=1.5pt 6,linecolor=black]{<-}(3.25,-8)(4.75,-8)
\psline[linewidth=0.5pt,arrowsize=1.5pt 6,linecolor=black]{<-}(3.25,-9)(4.75,-9)
\psline[linewidth=0.5pt,arrowsize=1.5pt 6,linecolor=black]{<-}(3.25,-14)(4.75,-14)
\pscurve[linewidth=0.5pt,arrowsize=1.5pt 6,linecolor=black]{<-}(5.25,-7.9)(7,-8)(8.8,-5)
\pscurve[linewidth=0.5pt,arrowsize=1.5pt 6,linecolor=black]{<-}(5.25,-8.9)(7,-9)(9.8,-5)
\pscurve[linewidth=0.5pt,arrowsize=1.5pt 6,linecolor=black]{<-}(5.2,-13.9)(7,-14.5)(10,-14.3)(23.8,-5)
\pscircle[linewidth=0.5pt,origin={16,8},fillstyle=none,fillcolor=white](0,0){0.4}
\rput(16,8){$\mathbf{\vt}$}
\rput(16,9.1){\bf Super-node}
\pscurve[linewidth=0.5pt,arrowsize=1.5pt 4,linecolor=black]{<-}(15.3,8.4)(3,4)(0,0.25)
\pscurve[linewidth=0.5pt,arrowsize=1.5pt 4,linecolor=black]{<-}(15.3,8.1)(4,4)(1,0.25)
\pscurve[linewidth=0.5pt,arrowsize=1.5pt 4,linecolor=black]{<-}(15.3,7.8)(5,4)(3,0.25)
\psline[linewidth=0.5pt,arrowsize=1.5pt 4,linecolor=black]{<-}(15.4,7.6)(7,0.35)
\psline[linewidth=0.5pt,arrowsize=1.5pt 4,linecolor=black]{<-}(15.5,7.5)(12,0.3)
\psline[linewidth=0.5pt,arrowsize=1.5pt 4,linecolor=black]{<-}(15.7,7.4)(13,0.3)
\psline[linewidth=0.5pt,arrowsize=1.5pt 4,linecolor=black]{<-}(15.9,7.3)(15,0.3)
\psline[linewidth=0.5pt,arrowsize=1.5pt 4,linecolor=black]{<-}(16.1,7.3)(17,0.3)
\psline[linewidth=0.5pt,arrowsize=1.5pt 4,linecolor=black]{<-}(16.3,7.3)(18,0.3)
\psline[linewidth=0.5pt,arrowsize=1.5pt 4,linecolor=black]{<-}(16.5,7.4)(20,0.3)
\psline[linewidth=0.5pt,arrowsize=1.5pt 4,linecolor=black]{<-}(16.5,7.4)(24,0.35)
\pscurve[linewidth=0.5pt,arrowsize=1.5pt 4,linecolor=black]{<-}(16.7,7.8)(26,4)(29,0.25)
\pscurve[linewidth=0.5pt,arrowsize=1.5pt 4,linecolor=black]{<-}(16.7,8.1)(27,4)(30,0.25)
\pscurve[linewidth=0.5pt,arrowsize=1.5pt 4,linecolor=black]{<-}(16.7,8.4)(29,4)(32,0.25)
\pscircle[linewidth=0.5pt,origin={0,0},fillstyle=none,fillcolor=lightgray](0,0){0.15}
\psline[linewidth=0.5pt,arrowsize=1.5pt 4,linecolor=black](-0.25,-0.15)(-1.5,-0.75)
\psline[linewidth=0.5pt,arrowsize=1.5pt 4,linecolor=black](-0.15,-0.2)(-0.75,-1.25)
\psline[linewidth=0.5pt,arrowsize=1.5pt 4,linecolor=black](0,-0.3)(0,-1.25)
\pscircle[linewidth=0.5pt,origin={1,0},fillstyle=none,fillcolor=lightgray](0,0){0.15}
\psline[linewidth=0.5pt,arrowsize=1.5pt 4,linecolor=black,origin={1,0}](0.25,-0.15)(0.75,-1.25)
\psline[linewidth=0.5pt,arrowsize=1.5pt 4,linecolor=black,origin={1,0}](-0.15,-0.2)(-0.75,-1.25)
\psline[linewidth=0.5pt,arrowsize=1.5pt 4,linecolor=black,origin={1,0}](0,-0.3)(0,-1.25)
\rput(2,0){$\pmb{\cdots}$}
\pscircle[linewidth=0.5pt,origin={3,0},fillstyle=none,fillcolor=lightgray](0,0){0.15}
\psline[linewidth=0.5pt,arrowsize=1.5pt 4,linecolor=black,origin={3,0}](0.25,-0.15)(0.75,-1.25)
\psline[linewidth=0.5pt,arrowsize=1.5pt 4,linecolor=black,origin={3,0}](-0.15,-0.2)(-0.75,-1.25)
\psline[linewidth=0.5pt,arrowsize=1.5pt 4,linecolor=black,origin={3,0}](0,-0.3)(0,-1.25)
\psframe[origin={0,0},linewidth=1pt,linestyle=dashed,linecolor=gray](-0.5,-0.75)(3.5,0.75)
\rput(-0.6,1){$\pmb{\ovr{\vl_1}}$}
\rput(4,0){$\pmb{\cdots}$}
\rput(5,0){$\pmb{\cdots}$}
\rput(6,0){$\pmb{\cdots}$}
\pscircle[linewidth=0.5pt,origin={7,0},fillstyle=none,fillcolor=lightgray](0,0){0.2}
\rput(6.45,0.5){$\pmb{\ovr{u}}$}
\rput(8,0){$\pmb{\cdots}$}
\rput(9,0){$\pmb{\cdots}$}
\psframe[origin={0,0},linewidth=1pt,linestyle=dashed,linecolor=gray](5.5,-0.75)(9.5,1.25)
\rput(7,1.85){$\pmb{\ovr{\vl_i}}$}
\rput(10,0){$\pmb{\cdots}$}
\rput(11,0){$\pmb{\cdots}$}
\pscircle[linewidth=0.5pt,origin={12,0},fillstyle=none,fillcolor=lightgray](0,0){0.15}
\psline[linewidth=0.5pt,arrowsize=1.5pt 4,linecolor=black,origin={12,0}](-0.25,-0.15)(-1.5,-0.75)
\psline[linewidth=0.5pt,arrowsize=1.5pt 4,linecolor=black,origin={12,0}](-0.15,-0.2)(-0.75,-1.25)
\psline[linewidth=0.5pt,arrowsize=1.5pt 4,linecolor=black,origin={12,0}](0,-0.3)(0,-1.25)
\pscircle[linewidth=0.5pt,origin={13,0},fillstyle=none,fillcolor=lightgray](0,0){0.15}
\psline[linewidth=0.5pt,arrowsize=1.5pt 4,linecolor=black,origin={13,0}](0.25,-0.15)(0.75,-1.25)
\psline[linewidth=0.5pt,arrowsize=1.5pt 4,linecolor=black,origin={13,0}](-0.15,-0.2)(-0.75,-1.25)
\psline[linewidth=0.5pt,arrowsize=1.5pt 4,linecolor=black,origin={13,0}](0,-0.3)(0,-1.25)
\rput(14,0){$\pmb{\cdots}$}
\pscircle[linewidth=0.5pt,origin={15,0},fillstyle=none,fillcolor=lightgray](0,0){0.15}
\psline[linewidth=0.5pt,arrowsize=1.5pt 4,linecolor=black,origin={15,0}](0.25,-0.15)(0.75,-1.25)
\psline[linewidth=0.5pt,arrowsize=1.5pt 4,linecolor=black,origin={15,0}](-0.15,-0.2)(-0.75,-1.25)
\psline[linewidth=0.5pt,arrowsize=1.5pt 4,linecolor=black,origin={15,0}](0,-0.3)(0,-1.25)
\psframe[origin={0,0},linewidth=1pt,linestyle=dashed,linecolor=gray](11.5,-0.75)(15.5,0.75)
\rput(14.3,1.3){$\pmb{\ovr{\vl_\alpha}}$}
\rput(7.4,3){\scalebox{18.3}[3]{\rotatebox{-90}{\color{gray}$\mathbf{\{}$}}}
\rput(7.5,3.8){$\pmb{\ovr{\vl}}$}
\pscircle[linewidth=0.5pt,origin={17,0},fillstyle=none,fillcolor=darkgray](0,0){0.15}
\psline[linewidth=0.5pt,arrowsize=1.5pt 4,linecolor=black,origin={17,0}](-0.25,-0.15)(-1.1,-0.8)
\psline[linewidth=0.5pt,arrowsize=1.5pt 4,linecolor=black,origin={17,0}](-0.15,-0.2)(-0.75,-1.25)
\psline[linewidth=0.5pt,arrowsize=1.5pt 4,linecolor=black,origin={17,0}](0,-0.3)(0,-1.25)
\pscircle[linewidth=0.5pt,origin={18,0},fillstyle=none,fillcolor=darkgray](0,0){0.15}
\psline[linewidth=0.5pt,arrowsize=1.5pt 4,linecolor=black,origin={18,0}](0.25,-0.15)(0.75,-1.25)
\psline[linewidth=0.5pt,arrowsize=1.5pt 4,linecolor=black,origin={18,0}](-0.15,-0.2)(-0.75,-1.25)
\psline[linewidth=0.5pt,arrowsize=1.5pt 4,linecolor=black,origin={18,0}](0,-0.3)(0,-1.25)
\rput(19,0){$\pmb{\cdots}$}
\pscircle[linewidth=0.5pt,origin={20,0},fillstyle=none,fillcolor=darkgray](0,0){0.15}
\psline[linewidth=0.5pt,arrowsize=1.5pt 4,linecolor=black,origin={20,0}](0.25,-0.15)(0.75,-1.25)
\psline[linewidth=0.5pt,arrowsize=1.5pt 4,linecolor=black,origin={20,0}](-0.15,-0.2)(-0.75,-1.25)
\psline[linewidth=0.5pt,arrowsize=1.5pt 4,linecolor=black,origin={20,0}](0,-0.3)(0,-1.25)
\psframe[origin={17,0},linewidth=1pt,linestyle=dashed,linecolor=gray](-0.5,-0.75)(3.5,0.75)
\rput(20.5,1.4){$\pmb{\ovr{\vr_1}}$}
\rput(21,0){$\pmb{\cdots}$}
\rput(22,0){$\pmb{\cdots}$}
\rput(23,0){$\pmb{\cdots}$}
\pscircle[linewidth=0.5pt,origin={24,0},fillstyle=none,fillcolor=darkgray](0,0){0.2}
\rput(24.55,0.5){$\pmb{\ovr{v}}$}
\rput(25,0){$\pmb{\cdots}$}
\rput(26,0){$\pmb{\cdots}$}
\psframe[origin={17,0},linewidth=1pt,linestyle=dashed,linecolor=gray](5.5,-0.75)(9.5,1.25)
\rput(25,2){$\pmb{\ovr{\vr_j}}$}
\rput(27,0){$\pmb{\cdots}$}
\rput(28,0){$\pmb{\cdots}$}
\pscircle[linewidth=0.5pt,origin={29,0},fillstyle=none,fillcolor=darkgray](0,0){0.15}
\psline[linewidth=0.5pt,arrowsize=1.5pt 4,linecolor=black,origin={29,0}](-0.25,-0.15)(-1.5,-0.75)
\psline[linewidth=0.5pt,arrowsize=1.5pt 4,linecolor=black,origin={29,0}](-0.15,-0.2)(-0.75,-1.25)
\psline[linewidth=0.5pt,arrowsize=1.5pt 4,linecolor=black,origin={29,0}](0,-0.3)(0,-1.25)
\pscircle[linewidth=0.5pt,origin={30,0},fillstyle=none,fillcolor=darkgray](0,0){0.15}
\psline[linewidth=0.5pt,arrowsize=1.5pt 4,linecolor=black,origin={30,0}](0.25,-0.15)(0.75,-1.25)
\psline[linewidth=0.5pt,arrowsize=1.5pt 4,linecolor=black,origin={30,0}](-0.15,-0.2)(-0.75,-1.25)
\psline[linewidth=0.5pt,arrowsize=1.5pt 4,linecolor=black,origin={30,0}](0,-0.3)(0,-1.25)
\rput(31,0){$\pmb{\cdots}$}
\pscircle[linewidth=0.5pt,origin={32,0},fillstyle=none,fillcolor=darkgray](0,0){0.15}
\psline[linewidth=0.5pt,arrowsize=1.5pt 4,linecolor=black,origin={32,0}](0.25,-0.15)(0.75,-1.25)
\psline[linewidth=0.5pt,arrowsize=1.5pt 4,linecolor=black,origin={32,0}](-0.15,-0.2)(-0.75,-1.25)
\psline[linewidth=0.5pt,arrowsize=1.5pt 4,linecolor=black,origin={32,0}](0,-0.3)(0,-1.25)
\psframe[origin={17,0},linewidth=1pt,linestyle=dashed,linecolor=gray](11.5,-0.75)(15.5,0.75)
\rput(32.5,1.4){$\pmb{\ovr{\vr_\beta}}$}
\rput(24.4,3){\scalebox{18.3}[3]{\rotatebox{-90}{\color{gray}$\mathbf{\{}$}}}
\rput(24,3.8){$\pmb{\ovr{\vr}}$}
\pscircle[linewidth=0.5pt,origin={9,-5},fillstyle=none,fillcolor=lightgray](0,0){0.15}
\psline[linewidth=0.5pt,arrowsize=1.5pt 4,linecolor=black,origin={9,-5}](-0.25,0.15)(-1.5,0.75)
\psline[linewidth=0.5pt,arrowsize=1.5pt 4,linecolor=black,origin={9,-5}](-0.15,0.2)(-0.75,1.25)
\psline[linewidth=0.5pt,arrowsize=1.5pt 4,linecolor=black,origin={9,-5}](0,0.3)(0,1.25)
\pscircle[linewidth=0.5pt,origin={10,-5},fillstyle=none,fillcolor=lightgray](0,0){0.15}
\psline[linewidth=0.5pt,arrowsize=1.5pt 4,linecolor=black,origin={10,-5}](0.25,0.15)(1,1.25)
\psline[linewidth=0.5pt,arrowsize=1.5pt 4,linecolor=black,origin={10,-5}](-0.15,0.2)(-0.75,1.25)
\psline[linewidth=0.5pt,arrowsize=1.5pt 4,linecolor=black,origin={10,-5}](0,0.3)(0,1.25)
\rput(11,-5){$\pmb{\cdots}$}
\pscircle[linewidth=0.5pt,origin={12,-5},fillstyle=none,fillcolor=lightgray](0,0){0.15}
\psline[linewidth=0.5pt,arrowsize=1.5pt 4,linecolor=black,origin={12,-5}](0.25,0.15)(1,1.25)
\psline[linewidth=0.5pt,arrowsize=1.5pt 4,linecolor=black,origin={12,-5}](-0.15,0.2)(-0.75,1.25)
\psline[linewidth=0.5pt,arrowsize=1.5pt 4,linecolor=black,origin={12,-5}](0,0.3)(0,1.25)
\psframe[origin={9,-5},linewidth=1pt,linestyle=dashed,linecolor=gray](-0.5,-0.75)(3.5,0.75)
\rput(13,-6.4){$\pmb{\ovr{F_{1,1}}}$}
\rput(13,-5){$\pmb{\cdots}$}
\rput(14,-5){$\pmb{\cdots\cdot\cdot\cdot}$}
\pscircle[linewidth=0.5pt,origin={16,-5},fillstyle=none,fillcolor=lightgray](0,0){0.45}
\rput(16,-5.1){$\scriptstyle\pmb{{f}_{\,\ovr{u},\ovr{v}}}$}
\psline[linewidth=0.5pt,arrowsize=1.5pt 8,linecolor=black]{<-}(7.1,-0.3)(15.7,-4.2)
\psline[linewidth=0.5pt,arrowsize=1.5pt 8,linecolor=black]{<-}(23.9,-0.3)(16.3,-4.2)
\rput(18,-5){$\pmb{\cdot\cdots}$}
\psframe[origin={9,-5.5},linewidth=1pt,linestyle=dashed,linecolor=gray](5.5,-0.75)(9.5,1.5)
\rput(17.7,-5.7){$\pmb{\ovr{F_{i,j}}}$}
\rput(19,-5){$\pmb{\cdots}$}
\rput(20,-5){$\pmb{\cdots}$}
\pscircle[linewidth=0.5pt,origin={21,-5},fillstyle=none,fillcolor=lightgray](0,0){0.15}
\psline[linewidth=0.5pt,arrowsize=1.5pt 4,linecolor=black,origin={21,-5}](-0.25,0.15)(-1.5,0.75)
\psline[linewidth=0.5pt,arrowsize=1.5pt 4,linecolor=black,origin={21,-5}](-0.15,0.2)(-0.75,1.25)
\psline[linewidth=0.5pt,arrowsize=1.5pt 4,linecolor=black,origin={21,-5}](0,0.3)(0,1.25)
\pscircle[linewidth=0.5pt,origin={22,-5},fillstyle=none,fillcolor=lightgray](0,0){0.15}
\psline[linewidth=0.5pt,arrowsize=1.5pt 4,linecolor=black,origin={22,-5}](0.25,0.15)(1,1.25)
\psline[linewidth=0.5pt,arrowsize=1.5pt 4,linecolor=black,origin={22,-5}](-0.15,0.2)(-0.75,1.25)
\psline[linewidth=0.5pt,arrowsize=1.5pt 4,linecolor=black,origin={22,-5}](0,0.3)(0,1.25)
\rput(23,-5){$\pmb{\cdots}$}
\pscircle[linewidth=0.5pt,origin={24,-5},fillstyle=none,fillcolor=lightgray](0,0){0.15}
\psline[linewidth=0.5pt,arrowsize=1.5pt 4,linecolor=black,origin={24,-5}](0.25,0.15)(1,1.25)
\psline[linewidth=0.5pt,arrowsize=1.5pt 4,linecolor=black,origin={24,-5}](-0.15,0.2)(-0.75,1.25)
\psline[linewidth=0.5pt,arrowsize=1.5pt 4,linecolor=black,origin={24,-5}](0,0.3)(0,1.25)
\psframe[origin={9,-5},linewidth=1pt,linestyle=dashed,linecolor=gray](11.5,-0.75)(15.5,0.75)
\rput(25,-6.25){$\pmb{\ovr{F_{\alpha,\beta}}}$}
%
\pscircle[linewidth=0.5pt,origin={10.5,-13},fillstyle=none,fillcolor=white](0,0){0.4}
\rput(10.5,-13){$\scriptstyle\pmb{\ovr{h_{1,1}}}$}
\rput(12,-13){$\pmb{\cdots}$}
\rput(13,-13){$\pmb{\cdots}$}
\rput(14,-13){$\pmb{\cdots}$}
\rput(15,-13){$\pmb{\cdots}$}
\pscircle[linewidth=0.5pt,origin={16.5,-13},fillstyle=none,fillcolor=white](0,0){0.4}
\rput(16.5,-13){$\scriptstyle\pmb{\ovr{h_{i,j}}}$}
\rput(18,-13){$\pmb{\cdots}$}
\rput(19,-13){$\pmb{\cdots}$}
\rput(20,-13){$\pmb{\cdots}$}
\rput(21,-13){$\pmb{\cdots}$}
\pscircle[linewidth=0.5pt,origin={22.5,-13},fillstyle=none,fillcolor=white](0,0){0.4}
\rput(22.5,-13){$\scriptstyle\pmb{\ovr{h_{\alpha,\beta}}}$}
\rput(16.6,-14.7){\scalebox{15.2}[4]{\rotatebox{90}{\color{gray}$\mathbf{[}$}}}
\rput(16.6,-15.4){$\pmb{\big|\,\ES\,\big|}$}
\psline[linewidth=0.5pt,arrowsize=1.5pt 8,linecolor=black]{<-}(9,-5.15)(10.2,-12.35)
\psline[linewidth=0.5pt,arrowsize=1.5pt 8,linecolor=black]{<-}(10,-5.15)(10.5,-12.2)
\rput(11,-9){$\pmb{\cdots}$}
\psline[linewidth=0.5pt,arrowsize=1.5pt 8,linecolor=black]{<-}(12,-5.15)(10.7,-12.35)
\psline[linewidth=0.5pt,arrowsize=1.5pt 8,linecolor=black,linestyle=solid](15.8,-11.5)(16.2,-12.2)
\psline[linewidth=0.5pt,arrowsize=1.5pt 8,linecolor=black]{<-}(16,-5.8)(16.5,-12.2)
\psline[linewidth=0.5pt,arrowsize=1.5pt 8,linecolor=black,linestyle=solid](17,-11.5)(16.8,-12.2)
\rput(15.7,-9){$\pmb{\cdots}$}
\rput(17,-9){$\pmb{\cdots}$}
\psline[linewidth=0.5pt,arrowsize=1.5pt 8,linecolor=black]{<-}(21,-5.15)(22.2,-12.35)
\psline[linewidth=0.5pt,arrowsize=1.5pt 8,linecolor=black]{<-}(22,-5.15)(22.5,-12.2)
\rput(23,-9){$\pmb{\cdots}$}
\psline[linewidth=0.5pt,arrowsize=1.5pt 8,linecolor=black]{<-}(24,-5.15)(22.7,-12.35)
\end{pspicture}
\caption{\label{minrep}Reduction of an instance of \minrep to \nam\ for heterogeneous networks.}
\end{figure}
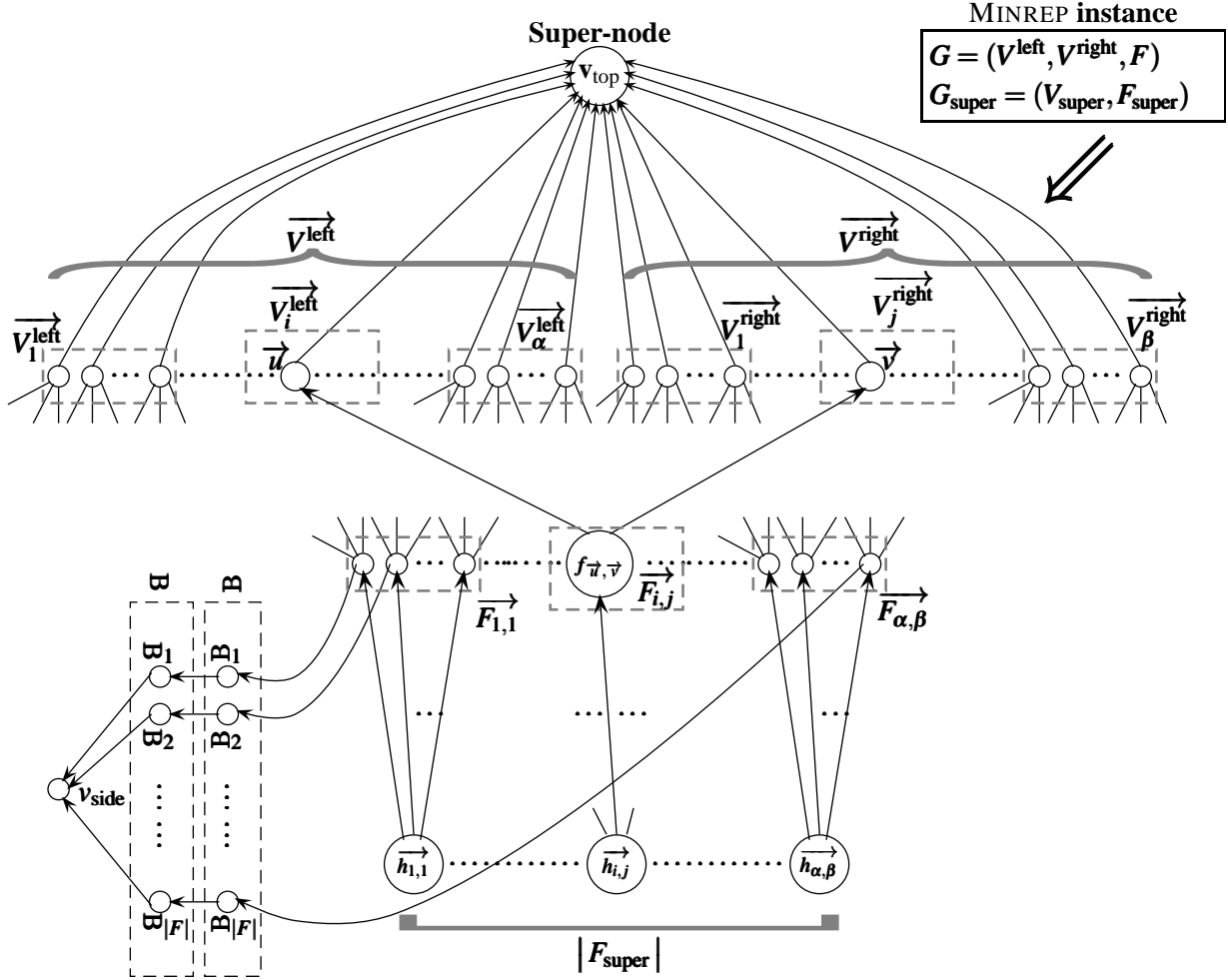

Let $F_{i,j}=\left\{\{u,v\}\,\big|\,u\in\vl_i,\,v\in\vr_j,\,\{u,v\}\in F\right\}$.
We now show our construction of an instance of \nam\ from an instance of \minrep$\!\!\!$.
Our directed graph $\ovr{G}=(\ovr{V},\ovr{F})$ for \nam\ is constructed as follows (see~\FI{minrep} for an illustration):

\vspace*{0.1in}
\noindent
{\sf Nodes}: 
\begin{itemize}
\item
For every node $u\in\vl_i$ of $G$ we have a corresponding node $\ovr{u}$ in the set of nodes $\ovr{\vl_i}$ in $\ovr{G}$, and 
for every node $v\in\vr_j$ of $G$ we have a corresponding node $\ovr{v}$ in the set of nodes $\ovr{\vr_j}$ in $\ovr{G}$.
The total number of such nodes is $n$.

\item
For every edge $\{u,v\}$ of $G$ with $u\in\vl_i$ and $v\in\vr_j$, we have a corresponding node $f_{\ovr{u},\ovr{v}}$ in the set 
of nodes $\ovr{F_{i,j}}$ in $\ovr{G}$. There are $|F|$ such nodes.

\item
For every super-edge $h_{i,j}$ of $\GS$, we have a node $\ovr{h_{i,j}}$ in $\ovr{G}$. There are $|\ES|$ such nodes.

\item
We have one ``top super-node'' $\vt$, 
one ``side super-node'' $\vsi$, and $2\,|F|$ additional nodes 
$\fgerightB_1,\fgerightB_2,\dots,\fgerightB_{|F|}$, $\fgeleftB_1,\fgeleftB_2,\dots,\fgeleftB_{|F|}$.
Let $\fgerightB=\cup_{j=1}^{|F|}\fgerightB_j$ and $\fgeleftB=\cup_{j=1}^{|F|}\fgeleftB_j$.
\end{itemize}
Thus, $n+3|F|+2<|\ovr{V}|=n+|F|+|\ES|+2+2\,|F|<n+4\,|F|+2$. 

\vspace*{0.1in}
\noindent
{\sf Edges}: 
\begin{itemize}
\item
For every node $u$ of $G$, we have an edge $\left(u,\vt\right)$ in $\ovr{G}$. There are $n$ such edges.

\item
For every edge $\{u,v\}$ of $G$, we have two edges $(f_{\ovr{u},\ovr{v}},\ovr{u})$ and $(f_{\ovr{u},\ovr{v}},\ovr{v})$ in $\ovr{G}$. There are $2\,|F|$ such edges.

\item
For every super-edge $h_{i,j}$ of $\GS$ and for every edge $f_{u,v}$ in $F_{i,j}$, we have an edge $\left(\ovr{h_{i,j}},f_{\ovr{u},\ovr{v}}\right)$ in $\ovr{G}$. 
There are $|F|$ such edges.

\item
Let $p_1,p_2,\dots,p_{|F|}$ be any arbitrary ordering of the edges in $F$. Then, 
for every $j=1,2,\dots,|F|$, we have the edges $(\vsi,\fgerightB_j)$, $(\fgerightB_j,\fgeleftB_j)$ and $(\fgeleftB_j,p_j)$.
The total number of such edges is $3|F|$. 
\end{itemize}
Thus, $|\ovr{E}|=n+6\,|F|$.

\vspace*{0.1in}
\noindent
{\sf Distribution of internal assets}: 
We set the weight of every edge to $1$, 
Thus, $I=n+\sum_{u\in\vl\cup\vr}\deg(u)+4\,|F|=n+6\,|F|$. 

\vspace*{0.1in}
\noindent
Let $\deg(u)\geq 10$ be the degree of node $u\in\vl\cup\vr$. Observe that:
\begin{itemize}
\item
$b_{\vt}=n$, and $\iota_{\vt}=0$.
Since $\dout\left(\vt\right)=0$, by Proposition~\ref{obs1}(a) the node $\vt$ must be shocked to make the network fail.

\item
$b_{\vsi}=|F|$, and $\iota_{\vsi}=0$.
Since $\dout\left(\vsi\right)=0$, by Proposition~\ref{obs1}(a) the node $\vsi$ must be shocked to make the network fail.

\item
For any $u\in\vl\cup\vr$, $b_{\ovr{u}}=\deg(u)$ and $\iota_{\ovr{u}}=1$.

\item
For any node $f_{\ovr{u},\ovr{v}}$, $b_{f_{\ovr{u},\ovr{v}}}=1$ and $\iota_{f_{\ovr{u},\ovr{v}}}=2$.

\item
For every node $\ovr{h_{i,j}}$, $b_{\ovr{h_{i,j}}}=0$ and $\iota_{\ovr{h_{i,j}}}=|F_{i,j}|$.
Since $\din\left(\ovr{h_{i,j}}\right)=0$ for any node $\ovr{h_{i,j}}$, if such a node is shocked, no part of the shock is propagated to any other node in the network.

\item
For every $j$, $b_{\fgeleftB_j}=\iota_{\fgeleftB_j}=b_{\fgerightB_j}=\iota_{\fgerightB_j}=1$. 

\item
Since the longest directed path in $G$ has $4$ edges, by Proposition~\ref{obs1}(b) no new node in $G$ fails for $t>4$.
\end{itemize}

{
\begin{table}[p]
\footnotesize
\vspace*{-0.4in}
\hspace*{-0.7in}
\begin{tabular}{rl|rl|rl|rl}
\toprule
$\Phi > \gamma$ & \eqref{eq1-het} & $\displaystyle\Phi > \gamma\, \left( 1 + \frac { \deg(u) + E_{\ovr{u}} } { 1 + \frac{E_{\vt}}{n} } \right)$ & \eqref{eq2-het}
            & $\displaystyle\Phi \leq \gamma + \frac { 1 } { 1 + \frac { E_{\vt} } { n } }$ & \eqref{eq3-het} 
            & $\displaystyle\Phi > \gamma\, \left(\, 1 + \frac {1} {\deg(u) - 1 + E_{\ovr{u}} } \right)$ & \eqref{eq4-het} \\ \midrule
\multicolumn{7}{c}{$\displaystyle \frac { \Phi\, \left( \deg(u) - 1 + E_{\ovr{u}} \right)  - \gamma\,( \deg(u) + E_{\ovr{u}}\,) } { \deg(u) }
\,+\,
\frac { \Phi\, \left( \deg(v) - 1 + E_{\ovr{v}} \right)  - \gamma\,( \deg(v) + E_{\ovr{v}}\,) } { \deg(v) }
>
\gamma \, \left( 1 + E_{f_{\ovr{u},\ovr{v}}} \right)$} & \eqref{eq5-het} \\ \midrule
\multicolumn{7}{c}{$\displaystyle\Phi \leq \gamma \, \left( \frac { \deg(u) + E_{\ovr{u}} } { \deg(u) - 1 + E_{\ovr{u}} }   \right) + \frac { \deg(u) } { \deg(u) - 1 + E_{\ovr{u}} }$} & 
                     \multicolumn{1}{l}{\eqref{eq6-het}} \\ \midrule

\multicolumn{7}{c}{$\displaystyle\frac { \Phi\,( \deg(u) - 1 + E_{\ovr{u}} \,) - \gamma \,( \deg(u) + E_{\ovr{u}} ) } { \deg(u) }
\,+\,
\frac { \Phi\,( \deg(v) - 1 + E_{\ovr{v}} \,) - \gamma \,( \deg(v) + E_{\ovr{v}} ) } { \deg(v) }
\,-\,
\gamma \, \left( 1 + E_{f_{\ovr{u},\ovr{v}}} \right)
>
\gamma \, E_{\ovr{h_{i,j}}}$} & \eqref{eq7-het}  \\ \midrule
$\gamma \, E_{\ovr{h_{i,j}}} < 1$ & \eqref{eq8-het}  & $\displaystyle\Phi \leq \gamma\, \left( 1 +\frac {1} {E_{\fgerightB_j}} \,\right)$ & \eqref{eq9-het} 
           & \multicolumn{2}{c}{$\displaystyle\Phi>\frac {\gamma \, \left( 2 + \frac{E_{\vsi}}{|F|} + E_{\fgerightB_j} \, \right) } { \left(1 + \frac{E_{\vsi}}{|F|} \right) }$} & 
           \multicolumn{1}{l}{\eqref{eq10-het}} &  \\ \midrule
\multicolumn{3}{r}{ $\displaystyle\Phi>\gamma \, 
                     \left( 
                     \frac
                     { ( |F| +E_{\vsi}\,) }
                     { 3\,|F| + |F| E_{\fgerightB_j} + |F| E_{\fgeleftB_j} + E_{\vsi} }
                     \right)$}  & \eqref{eq13-het} & 
{$\displaystyle\Phi \leq \gamma + \frac { 1 } { 1 + \frac{E_{\vsi} } {|F|} } $ } & \eqref{eq11-het} & 
          $\displaystyle\Phi \leq \gamma\, \left( 1 +\frac {1} {E_{\fgeleftB_j}} \,\right)$ & \eqref{eq12-het} \\ \midrule
\multicolumn{3}{r}{$\displaystyle\Phi \leq \gamma \, \left( 1 + \frac { 1 + E_{\fgerightB_j} } {1 + \frac{E_{\vsi}}{|F|} }  \, \right) + \frac{1}{1+\frac{E_{\vsi}}{|F|} }$}
          & \eqref{eq14-het}  & \multicolumn{3}{c}{
            $\displaystyle\Phi\leq \gamma \left( \frac {3 + \frac {E_{\vsi}} {|F|} + E_{\fgerightB_j} + E_{\fgeleftB_j}} { 1 + \frac {E_{\vsi}} {|F|} } \right) + \frac{1}{ 1 + \frac {E_{\vsi}} {|F|} }$
} 
& 
\eqref{eq16-het}
\\ \midrule
\multicolumn{7}{r}{$\displaystyle\Phi> \gamma \left(
                     \frac 
                     {
                     6 + \frac{1}{\deg(u)}  + \frac { E_{\vt} } {n\,\deg(u)} + \frac {E_{\ovr{u}} }{\deg(u)} + \frac{1}{\deg(v)}  + \frac { E_{\vt} } {n\,\deg(v)} + \frac {E_{\ovr{v}} }{\deg(v)} + E_{f_{\ovr{u},\ovr{v}}} + \frac {E_{\vsi}} {|F|} + E_{\fgerightB_j} + E_{\fgeleftB_j}
                     }
                     { 
                     1 + \frac {E_{\vsi}} {|F|} + 
                     \frac{1}{\deg(u)} + \frac {E_{\vt} } {n\,\deg(u)} + \frac{1}{\deg(v)} + \frac {E_{\vt} } {n\,\deg(v)} } 
                     \right)$
                     } & \eqref{eq15-het} \\ \midrule
\multicolumn{3}{r}{$\displaystyle\Phi \leq \gamma \left( \frac{1  + \frac { E_{\vt} } {n} + \deg(u) + E_{\ovr{u}}} {1 + \frac {E_{\vt} } {n} } \right) + \frac {\deg(u)} { 1 + \frac {E_{\vt} } {n}}$} & 
                     \eqref{eq17-het} 
                     & \multicolumn{3}{r}{
                                       $\displaystyle\Phi 
                                       > 
                                       \gamma 
                                       \left( 
                                       \frac 
                                       {6 + \frac {E_{\vsi}} {|F|} + E_{\fgerightB_j} + E_{\fgeleftB_j} + \frac { E_{\vt} } {n\,\deg(u)} + \frac {E_{\ovr{u}} + 1 } { \deg(u) } + \frac {E_{\ovr{v}} } {\deg(v)} + E_{f_{\ovr{u},\ovr{v}}} 
                                       }
                                       {
                                       2 + \frac {E_{\vsi}} {|F|} + \frac{1}{\deg(u)}  + \frac {E_{\vt} } {n\,\deg(u)} + \frac {E_{\ovr{v}} -1 } {\deg(v)} 
                                       }
                                       \right) 
                                       $} & \eqref{eq18-het} \\ \midrule
\multicolumn{5}{r}{$\displaystyle\Phi 
\leq
\gamma \left(
\frac {3 + \frac {E_{\vsi}} {|F|} + E_{\fgerightB_j} + E_{\fgeleftB_j} } { 1 + \frac {E_{\vsi}} {|F|} - E_{\fgerightB_j} + E_{\fgeleftB_j} }
\right)
+ 
\frac {1} { 1 + \frac {E_{\vsi}} {|F|} - E_{\fgerightB_j} + E_{\fgeleftB_j} }$} & \eqref{eq19-het} \\ \midrule
\multicolumn{8}{l}{$\displaystyle\Phi 
\leq
\gamma \left( 
\frac
{
6 + \frac{1}{\deg(u)}  + \frac {E_{\vt}}{n\,\deg(u)} + \frac{E_{\ovr{u}}}{\deg(u)} + \frac{1}{\deg(v)}  + \frac {E_{\vt}}{n\,\deg(v)} + \frac{E_{\ovr{v}}}{\deg(v)} + E_{f_{\ovr{u},\ovr{v}}} + \frac {E_{\vsi}} {|F|} + E_{\fgerightB_j} + E_{\fgeleftB_j}  
}
{
1 + \frac{1}{\deg(u)} + \frac {E_{\vt} }{n\,\deg(u)} + \frac{1}{\deg(v)} + \frac {E_{\vt} }{n\,\deg(v)} + \frac {E_{\vsi}} {|F|} - E_{\fgerightB_j} + E_{\fgeleftB_j}
}
\right)$} \\ 
\multicolumn{7}{r}{$\displaystyle +
\frac { 1 }
{
1 + \frac{1}{\deg(u)} + \frac {E_{\vt} }{n\,\deg(u)} + \frac{1}{\deg(v)} + \frac {E_{\vt} }{n\,\deg(v)} + \frac {E_{\vsi}} {|F|} - E_{\fgerightB_j} + E_{\fgeleftB_j}
}$} & \eqref{eq20-het} \\ \midrule
\multicolumn{7}{c}{$\displaystyle\Phi 
\leq 
\gamma 
\left( 
\frac 
{
6 + \frac{1}{\deg(u)}  + \frac {E_{\vt}}{n\,\deg(u)} + \frac{E_{\ovr{u}}}{\deg(u)} + \frac{1}{\deg(v)}  + \frac {E_{\vt}}{n\,\deg(v)} + \frac{E_{\ovr{v}}}{\deg(v)} + E_{f_{\ovr{u},\ovr{v}}} + \frac {E_{\vsi}} {|F|} + E_{\fgerightB_j} + E_{\fgeleftB_j} + \frac{E_{\ovr{h_{i,j}}} }{ |F_{i,j}|} 
}
{
1 + \frac{1}{\deg(u)} + \frac {E_{\vt} }{n\,\deg(u)} + \frac{1}{\deg(v)} + \frac {E_{\vt} }{n\,\deg(v)} + \frac {E_{\vsi}} {|F|} - E_{\fgerightB_j} + E_{\fgeleftB_j}
}
\right)$} & \eqref{eq21-het} \\ \midrule
\multicolumn{3}{r}{$\displaystyle\Phi 
\leq \gamma \left( 
\frac {2 + \frac {E_{\vsi}} {|F|} + E_{\fgerightB_j} }
{1 + \frac {E_{\vsi}} {|F|} - E_{\fgerightB_j}}
\right) + 
\frac {1} 
{1 + \frac {E_{\vsi}} {|F|} - E_{\fgerightB_j}}$} & \eqref{eq22-het} & 
\multicolumn{3}{r}{$\displaystyle\Phi 
\leq 
\gamma \left( 
\frac { 2 + \frac {E_{\vsi}} {|F|} + E_{\fgerightB_j} + 1 + E_{\fgeleftB_j}  } 
{ 1 + \frac {E_{\vsi}} {|F|} - E_{\fgerightB_j} +  E_{\fgeleftB_j} }
\right)  
+
\frac { 1 } 
{ 1 + \frac {E_{\vsi}} {|F|} - E_{\fgerightB_j} +  E_{\fgeleftB_j} }$} & \eqref{eq23-het} \\ \midrule
\\
\multicolumn{7}{c}{$\displaystyle\Phi 
\leq 
\gamma \left( 
\frac
{
6 + \frac{1}{\deg(u)}  + \frac {E_{\vt}}{n\,\deg(u)} + \frac {E_{\ovr{u}}}{\deg(u)} + \frac {E_{\ovr{v}}}{\deg(v)}
+
\frac {E_{\vsi}} {|F|} + E_{\fgerightB_j} + E_{f_{\ovr{u},\ovr{v}}} + E_{\fgeleftB_j} 
+
\frac { \gamma \, E_{\ovr{h_{i,j}}} }{ |F_{i,j}|} 
}
{ 2 + \frac{1}{\deg(u)} + \frac {E_{\vt}}{n\,\deg(u)} - \frac {1}{\deg(v)} + \frac {E_{\ovr{v}}}{\deg(v)} + \frac {E_{\vsi}} {|F|} - E_{\fgerightB_j} +  E_{\fgeleftB_j} } 
\right)$} & \eqref{eq24-het} \\ \midrule
\multicolumn{7}{r}{
\hspace*{-0.2in}
$\displaystyle\Phi 
\leq
\gamma \left( 
\frac 
{
6 + \frac{1}{\deg(u)}  + \frac {E_{\vt}}{n\,\deg(u)} + \frac {E_{\ovr{u}}}{\deg(u)} + \frac {E_{\ovr{v}}}{\deg(v)}
+
\frac {E_{\vsi}} {|F|} + E_{\fgerightB_j} + E_{f_{\ovr{u},\ovr{v}}} + E_{\fgeleftB_j}  
}
{ 2 + \frac{1}{\deg(u)} + \frac {E_{\vt}}{n\,\deg(u)} - \frac {1}{\deg(v)} + \frac {E_{\ovr{v}}}{\deg(v)} + \frac {E_{\vsi}} {|F|} - E_{\fgerightB_j} +  E_{\fgeleftB_j} }
\right)
\!+\!
\frac { 1 }
{ 2 + \frac{1}{\deg(u)} + \frac {E_{\vt}}{n\,\deg(u)} - \frac {1}{\deg(v)} + \frac {E_{\ovr{v}}}{\deg(v)} + \frac {E_{\vsi}} {|F|} - E_{\fgerightB_j} +  E_{\fgeleftB_j} }$
} & \eqref{eq25-het} \\
\bottomrule
\end{tabular}
\caption{\label{summ-cons}List of all inequalities to be satisfied in the proof of Theorem~\ref{hetero-thm2}.}
\end{table}
}

\noindent
Let the share of external assets for a node (bank) $y$ be denoted by $E_y$ (thus, $\sum_{y\in V}E_y=E$).
We will select the remaining network parameters, namely $\gamma$, $\Phi$ and the set of $E_y$ values, based on the following desirable properties and events.
For the convenience of the readers, all the relevant constraints are also summarized in Table~\ref{summ-cons}.
Assume that no nodes in $\left(\cup_{i,j}\,\ovr{F_{i,j}}\right)\bigcup\left(\cup_{i,j}\left\{\ovr{h_{i,j}}\right\}\right)$ were shocked at $t=1$.
\begin{description}
\item[(I)]
Suppose that the node $\vt$ is shocked at $t=1$. Then, the following happens.
\begin{description}
\item[(I-a)]
$\vt$ fails at $t=1$:
\begin{gather}
\hspace*{-0.5in}
\Phi\,( b_{\vt}-\iota_{\vt}+E_{\vt}\,) > \gamma\,( b_{\vt}+E_{\vt}\,)
\,\,\,\equiv\,\,\,
\Phi\,(n+E_{\vt}\,) > \gamma\,(n+E_{\vt}\,) 
\,\,\,\equiv\,\,\,
\boxed { \Phi > \gamma }
\label{eq1-het} 
\end{gather}

\item[(I-b)]
Each node $\ovr{u}\in\ovr{\vl}\cup\ovr{\vr}$ that was not shocked at $t=1$ fails at $t=2$: 
\begin{multline*}
\frac { \min \left\{ \Phi\,( b_{\vt}-\iota_{\vt}+E_{\vt}\,) - \gamma\,( b_{\vt}+E_{\vt}\,),\,b_{\vt} \right\} }  { \din(\vt) }
> 
\gamma\, \left( b_{\ovr{u}} + E_{\ovr{u}} \right) 
\\
\,\,\,\equiv\,\,\,
\frac { \min \left\{ \Phi\,( n +E_{\vt}\,) - \gamma\,( n + E_{\vt}\,),\,n \right\} }  { n }
> 
\gamma\, \left( \deg(u) + E_{\ovr{u}} \right) 
\end{multline*}
These constraints are satisfied provided:
\begin{gather}
\displaystyle
\hspace*{-0.5in}
\frac { \Phi\,( n +E_{\vt}\,) - \gamma\,( n + E_{\vt}\,) }  { n } > \gamma\, \left( \deg(u) + E_{\ovr{u}} \right) 
\,\,\,\pmb{\equiv}\,\,\,
\boxed {
\Phi > \gamma\, \left( 1 + \frac { \deg(u) + E_{\ovr{u}} } { 1 + \frac{E_{\vt}}{n} } \right) 
}
\label{eq2-het} 
\\
\displaystyle
\Phi\,( n +E_{\vt}\,) - \gamma\,( n + E_{\vt}\,) \leq n  
\,\,\,\pmb{\equiv}\,\,\,
\boxed {
\Phi \leq \gamma + \frac { 1 } { 1 + \frac { E_{\vt} } { n } }
}
\label{eq3-het} 
\end{gather}

\item[(I-c)]
If the nodes $\ovr{u}$, $\ovr{v}$ and $f_{\ovr{u},\ovr{v}}$ were {\em not} shocked at $t=1$, then 
the part of the shock, say $\sigma_1$, given to $\vt$ that is received by node $f_{\ovr{u},\ovr{v}}$ at $t=3$ is: 
\[
\hspace*{-0.9in}
\begin{array}{lll}
\sigma_1  & = & 
\dfrac { \displaystyle \min \left\{ \frac { \min \left\{ \Phi\,( b_{\vt}-\iota_{\vt}+E_{\vt}\,) - \gamma\,( b_{\vt}+E_{\vt}\,),\,b_{\vt} \right\} }  { \din(\vt) } - \gamma\, \left( b_{\ovr{u}} + E_{\ovr{u}} \right),\, b_{\ovr{u}} \right\} } { \din\left(\ovr{u}\right) }
\\
& & \,+\,
\dfrac { \displaystyle \min \left\{ \frac { \min \left\{ \Phi\,( b_{\vt}-\iota_{\vt}+E_{\vt}\,) - \gamma\,( b_{\vt}+E_{\vt}\,),\,b_{\vt} \right\} }  { \din(\vt) } - \gamma\, \left( b_{\ovr{v}} + E_{\ovr{v}} \right),\, b_{\ovr{v}} \right\} } { \din\left(\ovr{v}\right) } 
\\
& = & 
\dfrac { \displaystyle \min \left\{ \frac { \min \left\{ \Phi\,( n +E_{\vt}\,) - \gamma\,( n  +E_{\vt}\,),\,n \right\} }  { n } - \gamma\, \left( \deg(u) + E_{\ovr{u}} \right),\, \deg(u) \right\} } { \deg(u) }
\\
& & \,+\,
\dfrac { \displaystyle \min \left\{ \frac { \min \left\{ \Phi\,( n +E_{\vt}\,) - \gamma\,( n  +E_{\vt}\,),\,n \right\} }  { n } - \gamma\, \left( \deg(v) + E_{\ovr{v}} \right),\, \deg(v) \right\} } { \deg(v) }
\end{array}
\]
On the other hand, if the node $f_{\ovr{u},\ovr{v}}$ and {\em exactly} one of the nodes  $\ovr{u}$ and $\ovr{v}$, say $\ovr{u}$, were {\em not} shocked at $t=1$, then 
the part of the shock, say $\sigma_1'$, given to $\vt$ that is received by node $f_{\ovr{u},\ovr{v}}$ at $t=3$ is: 
\begin{gather*}
\sigma_1' 
=
\dfrac { \displaystyle \min \left\{ \frac { \min \left\{ \Phi\,( n +E_{\vt}\,) - \gamma\,( n  +E_{\vt}\,),\,n \right\} }  { n } - \gamma\, \left( \deg(u) + E_{\ovr{u}} \right),\, \deg(u) \right\} } { \deg(u) }
\end{gather*}
\end{description}
\item[(II)]
Suppose that some node $\ovr{u}$ is shocked at $t=1$. Then, the following happens.
\begin{description}
\item[(II-a)]
Node $\ovr{u}$ fails at $t=1$:
\begin{gather}
\Phi\,( b_{\ovr{u}}-\iota_{\ovr{u}}+E_{\ovr{u}}\,) > \gamma\,( b_{\ovr{u}}+E_{\ovr{u}}\,)
\,\,\,\equiv\,\,\,
\boxed {
\Phi > \gamma\, \left(\, 1 + \frac {1} {\deg(u) - 1 + E_{\ovr{u}} } \right)
}
\label{eq4-het} 
\end{gather}

\item[(II-b)]
Node $f_{\ovr{u},\ovr{v}}\in\ovr{F_{i,j}}$ fails at $t=2$ and node $\ovr{h_{i,j}}$ fails at $t=3$ if both $\ovr{u}$ and $\ovr{v}$ were shocked at $t=1$: 
\begin{multline*}
\hspace*{-1in}
\frac { \min \left\{ \, \Phi\,( b_{\ovr{u}}-\iota_{\ovr{u}}+E_{\ovr{u}}\,) - \gamma\,( b_{\ovr{u}}+E_{\ovr{u}}\,), \, b_{\ovr{u}} \, \right\} } { \din(\ovr{u}) }
\\
\, + \,
\frac { \min \left\{ \, \Phi\,( b_{\ovr{v}}-\iota_{\ovr{v}}+E_{\ovr{v}}\,) - \gamma\,( b_{\ovr{v}}+E_{\ovr{v}}\,), \, b_{\ovr{v}} \, \right\} } { \din(\ovr{v}) }
>
\gamma \, \left( b_{f_{\ovr{u},\ovr{v}}} + E_{f_{\ovr{u},\ovr{v}}} \right)
\end{multline*}
\begin{multline*}
\hspace*{-1in}
\,\,\pmb{\equiv} \,\,
\frac { \min \left\{ \, \Phi\, \left( \deg(u) - 1 + E_{\ovr{u}} \right) \,-\, \gamma\,( \deg(u) + E_{\ovr{u}}\,), \, \deg(u) \, \right\} } { \deg(u) }
\\
\, + \,
\frac { \min \left\{ \, \Phi\, \left( \deg(v) - 1 + E_{\ovr{v}} \right) \,-\, \gamma\,( \deg(v) + E_{\ovr{v}}\,), \, \deg(v) \, \right\} } { \deg(v) }
>
\gamma \, \left( 1 + E_{f_{\ovr{u},\ovr{v}}} \right)
\end{multline*}
\begin{multline*}
\hspace*{-1.5in}
\scalebox{0.85}{
$\displaystyle 
\dfrac
{
\min
\left\{
\frac { \displaystyle \min \left\{ \, \Phi\,( b_{\ovr{u}}-\iota_{\ovr{u}}+E_{\ovr{u}}\,) - \gamma\,( b_{\ovr{u}}+E_{\ovr{u}}\,), \, b_{\ovr{u}} \, \right\} } { \din(\ovr{u}) }
\,+\,
\frac { \displaystyle \min \left\{ \, \Phi\,( b_{\ovr{v}}-\iota_{\ovr{v}}+E_{\ovr{v}}\,) - \gamma\,( b_{\ovr{v}}+E_{\ovr{v}}\,), \, b_{\ovr{v}} \, \right\} } { \din(\ovr{v}) }
\,-\,
\gamma \, \left( b_{f_{\ovr{u},\ovr{v}}} + E_{f_{\ovr{u},\ovr{v}}} \right) , \,
b_{f_{\ovr{u},\ovr{v}}}
\right\}
}
{ \din\left( f_{\ovr{u},\ovr{v}} \right) }
$
}
\\
>\,\gamma \, \left( b_{\ovr{h_{i,j}}} + E_{\ovr{h_{i,j}}} \right) 
\end{multline*}
\vspace*{-0.3in}
\begin{gather*}
{\mbox{\Large$\pmb{\equiv}$}}
\end{gather*}
\begin{multline*}
\hspace*{-1.5in}
\scalebox{0.8}
{
$\displaystyle
\min
\left\{
\dfrac { \displaystyle \min \left\{ \, \Phi\,( \deg(u) - 1 + E_{\ovr{u}} \,) - \gamma \,( \deg(u) + E_{\ovr{u}} \,), \, \deg(u) \, \right\} } { \deg(u) }
+
\dfrac { \displaystyle \min \left\{ \, \Phi\,( \deg(v) - 1 + E_{\ovr{v}} \,) - \gamma \,( \deg(v) + E_{\ovr{v}} \,), \, \deg(v) \, \right\} } { \deg(v) }
\,-\,
\gamma \, \left( 1 + E_{f_{\ovr{u},\ovr{v}}} \right) , \, 1
\right\}
$
}
\\
>\, \gamma \,\, E_{\ovr{h_{i,j}}}
\end{multline*}
These constraints are satisfied provided the inequalities~\eqref{eq1-het}--\eqref{eq4-het} are satisfied, and the following holds:
\begin{multline}
\hspace*{-1.3in}
\boxed {
\frac { \Phi\, \left( \deg(u) - 1 + E_{\ovr{u}} \right)  - \gamma\,( \deg(u) + E_{\ovr{u}}\,) } { \deg(u) }
+
\frac { \Phi\, \left( \deg(v) - 1 + E_{\ovr{v}} \right)  - \gamma\,( \deg(v) + E_{\ovr{v}}\,) } { \deg(v) }
>
\gamma \, \left( 1 + E_{f_{\ovr{u},\ovr{v}}} \right)
}
\label{eq5-het} 
\end{multline}
\begin{gather}
\hspace*{-1.3in}
\Phi\, \left( \deg(u) - 1 + E_{\ovr{u}} \right) \,-\, \gamma \, \left( \deg(u) + E_{\ovr{u}} \, \right) \leq \deg(u) 
\,\,\pmb{\equiv}\,\,
\boxed {
\Phi \leq \gamma \, \left( \frac { \deg(u) + E_{\ovr{u}} } { \deg(u) - 1 + E_{\ovr{u}} }   \right) + \frac { \deg(u) } { \deg(u) - 1 + E_{\ovr{u}} }
}
\label{eq6-het} 
\end{gather}
\begin{multline}
\hspace*{-1.5in}
\boxed {
\frac { \Phi\,( \deg(u) - 1 + E_{\ovr{u}} \,) - \gamma \,( \deg(u) + E_{\ovr{u}} ) } { \deg(u) }
\,+\,
\frac { \Phi\,( \deg(v) - 1 + E_{\ovr{v}} \,) - \gamma \,( \deg(v) + E_{\ovr{v}} ) } { \deg(v) }
\,-\,
\gamma \, \left( 1 + E_{f_{\ovr{u},\ovr{v}}} \right)
>
\gamma \, E_{\ovr{h_{i,j}}}
}
\label{eq7-het} 
\end{multline}
\begin{equation}
\boxed {
\gamma \, E_{\ovr{h_{i,j}}} < 1 
}
\label{eq8-het} 
\end{equation}
\end{description}

\item[(III)]
When the node $\vsi$ is shocked at $t=1$, the following happens.
\begin{description}
\item[(III-a)]
$\vsi$ fails at $t=1$:
\begin{gather*}
\Phi\,( b_{\vsi}-\iota_{\vsi}+E_{\vsi}\,) > \gamma\,( b_{\vsi}+E_{\vsi}\,)
\,\,\,\equiv\,\,\,
\Phi\,(|F|+E_{\vsi}\,) > \gamma\,(|F|+E_{\vsi}\,) 
\,\,\,\equiv\,\,\,
\Phi > \gamma
\end{gather*}
which is same as~\eqref{eq1-het}. 

\item[(III-b)]
If a node $\fgerightB_j\in\fgerightB$ is shocked at $t=1$, it does not fail: 
\begin{gather}
\Phi\,( b_{\fgerightB_j}-\iota_{\fgerightB_j}+E_{\fgerightB_j}\,) \leq \gamma\,( b_{\fgerightB_j}+E_{\fgerightB_j}\,)
\,\,\,\equiv\,\,\,
\boxed{
\Phi \leq \gamma\, \left( 1 +\frac {1} {E_{\fgerightB_j}} \,\right)
}
\label{eq9-het} 
\end{gather}

\item[(III-c)]
Any node $\fgerightB_j\in\fgerightB$ fails at $t=2$ irrespective of whether $\fgerightB_j$ was shocked or not: 
\begin{gather*}
\frac { \min \left\{ \Phi\,( b_{\vsi}-\iota_{\vsi}+E_{\vsi}\,) \,-\, \gamma\,( b_{\vsi}+E_{\vsi}\,), \, b_{\vsi} \right\}  } { \din(\vsi) }
>
\gamma \,( b_{\fgerightB_j}+E_{\fgerightB_j} \, )
\end{gather*}
These constraints are satisfied provided: 
\begin{gather}
\hspace*{-0.5in}
\frac { \Phi\,( b_{\vsi}-\iota_{\vsi}+E_{\vsi}\,) \,-\, \gamma\,( b_{\vsi}+E_{\vsi}\,) } { \din(\vsi) }
>
\gamma \,( b_{\fgerightB_j}+E_{\fgerightB_j} \, )
\,\,\,\equiv\,\,\,
\boxed{
\Phi >
\frac {\gamma \, \left( 2 + \frac{E_{\vsi}}{|F|} + E_{\fgerightB_j} \, \right) } { \left(1 + \frac{E_{\vsi}}{|F|} \right) }
}
\label{eq10-het} 
\end{gather}
\vspace*{-0.3in}
\begin{gather}
\Phi\,( b_{\vsi}-\iota_{\vsi}+E_{\vsi}\,) \,-\, \gamma\,( b_{\vsi}+E_{\vsi}\,) \leq b_{\vsi}
\,\,\,\equiv\,\,\,
\boxed{
\Phi \leq \gamma + \frac { 1 } { 1 +  \frac {E_{\vsi}} {|F|} }
}
\label{eq11-het} 
\end{gather}

\item[(III-d)]
If a node $\fgeleftB_j\in\fgeleftB$ is shocked at $t=1$, it does not fail (and thus, by {\bf (III-b)}, it does not fail at $t=2$ also): 
\vspace*{-0.2in}
\begin{gather}
\Phi\,( b_{\fgeleftB_j}-\iota_{\fgeleftB_j}+E_{\fgeleftB_j}\,) \leq \gamma\,( b_{\fgeleftB_j}+E_{\fgeleftB_j}\,)
\,\,\,\equiv\,\,\,
\boxed{
\Phi \leq \gamma\, \left( 1 +\frac {1} {E_{\fgeleftB_j}} \,\right)
}
\label{eq12-het} 
\end{gather}

\item[(III-e)]
Any node $\fgeleftB_j\in\fgeleftB$ fails at $t=3$ irrespective of whether $\fgeleftB_j$ was shocked or not: 
\begin{gather*}
\frac
{
\min \left\{
\frac { \min \left\{ \Phi\,( b_{\vsi}-\iota_{\vsi}+E_{\vsi}\,) \,-\, \gamma\,( b_{\vsi}+E_{\vsi}\,), \, b_{\vsi} \right\}  } { \din(\vsi) }
\,-\,
\gamma \,( b_{\fgerightB_j}+E_{\fgerightB_j} \, ), \, b_{\fgerightB_j}
\right\}
}
{ \din(\fgerightB_j) }
>
\gamma \, \left( b_{\fgeleftB_j} + E_{\fgeleftB_j} \right)
\,\,\,\equiv\,\,\,
\\
\min \left\{
\frac { \min \left\{ \Phi\,( |F| +E_{\vsi}\,) \,-\, \gamma\,( |F| +E_{\vsi}\,), \, |F| \right\}  } { |F| }
\,-\,
\gamma \,( 1 +E_{\fgerightB_j} \, ), \, 1
\right\}
>
\gamma \, \left( 1 + E_{\fgeleftB_j} \right)
\end{gather*}
These constraints are satisfied provided all the previous constraints hold and the following holds: 
\begin{gather}
\hspace*{-1.4in}
\frac { \Phi\,( |F| +E_{\vsi}\,) \,-\, \gamma\,( |F| +E_{\vsi}\,) } { |F| }
\,-\,
\gamma \,( 1 +E_{\fgerightB_j} \, )
>
\gamma \, \left( 1 + E_{\fgeleftB_j} \right)
\,\,\,\equiv\,\,\,
\boxed{
\Phi
>
\gamma \, 
\left( 
\frac
{ ( |F| +E_{\vsi}\,) }
{ 3\,|F| + |F| E_{\fgerightB_j} + |F| E_{\fgeleftB_j} + E_{\vsi} }
\right)
}
\label{eq13-het} 
\end{gather}
\vspace*{-0.3in}
\begin{gather}
\hspace*{-0.5in}
\frac { \Phi\,( |F| +E_{\vsi}\,) \,-\, \gamma\,( |F| +E_{\vsi}\,) } { |F| }
\,-\,
\gamma \,( 1 +E_{\fgerightB_j} \, )
\leq 1
\,\,\,\equiv\,\,\,
\boxed{
\Phi
\leq
\gamma \, \left( 1 + \frac { 1 + E_{\fgerightB_j} } {1 + \frac{E_{\vsi}}{|F|} }  \, \right)
+ \frac{1}{1+\frac{E_{\vsi}}{|F|} }
}
\label{eq14-het} 
\end{gather}

\item[(III-f)]
Consider a directed path 
\pscirclebox[boxsep=false,framesep=0pt]{$\scriptstyle\,\vsi\,$}$\,\longleftarrow$\pscirclebox[boxsep=false,framesep=0pt]{$\fgerightB_j$}$\,\longleftarrow$\pscirclebox[boxsep=false,framesep=0pt]{$\fgeleftB_j$}$\,\longleftarrow$\pscirclebox[boxsep=false,framesep=0pt]{$p_j$}
from $p_j=f_{\ovr{u},\ovr{v}}$ to $\vsi$. 
The maximum value of its proportion of shock receive by $p_j$ from this path, say $\sigma_2$, is obtained by shocking all the nodes $\vsi,\fgerightB_j,\fgeleftB_j$ and is given by (assuming
all previous inequalities hold): 
\begin{gather*}
\hspace*{-1.5in}
\scalebox{0.75}{
$ \displaystyle 
\sigma_2 = 
\dfrac{ \displaystyle
\min \left\{
\frac { \displaystyle
\min \left\{
\frac {\Phi \, \left( b_{\vsi}-\iota_{\vsi} + E_{\vsi} \right) - \gamma \, \left( b_{\vsi} + E _{\vsi} \right) } { \din(\vsi) }
\,-\,
\left(   \gamma \left( b_{\fgerightB_j} + E_{\fgerightB_j}  \right)  -  \Phi \left( b_{\fgerightB_j} - \iota_{\fgerightB_j} + E_{\fgerightB_j} \right) \right), \, 
b_{\fgerightB_j}
\right\}
}
{ \displaystyle \din(\fgerightB_j) }
\,-\,
\left(   \gamma \left( b_{\fgeleftB_j} + E_{\fgeleftB_j}  \right)  -  \Phi \left( b_{\fgeleftB_j} - \iota_{\fgeleftB_j} + E_{\fgeleftB_j} \right) \right), \, 
b_{\fgeleftB_j}
\right\}
}
{ \displaystyle \din(\fgeleftB_j) }
$
}
\\
\hspace*{-0.5in}
\,=\,\min \left\{
\min \left\{
\Phi \left( 1 + \frac {E_{\vsi}} {|F|} - E_{\fgerightB_j} \right) 
\,-\,
\gamma \left( 2 + \frac {E_{\vsi}} {|F|} + E_{\fgerightB_j} \right), \,
1
\right\}
\,-\,
\left(   \gamma \left( 1 + E_{\fgeleftB_j}  \right)  -  \Phi E_{\fgeleftB_j} \right), \, 
1
\right\}
\end{gather*}
Similarly, the minimum value of its proportion of shock receive by $p_j$ from this path, say $\sigma_2$, is obtained by shocking only the node $\vsi$ and is given by (assuming
all previous inequalities hold): 
\begin{gather*}
\hspace*{-1in}
\sigma_2' = 
\dfrac{
\min \left\{
\frac { \displaystyle
\frac {\displaystyle  \Phi \, \left( b_{\vsi}-\iota_{\vsi} + E_{\vsi} \right) - \gamma \, \left( b_{\vsi} + E _{\vsi} \right) } { \din(\vsi) }
\,-\,
\gamma \left( b_{\fgerightB_j} + E_{\fgerightB_j}  \right)
}
{ \displaystyle \din(\fgerightB_j) }
\,-\,
\gamma \left( b_{\fgeleftB_j} + E_{\fgeleftB_j}  \right), \,
b_{\fgeleftB_j}
\right\}
}
{ \displaystyle  \din(\fgeleftB_j) }
\\
\,=\,\min \left\{
\Phi \left( 1 + \dfrac {E_{\vsi}} {|F|}  \right) 
\,-\,
\gamma \left( 2 + \dfrac {E_{\vsi}} {|F|} + E_{\fgerightB_j} \right)
\,-\,
\gamma \left( 1 + E_{\fgeleftB_j} \right),
1
\right\}
\end{gather*}
We want node $f_{\ovr{u},\ovr{v}}$ to fail at $t=4$ assuming it did not fail already. Since $f_{\ovr{u},\ovr{v}}$  did not fail at $t=2$, 
at most one of the nodes $\ovr{u}$ and $\ovr{v}$ was shocked. There are two cases to consider: when neither $\ovr{u}$ nor $\ovr{v}$ was shocked, 
or when exactly one of these nodes, say $\ovr{v}$, was shocked (assuming all previous inequalities hold): 
\hspace*{-1in}
\begin{longtable}{rl}
$\displaystyle  \sigma_2' + \sigma_1 =$ & 
         $\displaystyle  
          \min \left\{
                         \Phi \left( 1 + \dfrac {E_{\vsi}} {|F|}  \right) 
                      \,-\,
                         \gamma \left( 2 + \dfrac {E_{\vsi}} {|F|} + E_{\fgerightB_j} \right)
                      \,-\,
                         \gamma \left( 1 + E_{\fgeleftB_j} \right),\,
                         1
                \right\}
         $
\\
&
         $\displaystyle  
              \,+\, 
                 \dfrac { \min 
                             \left\{ \frac 
                                          { \displaystyle  \min \left\{ \Phi\,( n +E_{\vt} \right) - \gamma \left( n  +E_{\vt}\,),\,n \right\} }  
                                          { \displaystyle  n } 
                                     \,-\, 
                                     \gamma\, \left( \deg(u) + E_{\ovr{u}} \right),\, \deg(u) 
                             \right\} 
                        } 
                        { \deg(u) }
          $
\\
& 
          $\displaystyle  
               \,+\,
                   \dfrac { \min \left\{ 
                                     \frac 
                                           { 
                                             \displaystyle  \min \left\{ \Phi\,( n +E_{\vt} \right) - \gamma \left( n  +E_{\vt}\,),\,n \right\} 
                                           }  
                                           { \displaystyle  n } 
                                      \,-\, \gamma\, \left( \deg(v) + E_{\ovr{v}} \right),\, \deg(v) 
                                 \right\} 
                          }
                          { \deg(v) }
$
\\
$\displaystyle >$ & 
          $\displaystyle \gamma \, \left(  b_{f_{\ovr{u},\ovr{v}}} + E_{f_{\ovr{u},\ovr{v}}} \right)$
\\
\\
& \hspace*{2in} \mbox{$\displaystyle \pmb{\equiv}$} 
\\
\\
& \hspace*{0.15in} 
          $\displaystyle  
                 \min \left\{
                          \Phi \left( 1 + \dfrac {E_{\vsi}} {|F|}  \right) 
                          \,-\,
                          \gamma \left( 2 + \dfrac {E_{\vsi}} {|F|} + E_{\fgerightB_j} \right)
                           \,-\,
                          \gamma \left( 1 + E_{\fgeleftB_j} \right),\,
                          1
                       \right\}
          $
\\
&
          $\displaystyle  
                         +\, 
                            \dfrac { 
                                    \min \left\{     \Phi \left( 1 + \frac {\displaystyle  E_{\vt} } {\displaystyle  n} \right) 
                                               \,-\, \gamma \left( 1  + \frac {\displaystyle   E_{\vt} } {\displaystyle  n} \right)  
                                               \,-\, \gamma\, \left( \deg(u) + E_{\ovr{u}} \right),\, 
                                               \deg(u) 
                                         \right\} 
                                   }
                                   { \deg(u) }
          $
\\
& 
          $\displaystyle  
                        +\,
                            \dfrac { 
                                     \min \left\{     \Phi \left( 1 + \frac {\displaystyle  E_{\vt} } {\displaystyle  n} \right) 
                                                \,-\, \gamma \left( 1  + \frac {\displaystyle   E_{\vt} } {\displaystyle  n} \right)  
                                                \,-\, \gamma\, \left( \deg(v) + E_{\ovr{v}} \right),\, 
                                                \deg(v) 
                                          \right\} 
                                   }
                                   { \deg(v) }
          $
\\
$\displaystyle >$ & 
          $\displaystyle \gamma \, \left(  1 + E_{f_{\ovr{u},\ovr{v}}} \right)$
\end{longtable}
%
\begin{gather*}
\displaystyle
\hspace*{-2in}
\sigma_2' + \sigma_1' + 
        \dfrac { 
                 \min \left\{ \, 
                              \Phi\,( b_{\ovr{v}}-\iota_{\ovr{v}}+E_{\ovr{v}}\,) 
                              \,-\, 
                              \gamma\,( b_{\ovr{v}}+E_{\ovr{v}}\,), \, b_{\ovr{v}} \, 
                      \right\} 
               }
               { \din(\ovr{v}) }
> 
\gamma \, \left(  b_{f_{\ovr{u},\ovr{v}}} + E_{f_{\ovr{u},\ovr{v}}} \right) 
\end{gather*}
\vspace*{-0.2in}
\begin{gather*}
\displaystyle
\hspace*{-2in}
\pmb{\equiv}
\end{gather*}
\vspace*{-0.3in}
\begin{longtable}{rl}
 & $\displaystyle 
         \min \left\{ 
            \Phi \left( 1 + \dfrac {E_{\vsi}} {|F|}  \right) 
             \,-\, 
            \gamma \left( 2 + \dfrac {E_{\vsi}} {|F|} + E_{\fgerightB_j} \right) 
             \,-\, \gamma \left( 1 + E_{\fgeleftB_j} \right),\,
             1 
         \right\}$
\\
$\displaystyle +$ & 
   $\displaystyle 
      \dfrac { \displaystyle 
               \min \left\{ 
                    \frac { \min \left\{ \Phi\,( n +E_{\vt}\,) - \gamma\,( n  +E_{\vt}\,),\,n \right\} }  { n } 
                      - 
                    \gamma\, \left( \deg(u) + E_{\ovr{u}} \right),\, 
                    \deg(u) 
                    \right\} 
             }
             { \displaystyle \deg(u) }$
\\
$\displaystyle +$ & 
   $\displaystyle 
        \dfrac { \min \left\{ \, 
                              \Phi\,( b_{\ovr{v}}-\iota_{\ovr{v}}+E_{\ovr{v}}\,) 
                                 - 
                              \gamma\,( b_{\ovr{v}}+E_{\ovr{v}}\,), \, b_{\ovr{v}} \, 
                      \right\} 
               } 
               { \displaystyle \din(\ovr{v}) }$
\\
$\displaystyle >$ & 
          $\gamma \, \left(  b_{f_{\ovr{u},\ovr{v}}} + E_{f_{\ovr{u},\ovr{v}}} \right)$
\end{longtable}
\vspace*{-0.5in}
\begin{gather*}
\hspace*{-2in}
\displaystyle
\pmb{\equiv}
\end{gather*}
\vspace*{-0.3in}
\begin{longtable}{rl}
   &  $\displaystyle 
         \min \left\{ 
              \Phi \left( 1 + \dfrac {E_{\vsi}} {|F|}  \right) 
                 \,-\, 
              \gamma \left( 2 + \dfrac {E_{\vsi}} {|F|} + E_{\fgerightB_j} \right) 
                 \,-\, 
              \gamma \left( 1 + E_{\fgeleftB_j} \right), \, 1 
             \right\}$
\\
$\displaystyle +$ & 
       $\dfrac { \min \left\{ \displaystyle
                           \Phi \left( 1 + \frac {E_{\vt} } {n} \right) 
                               \,-\, 
                           \gamma \left( 1  + \frac { E_{\vt} } {n} \right)  
                               \,-\, 
                           \gamma\, \left( \deg(u) + E_{\ovr{u}} \right),\, \deg(u) 
                       \right\} 
               } 
               { \displaystyle \deg(u) }$
\\
$\displaystyle +$ & 
         $\dfrac { \min \left\{ \, \Phi\,( \deg(\ovr{v}) -1+E_{\ovr{v}}\,) - \gamma\,( \deg(\ovr{v})+E_{\ovr{v}}\,), \, \deg(\ovr{v}) \, \right\} } { \din(\ovr{v}) }$
\\
$\displaystyle >$ & 
        $\gamma \, \left( 1 + E_{f_{\ovr{u},\ovr{v}}} \right)$
\end{longtable}
%
These constraints are satisfied provided all the previous constraints hold and the following holds: 
\begin{longtable}{rl}
  &  $\displaystyle 
       \Phi \left( 1 + \dfrac {E_{\vsi}} {|F|}  \right) 
         \,-\,
       \gamma \left( 2 + \dfrac {E_{\vsi}} {|F|} + E_{\fgerightB_j} \right)
         \,-\,
      \gamma \left( 1 + E_{\fgeleftB_j} \right)$
\\
$\displaystyle +$ & 
       $\dfrac { \displaystyle 
                 \Phi \left( 1 + \frac {E_{\vt} } {n} \right) 
                    - 
                 \gamma \left( 1  + \frac { E_{\vt} } {n} \right)  
                    - 
                 \gamma\, \left( \deg(u) + E_{\ovr{u}} \right) 
               }
               { \displaystyle \deg(u) }$
\\
$\displaystyle +$ & $\displaystyle
        \dfrac {  \displaystyle
                    \Phi \left( 1 + \frac {E_{\vt} } {n} \right) 
                        - 
                  \gamma \left( 1  + \frac { E_{\vt} } {n} \right)  
                        - 
                  \gamma\, \left( \deg(v) + E_{\ovr{v}} \right) 
               }
               { \displaystyle \deg(v) } \,>\, \gamma \, \left(  1 + E_{f_{\ovr{u},\ovr{v}}} \right)$
\end{longtable}
\vspace*{-0.3in}
\begin{gather}
\displaystyle
\hspace*{-1.3in}
\,\,\pmb{\equiv}\,\, 
\boxed{
\Phi 
> 
\gamma \left(
\dfrac 
{\displaystyle
6 + \frac{1}{\deg(u)}  + \frac { E_{\vt} } {n\,\deg(u)} + \frac {E_{\ovr{u}} }{\deg(u)} + \frac{1}{\deg(v)}  + \frac { E_{\vt} } {n\,\deg(v)} + \frac {E_{\ovr{v}} }{\deg(v)} + E_{f_{\ovr{u},\ovr{v}}} + \frac {E_{\vsi}} {|F|} + E_{\fgerightB_j} + E_{\fgeleftB_j}
}
{\displaystyle 
1 + \frac {E_{\vsi}} {|F|} + 
\frac{1}{\deg(u)} + \frac {E_{\vt} } {n\,\deg(u)} + \frac{1}{\deg(v)} + \frac {E_{\vt} } {n\,\deg(v)} } 
\right)
}
\label{eq15-het} 
\end{gather}
\begin{gather}
\displaystyle
\hspace*{-1.2in}
\Phi \left( 1 + \dfrac {E_{\vsi}} {|F|}  \right) 
\,-\,
\gamma \left( 2 + \dfrac {E_{\vsi}} {|F|} + E_{\fgerightB_j} \right)
\,-\,
\gamma \left( 1 + E_{\fgeleftB_j} \right)
\leq 
1
\,\,\pmb{\equiv}\,\, 
\boxed{
\Phi 
\leq
\gamma \left( \dfrac {\displaystyle 3 + \frac {E_{\vsi}} {|F|} + E_{\fgerightB_j} + E_{\fgeleftB_j}} {\displaystyle  1 + \frac {E_{\vsi}} {|F|} } \right) + \frac{1}{ 1 + \frac {E_{\vsi}} {|F|} }
}
\label{eq16-het} 
\\
\notag
\\
\displaystyle
\hspace*{-1.3in}
\Phi \left( 1 + \dfrac {E_{\vt} } {n} \,\right) - \gamma\,\left( 1  + \dfrac { E_{\vt} } {n} \right)  - \gamma\, \left( \deg(u) + E_{\ovr{u}} \right) \leq \deg(u)
\,\,\pmb{\equiv}\,\, 
\boxed{
\Phi \leq \gamma \left( \dfrac{\displaystyle 1  + \frac { E_{\vt} } {n} + \deg(u) + E_{\ovr{u}}} {\displaystyle 1 + \frac {E_{\vt} } {n} } \right) + \frac {\displaystyle \deg(u)} {\displaystyle  1 + \frac {E_{\vt} } {n}}
}
\label{eq17-het} 
\\
\notag
\\
\displaystyle
\hspace*{-1in}
\Phi \left( 1 + \dfrac {E_{\vsi}} {|F|}  \right) \,-\, \gamma \left( 2 + \frac {\displaystyle E_{\vsi}} {\displaystyle |F|} + E_{\fgerightB_j} \right) \,-\, \gamma \left( 1 + E_{\fgeleftB_j} \right)
+
\dfrac { \Phi \left( 1 + \frac {\displaystyle E_{\vt} } {\displaystyle n} \right) - \gamma \left( 1  + \frac {\displaystyle  E_{\vt} } {\displaystyle n} \right)  - \gamma\, \left( \deg(u) + E_{\ovr{u}} \right) } { \deg(u) }
\notag
\\
\displaystyle
+\,
\dfrac { \Phi\,( \deg(\ovr{v}) -1+E_{\ovr{v}}\,) - \gamma\,( \deg(\ovr{v})+E_{\ovr{v}}\,))} { \din(\ovr{v}) }
\,>\, 
\gamma \, \left( 1 + E_{f_{\ovr{u},\ovr{v}}} \right) 
\notag
\\
\displaystyle
\pmb{\equiv}
\notag
\\
\hspace*{-1in}
\boxed{
\displaystyle
\Phi 
> 
\gamma 
\left( 
\dfrac 
{6 + \frac {\displaystyle E_{\vsi}} {|F|} + E_{\fgerightB_j} + E_{\fgeleftB_j} + \frac { \displaystyle E_{\vt} } {\displaystyle n\,\deg(u)} + \frac {\displaystyle E_{\ovr{u}} + 1 } {\displaystyle  \deg(u) } + \frac {\displaystyle E_{\ovr{v}} } {\displaystyle \deg(v)} + E_{f_{\ovr{u},\ovr{v}}} 
}
{\displaystyle
2 + \frac {\displaystyle E_{\vsi}} {|F|} + \frac{\displaystyle 1}{\displaystyle \deg(u)}  + \frac {\displaystyle E_{\vt} } {\displaystyle n\,\deg(u)} + \frac {\displaystyle E_{\ovr{v}}\, -1 } {\displaystyle \deg(v)} 
}
\right) 
}
\label{eq18-het} 
\end{gather}

\item[(IV)]
By {\bf (II-b)} node $\ovr{h_{i,j}}$ fails at $t=3$ provided both the nodes $\ovr{u}$ and $\ovr{v}$ were shocked at $t=1$. 
Our goal is to make sure that node $\ovr{h_{i,j}}$ does not fail in any other condition (assuming the node itself was not shocked). 
Assuming the nodes $\ovr{u}$, $\ovr{v}$ and $f_{\ovr{u},\ovr{v}}$ were not shocked, the maximum amount of shock that $f_{\ovr{u},\ovr{v}}\in \ovr{F_{i,j}}$ 
can receive is when all the nodes before $f_{\ovr{u},\ovr{v}}$ in the path  
\pscirclebox[boxsep=false,framesep=0pt]{$\scriptstyle\,\vsi\,$}$\,\longleftarrow$\pscirclebox[boxsep=false,framesep=0pt]{$\fgerightB_j$}$\,\longleftarrow$\pscirclebox[boxsep=false,framesep=0pt]{$\fgeleftB_j$}$\,\longleftarrow$\pscirclebox[boxsep=false,framesep=0pt]{$p_j$}$\,\longleftarrow$\pscirclebox[boxsep=false,framesep=0pt]{$\scriptscriptstyle\,f_{\ovr{u},\ovr{v}}\,$}$\,$
were shocked and no more than one of the nodes $\ovr{u}$ or $\ovr{v}$ was shocked.
Based on this, the following constraints must hold for $\ovr{h_{i,j}}$ not to fail.
\begin{gather*}
\displaystyle
\dfrac 
{ \min \left\{ \sigma_1 + \sigma_2 - \gamma \left( b_{f_{\ovr{u},\ovr{v}}} + E_{f_{\ovr{u},\ovr{v}}} \right), \, b_{f_{\ovr{u},\ovr{v}}} \right \} }
{ \din (f_{\ovr{u},\ovr{v}}) } 
\,\,\leq \,\,
\dfrac { \gamma \left( b_{\ovr{h_{i,j}}} + E_{\ovr{h_{i,j}}} \right) } { |F_{i,j}| }
\\
\displaystyle
\pmb{\equiv}
\end{gather*}
\vspace*{-0.3in}
\begin{longtable}{l}
$\displaystyle 
    \min \scalebox{1.7}[1.4]{\Bigg \{} 
            \dfrac { 
                    \min \left\{ \frac { \displaystyle \min \left\{ \Phi\,( n +E_{\vt}\,) - \gamma\,( n  +E_{\vt}\,),\,n \right\} }  { \displaystyle n } 
                                    - \gamma\, \left( \deg(u) 
                                    + E_{\ovr{u}} \right),\, \deg(u) 
                          \right\} 
                   }
                   { \displaystyle \deg(u) }$
\\
                     $\hspace*{0.3in}+\,\,\,
            \dfrac { 
                    \min \left\{ \frac { \displaystyle \min \left\{ \Phi\,( n +E_{\vt}\,) - \gamma\,( n  +E_{\vt}\,),\,n \right\} }  { \displaystyle n } 
                                 - \gamma\, \left( \deg(v) 
                                 + E_{\ovr{v}} \right),\, \deg(v) 
                         \right\} 
                   }
                   { \displaystyle \deg(v) }$
\\
$\hspace*{0.3in} \displaystyle +\, 
         \min \left\{ 
             \min \left\{
                      \Phi \left( 1 + \dfrac {E_{\vsi}} {\displaystyle |F|} - E_{\fgerightB_j} \right) 
                         \,-\,
                      \gamma \left( 2 + \dfrac {E_{\vsi}} {|F|} + E_{\fgerightB_j} \right), \,
                       1
                  \right\}
                \,-\,
             \left(\gamma \left( 1 + E_{\fgeleftB_j}  \right)  -  \Phi E_{\fgeleftB_j} \right), \, 
             1
            \right\}$
\\
       $\hspace*{0.3in} -\, 
         \gamma \left( 1 + E_{f_{\ovr{u},\ovr{v}}} \right), \, 1 \scalebox{1.7}[1.4]{\Bigg \}} \,\,\leq\,\, \dfrac { \gamma \, E_{\ovr{h_{i,j}}} }{ |F_{i,j}|}$
\end{longtable}
\vspace*{-0.6in}
\begin{gather*}
\displaystyle
\hspace*{-3in}
\pmb{\equiv}
\end{gather*}
\vspace*{-0.3in}
\begin{longtable}{l}
 $\displaystyle 
     \min \scalebox{1.7}[1.4]{\Bigg\{} 
           \Phi \left( 
                        \dfrac{1}{\deg(u)} + \dfrac {E_{\vt} }{n\,\deg(u)} + \dfrac{1}{\deg(v)} + \dfrac {E_{\vt} }{n\,\deg(v)} 
                 \right)$
\\
 \hspace*{0.3in} $\displaystyle 
           -\, \gamma \left( 
                         3 + \dfrac{1}{\deg(u)}  + \dfrac {E_{\vt}}{n\,\deg(u)} + \dfrac{E_{\ovr{u}}}{\deg(u)} + \dfrac{1}{\deg(v)}  + \dfrac {E_{\vt}}{n\,\deg(v)} + \dfrac{E_{\ovr{v}}}{\deg(v)} + E_{f_{\ovr{u},\ovr{v}}} 
                      \right)$
\\
 \hspace*{0.3in} $\displaystyle 
           +\, \min \left\{
                      \Phi \left( 
                           1 + \dfrac {E_{\vsi}} {|F|} - E_{\fgerightB_j} + E_{\fgeleftB_j} 
                           \right) 
                     \,-\,
                     \gamma \left( 
                           3 + \dfrac {E_{\vsi}} {|F|} + E_{\fgerightB_j} + E_{\fgeleftB_j}  
                            \right), \, 
                      1
                   \right\},\, 
           1 
         \scalebox{1.7}[1.4]{\Bigg\} } \,\,\leq\,\, \dfrac { \gamma \, E_{\ovr{h_{i,j}}} }{ |F_{i,j}|}$
\end{longtable}
These constraints are satisfied provided all the previous constraints hold and the following holds: 
\begin{multline}
\displaystyle
\Phi \left( 1 + \dfrac {E_{\vsi}} {|F|} - E_{\fgerightB_j} + E_{\fgeleftB_j} \right) - \gamma \left( 3 + \dfrac {E_{\vsi}} {|F|} + E_{\fgerightB_j} + E_{\fgeleftB_j}  \right) \leq 1
\\
\displaystyle
\pmb{\equiv}\,\,
\boxed{
\Phi 
\leq
\gamma \left(
\dfrac {3 + \dfrac {E_{\vsi}} {|F|} + E_{\fgerightB_j} + E_{\fgeleftB_j} } { 1 + \dfrac {E_{\vsi}} {|F|} - E_{\fgerightB_j} + E_{\fgeleftB_j} }
\right)
+ 
\dfrac {1} {\displaystyle 1 + \frac {E_{\vsi}} {|F|} - E_{\fgerightB_j} + E_{\fgeleftB_j} }
}
\label{eq19-het} 
\end{multline}

\hspace*{-1in}
\fbox{
 \addtolength{\linewidth}{-2\fboxsep}%
 \addtolength{\linewidth}{-2\fboxrule}%
 \begin{minipage}{1.25\linewidth}
\begin{multline}
\displaystyle
\Phi 
\leq
\gamma \left( 
\dfrac
{ \displaystyle
6 + \frac{1}{\deg(u)}  + \frac {E_{\vt}}{n\,\deg(u)} + \frac{E_{\ovr{u}}}{\deg(u)} + \frac{1}{\deg(v)}  + \frac {E_{\vt}}{n\,\deg(v)} + \frac{E_{\ovr{v}}}{\deg(v)} + E_{f_{\ovr{u},\ovr{v}}} + \frac {E_{\vsi}} {|F|} + E_{\fgerightB_j} + E_{\fgeleftB_j}  
}
{ \displaystyle
1 + \frac{1}{\deg(u)} + \frac {E_{\vt} }{n\,\deg(u)} + \frac{1}{\deg(v)} + \frac {E_{\vt} }{n\,\deg(v)} + \frac {E_{\vsi}} {|F|} - E_{\fgerightB_j} + E_{\fgeleftB_j}
}
\right)
\\
\displaystyle
+
\dfrac { 1 }
{ \displaystyle
1 + \frac{1}{\deg(u)} + \frac {E_{\vt} }{n\,\deg(u)} + \frac{1}{\deg(v)} + \frac {E_{\vt} }{n\,\deg(v)} + \frac {E_{\vsi}} {|F|} - E_{\fgerightB_j} + E_{\fgeleftB_j}
}
\label{eq20-het} 
\end{multline}
 \end{minipage}
}

\begin{gather}
\hspace*{-1.3in}
\displaystyle
\boxed{
\Phi 
\leq 
\gamma 
\left( 
\dfrac 
{
\displaystyle
6 + \frac{1}{\deg(u)}  + \frac {E_{\vt}}{n\,\deg(u)} + \frac{E_{\ovr{u}}}{\deg(u)} + \frac{1}{\deg(v)}  + \frac {E_{\vt}}{n\,\deg(v)} + \frac{E_{\ovr{v}}}{\deg(v)} + E_{f_{\ovr{u},\ovr{v}}} + \frac {E_{\vsi}} {|F|} + E_{\fgerightB_j} + E_{\fgeleftB_j} + \frac{E_{\ovr{h_{i,j}}} }{ |F_{i,j}|} 
}
{
\displaystyle
1 + \frac{1}{\deg(u)} + \frac {E_{\vt} }{n\,\deg(u)} + \frac{1}{\deg(v)} + \frac {E_{\vt} }{n\,\deg(v)} + \frac {E_{\vsi}} {|F|} - E_{\fgerightB_j} + E_{\fgeleftB_j}
}
\right)
}
\label{eq21-het} 
\end{gather}
On the other hand, if exactly one of the nodes $\ovr{u}$ or $\ovr{v}$, say $\ovr{u}$, was shocked at $t=1$, then 
the maximum amount of shock that $f_{\ovr{u},\ovr{v}}\in \ovr{F_{i,j}}$ can receive is is modified, and the new conditions for our desired goal become as follows.
\begin{gather*}
\hspace*{-1.2in}
\displaystyle
\dfrac 
{ \displaystyle  \min \left\{ \sigma_1' + 
\frac { \min \left\{ \, \Phi\,( b_{\ovr{v}}-\iota_{\ovr{v}}+E_{\ovr{v}}\,) - \gamma\,( b_{\ovr{v}}+E_{\ovr{v}}\,), \, b_{\ovr{v}} \, \right\} } { \din(\ovr{v}) }
+ \sigma_2 
- \gamma \left( b_{f_{\ovr{u},\ovr{v}}} + E_{f_{\ovr{u},\ovr{v}}} \right), \, b_{f_{\ovr{u},\ovr{v}}} \right \} }
{ \displaystyle  \din (f_{\ovr{u},\ovr{v}}) } 
\,\leq \,
\dfrac { \gamma \left( b_{\ovr{h_{i,j}}} + E_{\ovr{h_{i,j}}} \right) } { |F_{i,j}| }
\\
\pmb{\equiv}
\end{gather*}
\begin{longtable}{l}
 $\displaystyle 
     \min \scalebox{1.7}[1.4]{\Bigg\{ }
             \dfrac { \min 
                        \left\{ \displaystyle 
                                 \frac { \displaystyle 
                                                   \min \left\{ 
                                                             \Phi\,\left( n +E_{\vt}\,\right) \,-\, \gamma\,\left( n  +E_{\vt}\,\right), \,n 
                                                        \right\} 
                                       }  
                                       { \displaystyle 
                                          n 
                                       } 
                                  - 
                                  \gamma\, \left( \deg(u) + E_{\ovr{u}} \right),\, \deg(u) 
                         \right\} 
                    } 
                    { \deg(u) }$
\\
 $\hspace*{0.5in}\displaystyle     + \,\,
            \dfrac { \min 
                       \left\{ \, 
                          \Phi\,\left( b_{\ovr{v}}-\iota_{\ovr{v}}+E_{\ovr{v}}\,\right) - \gamma\,\left( b_{\ovr{v}}+E_{\ovr{v}}\,\right), \, b_{\ovr{v}} \, 
                       \right\} 
                   } 
                   { \din(\ovr{v}) }$
\\
 $\displaystyle\hspace*{0.5in} + \,
     \min \left\{
        \min \left\{
               \Phi \left( 
                       1 + \dfrac {E_{\vsi}} {|F|} - E_{\fgerightB_j} 
                    \right) 
               \,-\,
               \gamma \left( 
                         2 + \dfrac {E_{\vsi}} {|F|} + E_{\fgerightB_j} 
                      \right), \,
                1
             \right\}
        \,-\,
        \Big(   
            \gamma \left( 1 + E_{\fgeleftB_j}  \right)  -  \Phi E_{\fgeleftB_j} 
        \Big), \, 
         1
         \right\}$
\\
 $\hspace*{0.4in}\displaystyle
    \,-\, \gamma \left( 1 + E_{f_{\ovr{u},\ovr{v}}} \right), \, 1 \scalebox{1.5}[1.4]{\Bigg \} }
    \,\,\leq \,\,
    \dfrac { \gamma \, E_{\ovr{h_{i,j}}} }{ |F_{i,j}|}$
\end{longtable}
\vspace*{-0.6in}
\begin{gather*}
\displaystyle
\pmb{\equiv}
\end{gather*}
\vspace*{-0.3in}
\begin{longtable}{l}
$\displaystyle
\min \scalebox{1.7}[1.4]{\Bigg \{ }
              \Phi \left( 1 + \dfrac{1}{\deg(u)} + \dfrac {E_{\vt}}{n\,\deg(u)} - \dfrac {1}{\deg(v)} + \dfrac {E_{\ovr{v}}}{\deg(v)} \right) 
              - 
              \gamma \left( 2 + \dfrac{1}{\deg(u)}  + \dfrac {E_{\vt}}{n\,\deg(u)} + \dfrac {E_{\ovr{u}}}{\deg(u)} + \dfrac {E_{\ovr{v}}}{\deg(v)} \right)$
\\
$\displaystyle
\hspace*{0.4in}
+\, 
\min \left\{
\min \left\{
\Phi \left( 1 + \dfrac {E_{\vsi}} {|F|} - E_{\fgerightB_j} \right) 
\,-\,
\gamma \left( 2 + \dfrac {E_{\vsi}} {|F|} + E_{\fgerightB_j} \right), \,
1
\right\}
\,-\,
\left(   \gamma \left( 1 + E_{\fgeleftB_j}  \right)  -  \Phi E_{\fgeleftB_j} \right), \, 
1
\right\}
$
\\
$\displaystyle
\hspace*{0.4in}
- \gamma \left( 1 + E_{f_{\ovr{u},\ovr{v}}} \right), \, 1 
\scalebox{1.7}[1.4]{\Bigg \} }
\,\,\leq \,\,
\dfrac { \gamma \, E_{\ovr{h_{i,j}}} }{ |F_{i,j}|}
$
\end{longtable}
These constraints are satisfied provided all the previous constraints hold and the following holds: 
\begin{gather}
\displaystyle
\hspace*{-1.5in}
\Phi \left( 1 + \dfrac {E_{\vsi}} {|F|} - E_{\fgerightB_j} \right) - \gamma \left( 2 + \dfrac {E_{\vsi}} {|F|} + E_{\fgerightB_j} \right) \leq 1
\,\pmb{\equiv}\,
\boxed{
\Phi 
\leq \gamma \left( 
\dfrac {2 + \frac {E_{\vsi}} {|F|} + E_{\fgerightB_j} }
{\displaystyle 1 + \frac {E_{\vsi}} {|F|} - E_{\fgerightB_j}}
\right) + 
\dfrac {1} 
{\displaystyle 1 + \frac {E_{\vsi}} {|F|} - E_{\fgerightB_j}}
}
\label{eq22-het} 
\end{gather}
\begin{multline}
\displaystyle
\Phi \left( 1 + \dfrac {E_{\vsi}} {|F|} - E_{\fgerightB_j} \right) 
-
\gamma \left( 2 + \dfrac {E_{\vsi}} {|F|} + E_{\fgerightB_j} \right)
-
\left(   \gamma \left( 1 + E_{\fgeleftB_j}  \right)  -  \Phi E_{\fgeleftB_j} \right) 
\leq
1 
\\
\displaystyle
\,\,\pmb{\equiv}\,\,
\boxed{
\Phi 
\leq 
\gamma \left( 
\dfrac { \displaystyle 2 + \frac {E_{\vsi}} {|F|} + E_{\fgerightB_j} + 1 + E_{\fgeleftB_j}  } 
{ \displaystyle 1 + \frac {E_{\vsi}} {|F|} - E_{\fgerightB_j} +  E_{\fgeleftB_j} }
\right)  
+
\dfrac { \displaystyle 1 } 
{ \displaystyle 1 + \frac {E_{\vsi}} {|F|} - E_{\fgerightB_j} +  E_{\fgeleftB_j} }
}
\label{eq23-het} 
\end{multline}
\begin{gather}
\hspace*{-1.5in}
\Phi \left( 1 + \frac{1}{\deg(u)} + \frac {E_{\vt}}{n\,\deg(u)} - \frac {1}{\deg(v)} + \frac {E_{\ovr{v}}}{\deg(v)} \right) 
- 
\gamma \left( 2 + \frac{1}{\deg(u)}  + \frac {E_{\vt}}{n\,\deg(u)} + \frac {E_{\ovr{u}}}{\deg(u)} + \frac {E_{\ovr{v}}}{\deg(v)} \right)
\notag
\\
\hspace*{-1in}
+ 
\Phi \left( 1 + \frac {E_{\vsi}} {|F|} - E_{\fgerightB_j} \right) 
\,-\,
\gamma \left( 2 + \frac {E_{\vsi}} {|F|} + E_{\fgerightB_j} \right)
\,-\,
\left(   \gamma \left( 1 + E_{\fgeleftB_j}  \right)  -  \Phi E_{\fgeleftB_j} \right)
- \gamma \left( 1 + E_{f_{\ovr{u},\ovr{v}}} \right)
\leq 
\frac { \gamma \, E_{\ovr{h_{i,j}}} }{ |F_{i,j}|}
\notag
\\
\pmb{\equiv}
\notag
\\
\hspace*{-1.5in}
\boxed{
\Phi 
\leq 
\gamma \left( 
\dfrac
{ \displaystyle
6 + \frac{1}{\deg(u)}  + \frac {E_{\vt}}{n\,\deg(u)} + \frac {E_{\ovr{u}}}{\deg(u)} + \frac {E_{\ovr{v}}}{\deg(v)}
+
\frac {E_{\vsi}} {|F|} + E_{\fgerightB_j} + E_{f_{\ovr{u},\ovr{v}}} + E_{\fgeleftB_j} 
+
\frac { \gamma \, E_{\ovr{h_{i,j}}} }{ |F_{i,j}|} 
}
{  \displaystyle 2 + \frac{1}{\deg(u)} + \frac {E_{\vt}}{n\,\deg(u)} - \frac {1}{\deg(v)} + \frac {E_{\ovr{v}}}{\deg(v)} + \frac {E_{\vsi}} {|F|} - E_{\fgerightB_j} +  E_{\fgeleftB_j} } 
\right)
}
\label{eq24-het} 
\end{gather}
\begin{gather}
\hspace*{-1.5in}
\Phi \left( 1 + \frac{1}{\deg(u)} + \frac {E_{\vt}}{n\,\deg(u)} - \frac {1}{\deg(v)} + \frac {E_{\ovr{v}}}{\deg(v)} \right) 
- 
\gamma \left( 2 + \frac{1}{\deg(u)}  + \frac {E_{\vt}}{n\,\deg(u)} + \frac {E_{\ovr{u}}}{\deg(u)} + \frac {E_{\ovr{v}}}{\deg(v)} \right)
\notag
\\
\hspace*{-1in}
+ 
\Phi \left( 1 + \frac {E_{\vsi}} {|F|} - E_{\fgerightB_j} \right) 
\,-\,
\gamma \left( 2 + \frac {E_{\vsi}} {|F|} + E_{\fgerightB_j} \right)
\,-\,
\left(   \gamma \left( 1 + E_{\fgeleftB_j}  \right)  -  \Phi E_{\fgeleftB_j} \right)
- \gamma \left( 1 + E_{f_{\ovr{u},\ovr{v}}} \right)
\leq 
1
\end{gather}
$\displaystyle\pmb{\equiv}\,\,$\fbox{
 \addtolength{\linewidth}{-2\fboxsep}%
 \addtolength{\linewidth}{-2\fboxrule}%
 \begin{minipage}{1\linewidth}
\vspace*{-0.1in}
\begin{multline}
\Phi 
\leq
\gamma \left( 
\dfrac 
{ \displaystyle
6 + \frac{1}{\deg(u)}  + \frac {E_{\vt}}{n\,\deg(u)} + \frac {E_{\ovr{u}}}{\deg(u)} + \frac {E_{\ovr{v}}}{\deg(v)}
+
\frac {E_{\vsi}} {|F|} + E_{\fgerightB_j} + E_{f_{\ovr{u},\ovr{v}}} + E_{\fgeleftB_j}  
}
{ \displaystyle 2 + \frac{1}{\deg(u)} + \frac {E_{\vt}}{n\,\deg(u)} - \frac {1}{\deg(v)} + \frac {E_{\ovr{v}}}{\deg(v)} + \frac {E_{\vsi}} {|F|} - E_{\fgerightB_j} +  E_{\fgeleftB_j} }
\right)
\\
+\,
\dfrac { 1 }
{ \displaystyle 2 + \frac{1}{\deg(u)} + \frac {E_{\vt}}{n\,\deg(u)} - \frac {1}{\deg(v)} + \frac {E_{\ovr{v}}}{\deg(v)} + \frac {E_{\vsi}} {|F|} - E_{\fgerightB_j} +  E_{\fgeleftB_j} }
\label{eq25-het} 
\end{multline}
 \end{minipage}
}
\end{description}
\end{description}
There are many choices of parameters $\gamma$, $\Phi$ and $E_y$'s satisfying 
inequalities~\eqref{eq1-het}--\eqref{eq10-het}; we exhibit just one:
\begin{gather*}
\gamma=4\,n^{-1000}
\,\,\,\,\,\,\,\,\,\,\,\,\,\,\,\,\,\,\,
\Phi=n^{-1000}
\,\,\,\,\,\,\,\,\,\,\,\,\,\,\,\,\,\,\,
\forall\, u\in\vl\cup\vr \colon E_{\ovr{u}}=1 
\,\,\,\,\,\,\,\,\,\,\,\,\,\,\,\,\,\,\,
E_{\vt}=n^3
\,\,\,\,\,\,\,\,\,\,\,\,\,\,\,\,\,\,\,
E_{\vsi}=n^2\,|F|
\\
\forall\, u\in\vl \, \forall\, v\in\vr \colon E_{f_{\ovr{u},\ovr{v}}}=1
\,\,\,\,\,\,\,\,\,\,\,\,\,\,\,\,\,\,\,
\forall \, h_{i,j}\in\ES \, \forall \, f_{u,v}\in F_{i,j} \colon E_{\ovr{h_{i,j}}}=1
\,\,\,\,\,\,\,\,\,\,\,\,\,\,\,\,\,\,\,
\forall\,j\colon E_{\fgerightB_j} = E_{\fgerightB_j} = \frac{1}{4} 
\end{gather*}
Remembering that $10\leq\deg(u)<n$ for any node $u\in\vl\cup\vr$ and $|F_{i,j}|<|F|$, 
it is relatively straightforward to verify that all the inequalities are satisfied for all sufficiently large $n$.
Note that 
\begin{multline*}
E=E_{\vt} + E_{\vsi} + \!\!\!\!\!\!\!\!\!\!\!\sum_{u\,\in\,\vl\,\cup\,\vr} \!\!\!\!\!\!\!\!\!\!\!\!\!\!\!E_{\ovr{u}} + \sum_{\,\,\{u,v\}\in F} \!\!\!\!\!\!\!\!E_{f_{\ovr{u},\ovr{v}}} + \!\!\!\!\!\!\sum_{h_{i,j}\in\,\ES} \!\!\!\!\!\!\!\!\!\!E_{\ovr{h_{i,j}}} \,\,\,+ \,\,\,\sum_{j=1}^{|F|} \left( E_{\fgerightB_j} + \sum E_{\fgeleftB_j}\, \right)
\\
=
n^3+n^2\,|F|+ n + \frac{3}{2}\,|F| + |\ES|
\notag
\end{multline*}
and thus the ratio of total external assets to total internal assets $E/I$ is large.
We can now finish our proof by selecting $\delta$ such that $\log^{1-\delta} n = \log^{1-\eps} |\ovr{V}|-1$ and showing the following:
\begin{description}
\item[\hspace*{0.2in}(completeness)]
If \minrep has a solution of size $\alpha+\beta$ on $G$ then 
$\vi^\ast\left(\ovr{G},T\right)\leq \alpha+\beta+2$.

\item[\hspace*{0.2in}(soundness)]
If {\em every} solution of \minrep on $G$ is of size at least $(\alpha+\beta) 2^{\log^{1-\delta} n}$
then 
\linebreak
$\vi^\ast\left(\ovr{G},T\right)\geq \dfrac{\alpha+\beta}{2}\,2^{\log^{1-\delta} n}$.
\end{description}

\subsection*{Proof of Completeness (\minrep has a solution of size $\alpha+\beta$)}

Let $V_1\subseteq\vl$ and $V_2\subseteq\vr$ be a solution of \minrep such that $|V_1|+|V_2|=\alpha+\beta$.
We shock the nodes $\vt$ and $\vsi$, and every node $\ovr{u}$ for every $u\in\vl\cup\vr$. By {\bf (I-a)} $\vt$ fails
at $t=1$, and by {\bf (I-b)} and {\bf (II-a)} every node in $\cup_{i=1}^\alpha \vl_i \bigcup \cup_{j=1}^\beta \vr_j$ fails on or before $t=2$.
By {\bf (III-a)}, {\bf (III-b)} and {\bf (III-c)} every node in $\left\{\vs\right\}\cup\fgerightB\cup\fgeleftB$ fails on or before $t=3$.
Since $V_1$ and $V_2$ are a valid solution of \minrep, for every super-edge $h_{i,j}$ there exists $u\in V_1$ and $v\in V_2$ 
such that $u\in\vl_i$, $v\in\vr_j$ and $\{u,v\}\in F$; since we shock the nodes $\ovr{u}$ and $\ovr{v}$, by {\bf (II-a)} both 
$\ovr{u}$ and $\ovr{v}$ fail at $t=1$, by {\bf (II-b)} the node $f_{\ovr{u},\ovr{v}}$ fails at $t=2$, and by {\bf (II-c)} 
the node $\ovr{h_{i,j}}$ fails at $t=3$. Thus, the network $\ovr{G}$ fails at $t=3$ and 
$\vi^\ast\left(\ovr{G},T\right)=\alpha+\beta+1$ for $t\geq 4$.

\subsection*{Proof of Soundness (every solution of \minrep is of size at least $(\alpha+\beta) 2^{\log^{1-\delta} n}$)}

We will prove the logically equivalent contrapositive of our claim, \IE, we will show that if 
$\vi^\ast\left(\ovr{G},T\right)<\frac{\alpha+\beta}{2}\,2^{\log^{1-\delta} n}$ then 
\minrep has a solution of size strictly less than $(\alpha+\beta) 2^{\log^{1-\delta} n}$.
Consider a solution of \nam\ on $\ovr{G}$ that shocks at most $z=\frac{\alpha+\beta}{2}\,2^{\log^{1-\delta} n}$ nodes.
Note that the nodes $\vt$ and $\vsi$ must be shocked at $t=1$ by Proposition~\ref{obs1}(a). 
By {\bf (I-a)} and {\bf (III-a)}, the nodes $\vt$ and $\vsi$ fails at $t=1$, by {\bf (I-b)} and {\bf (III-c)} 
every node in $\ovr{\vl}\cup\ovr{\vr}\cup\fgerightB$ fails at $t=2$, by {\bf (III-e)} every node in $\fgeleftB$
fails at $t=3$, by {\bf (III-f)} every node $f_{\ovr{u},\ovr{u}}$ fails at $t=4$ unless it was shocked at $t=1$ and 
by {\bf (IV)} a node $\ovr{h_{i,j}}$ fails only if $\ovr{h_{i,j}}$, $f_{ovr{u},ovr{u}}\in \ovr{F_{i,j}}$ or {\em both} the nodes 
$\ovr{u}$ and $\ovr{v}$ were shocked at $t=1$.
We ``normalize'' this given solution in the following manner (each step of the normalization assumes that the previous steps
have been already carried out): 
\begin{itemize}
\item
If a node from $\fgeleftB\cup\fgerightB$ was shocked at $t=1$, we do not shock it. By {\bf (III)} this has no effect on the failure
of the network.

\item
If a node $f_{\ovr{u},\ovr{v}}\in \ovr{F_{i,j}}$ was shocked, we do not shock it but instead shock the nodes $\ovr{u}$ and $\ovr{v}$ if they were not already shocked in 
the given solution. This at most doubles the number of nodes shocked and, by {\bf (II-b)}, the node $f_{u,v}$ fails at $t=2$ and the node
$\ovr{h_{i,j}}$ fails at $t=3$ if it was not shocked at $t=1$. Thus, after this sequence of normalization steps, we may assume that no $f_{\ovr{u},\ovr{v}}$ node was shocked.

\item
If a node $\ovr{h_{i,j}}$ was shocked at $t=1$, we do not shock it
but instead shock the nodes $\ovr{u}$ and $\ovr{v}$ (for some $u$ and $v$ such that $\{u,v\}\in F_{i,j}$) if they were not already shocked in 
the given solution. This at most doubles the number of nodes shocked and, by {\bf (II-b)}, the node $f_{u,v}$ fails at $t=2$ and the node
$\ovr{h_{i,j}}$ fails at $t=3$. Thus, after this sequence of normalization steps, we may assume that no $\ovr{h_{i,j}}$ node was shocked.
\end{itemize}
These normalizations result in a solution of \nam\ of size at most $2\,z$ in which the nodes $\vt$, $\vsi$, a subset 
$\ovr{V_1}\subseteq \ovr{\vl}$ and a subset $\ovr{V_2}\subseteq \ovr{\vr}$ of nodes.
Our solution of \minrep is $V_1=\{\,v\,|\,\ovr{v}\in\ovr{V_1}\,\}\subseteq\vl$ and $V_2=\{\,v\,|\,\ovr{v}\in\ovr{V_2}\,\}\subseteq\vr$
of size $2z-2<2z$.
Since failure of every $\ovr{h_{i,j}}$ is attributed to shocking two nodes $\ovr{u}$ and $\ovr{v}$ such that $f_{\ovr{u},\ovr{v}}\in \ovr{F_{i,j}}$,
every super-edge $h_{i,j}$ of $G$ is witnessed by the two nodes $u$ and $v$. 
\end{proof}

\section{Homogeneous Networks, $\pmb{\mbox{\dnam}}$, any $T$, hardness and exact algorithm}

\begin{theorem}\label{dual-np}~\\
\noindent
{\bf (a)}
Assuming $\mathsf{P}\neq\NP$, 
$\dvi^\ast(G,T,\kappa)$ cannot be approximated within a factor of ${\left(1-{\e}^{-1}+\delta\right)}^{-1}$, for any $\delta>0$,
even if $G$ is a DAG ($\e$ is the base of natural logarithm).

\vspace*{5pt}
\noindent
{\bf (b)}
If $G$ is a rooted in-arborescence then $\dvi^\ast(G,T,\kappa)<\dfrac{\kappa}{n}\,\left(1+\mathrm{deg}_{\mathrm{in}}^{\max} \left( \frac{\Phi}{\gamma}-1 \right) \right)$,
where $\mathrm{deg}_{\mathrm{in}}^{\max}=\max_{v\in V}\{\din(v)\}$ is the maximum in-degree over all nodes of $G$. 
Moreover, under the assumption that any individual node of the network can be failed by shocking, 
$\dvi^\ast(G,T,\kappa)$ can be computed exactly in $O(n^3)$ time.
\end{theorem}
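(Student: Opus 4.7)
My proof has three parts. For \textbf{(a)}, the plan is to reduce from the maximum $\kappa$-coverage problem, for which Feige's theorem gives a $1-\mathsf{e}^{-1}+\delta$ inapproximability unless $\mathsf{P}=\NP$. Given an instance $\langle \cU,\cS,\kappa\rangle$, I build an acyclic homogeneous banking network with one node per set $S\in\cS$, one node per element $u\in\cU$, and a directed arc $(S,u)$ whenever $u\in S$. Calibrating $\Phi,\gamma,E$ as in the proof of Theorem~\ref{th1}, every shocked set-node $S$ fails at $t=1$ and forces every element-node $u\in S$ to fail at $t=2$, while element-nodes have $\dout(u)=0$ and propagate nothing, so an optimal $\kappa$-shock set may be assumed to lie entirely inside $\cS$. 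The failures induced by shocking $\cS'\subseteq\cS$ are then precisely $\cS'\cup\bigcup_{S\in\cS'}S$, and a standard padding of $\cU$ that makes $|\cU|$ dominate $\kappa$ ensures the additive $\kappa$ does not wash out the coverage gap, so any $(1-\mathsf{e}^{-1}+\delta)^{-1}$-approximation of $\dvi^\ast$ would violate Feige's bound.

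For the \textbf{upper bound in (b)}, the plan is to aggregate Lemma~\ref{bound1} across $V'$ by a nearest-shocked-ancestor charging argument. Every failed $w\in\infl(V')$ lies in $\Delta(v_w)$ for a unique deepest shocked ancestor $v_w\in V'$ (with $v_w=w$ if $w\in V'$). Repeating the equity accounting of Lemma~\ref{bound1}, the extra shock arriving at any $v\in V'$ from above its subtree is capped at one unit through its parent edge, and absorbing this into the per-child bound $(\Phi-\gamma)(1+E/n)$ of Lemma~\ref{bound1} shows that the number of failures charged to $v$ is strictly less than $1+\din(v)(\Phi/\gamma-1)$. Summing,
\[
\big|\infl(V')\big|<\kappa\,\big(1+\mathrm{deg}_{\mathrm{in}}^{\max}(\Phi/\gamma-1)\big),
\]
so dividing by $n$ gives the stated bound on $\dvi^\ast(G,T,\kappa)$.

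For the \textbf{exact algorithm in (b)}, I extend the $O(n^2)$ dynamic program of Theorem~\ref{poly1} with an additional parameter $k\in\{0,1,\dots,\kappa\}$ counting the shocks allocated inside the current subtree. For every node $u$ and every $u'\in\nabla(u)\cup\{\mathrm{null}\}$, define
\[
\ssdvi^\ast(G,T,u,k),\qquad \snsdvi^\ast(G,T,u,u',k)
\]
in direct analogy with $\ssvi^\ast$ and $\snsvi^\ast$: the first is the maximum number of failures in $\Delta(u)$ attainable with $u$ shocked and $k$ total shocks used in $\Delta(u)$; the second requires $u$ unshocked with $u'$ its nearest shocked ancestor, so that whether $u$ fails is determined by the precomputed predicate $u\in\iz(u')$. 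The recurrence at $u$ with children $v_1,\dots,v_d$ is a tree-knapsack convolution over the splits $k_1+\cdots+k_d=k-\mathbf{1}_{u\text{ shocked}}$, taking for each child the better of its $\ssdvi^\ast$ entry and the appropriate $\snsdvi^\ast$ entry (inherited $u'$ if $u$ is unshocked, or $u$ itself if $u$ is shocked). The standard amortized analysis of tree-knapsack gives $O(n\kappa)$ total convolution work per fixed $u'$, and with $O(n)$ choices of $u'$ the overall running time is $O(n^2\kappa)=O(n^3)$; the final answer is $\max\{\ssdvi^\ast(G,T,r,\kappa),\snsdvi^\ast(G,T,r,\mathrm{null},\kappa)\}/n$.

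The \textbf{main obstacle} will be confirming, as in Theorem~\ref{poly1}, that the ``nearest shocked ancestor'' abstraction is lossless: under the assumption that every node individually fails when shocked, the per-edge propagation cap of $w=1$ saturates once the nearest shocked ancestor's cascade reaches $u$, so shocks higher up the unique root path cannot change which descendants below $u$ eventually fail. This is the key structural fact that allows both the charging argument in the upper bound to absorb the extra inflow ``for free'' and the DP to work correctly with only a single ancestor parameter.
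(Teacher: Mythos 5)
Your overall strategy for all three parts matches the paper's: a reduction from max-$\kappa$-cover for (a), an application of Lemma~\ref{bound1} for the upper bound in (b), and a $k$-indexed extension of the Theorem~\ref{poly1} dynamic program with a knapsack convolution over child budgets for the exact algorithm in (b). The differences are in details, one of which is a genuine error.

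For \textbf{(a)}, the edge direction in your construction is backward, and with the model's actual semantics this breaks the propagation entirely. In the paper an arc $(v,v')$ means $v$ \emph{lends to} $v'$; a failed bank $v'$ sends losses to the banks $v$ that lent to it, i.e.\ along its in-edges. You draw arcs $(S,u)$ for $u\in S$, which makes each set-node $S$ a \emph{lender} with $\din(S)=0$: shocking $S$ then transmits nothing, because $S$ has no creditors. The paper instead uses arcs $(\tilde u,\tilde S)$ so that every element that lent to a failed set-node absorbs part of its loss. Your claim that element-nodes have $\dout(u)=0$ and ``propagate nothing'' is a symptom of the same reversal; furthermore, ``propagate nothing'' is the wrong reason for excluding element-nodes from $\vs$ --- the paper's argument is that element-nodes never \emph{fail} when shocked, since their net external asset $\din-\dout+E/n$ is nonpositive, so the idiosyncratic shock actually raises their equity. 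Finally, the ``padding of $\cU$'' is unnecessary and a bit hand-wavy (padding elements that belong to no set also cannot be made to fail); Feige's hard instances already have $\kappa=|\cU|^\alpha$ with $\alpha<1$, which the paper uses directly to show the additive $\kappa$ washes out asymptotically.

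For \textbf{(b)}, the upper bound argument is essentially the paper's, and your nearest-shocked-ancestor charging usefully makes explicit the disjointness the paper invokes tersely when it writes $\dvi^\ast<\kappa\max_u|\iz(u)|/n$. The ``absorb the extra shock through the parent edge'' step is actually not needed: under the running assumption that every shocked node fails immediately, a shocked $v$ is removed at $t=2$, and the amount it re-transmits to each child depends only on $c_v(1)=c_v-\Phi e_v$, which is fixed at $t=1$ regardless of what $v$'s ancestors do. Any shock arriving later from above hits a node already removed from $\valive$, so there is simply no extra inflow to absorb; the failure sets attributed to distinct shocked nodes are genuinely disjoint. Your exact algorithm is the same $\ssdvi^\ast/\snsdvi^\ast$ dynamic program with a tree-knapsack over child budgets as in the paper, and the same $O(n^3)$ bound.

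In short: part~(b) matches the paper and is fine, but the reduction in part~(a) as written would not propagate shocks from set-nodes to element-nodes. The fix is immediate --- reverse every arc and replace ``propagate nothing'' with ``do not fail when shocked'' --- but as stated the construction does not work.
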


\begin{proof}~\\
{\bf (a)}
The max $\kappa$-cover problem is defined as follows. An instance of the problem is an universe $\cU$ of $n$ elements, a collection of $m$ sets $\cS$ over $\cU$, 
and a positive integer $\kappa$. The goal is to pick a sub-collection $\cS'\subseteq\cS$ of $\kappa$ sets such that the number of elements covered, namely 
$\big|\cup_{S\in\cS'} S\big|$, is {\em maximized}. Let $\opt$ denote the maximum number of elements covered by an optimal solution of the max $\kappa$-cover problem.
It was shown in~\cite{F98} that, assuming $\mathsf{P}\neq\NP$, the max $\kappa$-cover problem cannot be 
approximated within a factor of $\dfrac{1}{\left(1-\frac{1}{\e}+\delta\right)}$ for any constant $\delta>0$. More precisely, \cite{F98} provides a polynomial-time 
reduction for a restricted but still $\NP$-hard version of the Boolean satisfiability problem (3-CNF5)
instances of max $\kappa$-cover with $\kappa=\big|\cU\big|^{\alpha}$, for some constant $0<\alpha<1$, and shows that
\begin{description}
\item[(1)]
if the CNF formula is satisfiable, then $\opt=\big|\cU\big|$;

\item[(2)]
if the CNF formula is not satisfiable, then $\opt<\left( 1 - \dfrac{1}{\e} + g(\kappa) \right)\big|\cU\big|$, where $g(\kappa)\to 0$ as $\kappa\to\infty$.
\end{description}
Our reduction from max $\kappa$-cover to {\sc Dual-Stab$_{T,\kappa}$} is as follows\footnote{However,
this exact construction will not work in the proof of Theorem~\ref{th1} since the entire network needs to fail in that proof.}.
In our graph $G=(V,F)$, we have an element node $\tilde{u}$ for every element $u\in\cU$, a set node $\tilde{S}$ for every set $S\in\cS$, and directed edges 
$(\tilde{u},\tilde{S})$ for every element $u\in\cU$ and set $S\in\cS$ such that $u\in S$. 
Thus, $n=|V|=|\cU|+|\cS|$ and $|F|=\sum_{S\in\cS} |S|$. We now set the remaining parameters as follows: $E=n$, $\gamma=n^{-2}$ and $\Phi=1$.
Now, we observe the following:
\begin{itemize}
\item
If an element node $\tilde{u}$ is shocked, it does not fail since $\Phi\left(\din\left(\tilde{u}\right)-\dout\left(\tilde{u}\right)+\frac{E}{n}\right)\leq 0$ whereas 
$\gamma \left(\din\left(\tilde{u}\right)+\frac{E}{n} \right)=n^{-2}>0$.

\item
If a set node $\tilde{S}$ is shocked, it fails since $\Phi\left(\din\left(\tilde{S}\right)-\dout\left(\tilde{S}\right)+\frac{E}{n}\right)\geq 2$
whereas 
\linebreak
$\gamma \left(\din\left(\tilde{S}\right)+\frac{E}{n} \right)\leq\frac{n+1}{n^2}<1$.

\item
If a set node $\tilde{S}$ is shocked, then every element node $\tilde{u}$ for $u\in S$ fails at $t=2$. To observe this, note that 
\begin{gather*}
\frac {\min \left\{\Phi\left(\din\left(\tilde{S}\right)-\dout\left(\tilde{S}\right)+\frac{E}{n}\right)-\gamma \left(\din\left(\tilde{u}\right)+\frac{E}{n} \right),\,\din\left(\tilde{S}\right)\right\}} {\din\left(\tilde{S}\right)}
\\
\geq
\frac{2-\frac{n+1}{n^2}}{n}
>
\frac{n+1}{n^2}
\geq
\gamma \left(\din\left(\tilde{S}\right)+\frac{E}{n} \right)
\end{gather*}

\item
Since the longest directed path in $G$ has one edge, no new nodes fails during $t>2$.
\end{itemize}
Based on the above observations, one can identify the sets selected in max $k$-cover with the set nodes selected for shocking in 
{\sc Dual-Stab$_{T,\kappa}$} on $G$ to conclude that $\dvi^\ast(G,T,\kappa)=\opt+\kappa$.
Thus, using {\bf (1)} and {\bf (2)}, inapproximability gap is
\[
\frac {\big|\cU\big| + \kappa} {\left( 1 - \frac{1}{\e} + g(\kappa) \right)\big|\cU\big| + \kappa}
=
\frac {\big|\cU\big| + \big|\cU\big|^\alpha} {\left( 1 - \frac{1}{\e} + g(\kappa) \right)\big|\cU\big| + \big|\cU\big|^\alpha}
\to
\frac{1}{1-\frac{1}{\e}+\delta}\,\mbox{ as } \big|\cU\big|\to\infty\,\mbox{ for any } \delta>0
\]

\noindent
{\bf (b)}
Using Lemma~\ref{bound1}, we have  
\[
\dvi^\ast(G,T,\kappa)
<
\frac{\displaystyle  \kappa \, \left( { \displaystyle \max_{u\in V} \Big\{\, \iz(u)\, \Big\}  } \right) }{n}
<
\dfrac{\kappa}{n}\,\left(1+\mathrm{deg}_{\mathrm{in}}^{\max} \left( \frac{\Phi}{\gamma}-1 \right) \right)
\]
To provide a polynomial time algorithm for $\dvi^\ast(G,T,\kappa)$, we suitably modify the algorithm described 
in the proof of Theorem~\ref{poly1}. We redefine $\snsvi^\ast(G,T,u,v)$ and $\ssvi^\ast(G,T,u)$ in the following manner:
\begin{itemize}
\item
For every node $u'\in\nabla(u)$ and every integer $0\leq k\leq\kappa$, $\snsdvi^\ast(G,T,u,u',k)$ is the number of nodes in an optimal solution of \dnam\ 
(or $\infty$ if there is no feasible solution of \dnam) 
for the subgraph induced by the nodes in $\Delta(u)$ assuming the following:
\begin{itemize}
\item
$u'$ was shocked,

\item
$u$ was not shocked,

\item
no node in the path $u'\leadsto u$ except $u'$ was shocked, and 

\item
total number of shocked nodes in $\Delta(u)$ is exactly $k$.
\end{itemize}

\item
For every integer $0\leq k\leq\kappa$, $\ssdvi^\ast(G,T,u,k)$ is the number of nodes in an optimal solution of \dnam\ 
for the subgraph induced by the nodes in $\Delta(u)$ 
(or $\infty$, if there is no feasible solution of \nam\ under the stated conditions)
assuming that the node $u$ was shocked (and therefore failed), and the number 
of shocked nodes in $\Delta(u)$ is exactly $k$.
\end{itemize}
Computing these quantities becomes slightly more computationally involved as shown below.
\begin{description}
\item[Computing $\ssdvi^\ast(G,T,u,k)$ when $\din(u)=0$:]~\\

\vspace*{-0.1in}
$\ssdvi^\ast(G,T,u,1)=1$ and $\ssdvi^\ast(G,T,u,k)=-\infty$ for any $k\neq 1$.

\item[Computing $\snsdvi^\ast(G,T,u,u',k)$ when $\din(u)=0$:]
$\,$

\begin{itemize}
\item
If $u\in\iz(u')$ then shocking node $v$ makes node $u$ fail. Thus, $\snsvi^\ast(G,T,u,u',1)=1$ and 
$\snsvi^\ast(G,T,u,u',k)=-\infty$ for any $k\neq 1$.

\item
Otherwise, node $u$ does not fail. Thus, $\snsdvi^\ast(G,T,u,u')=-\infty$.
\end{itemize}

\item[Computing $\ssdvi^\ast(G,T,u)$ when $\din(u)>0$:]
In this case we have 
\begin{multline*}
\ssdvi^\ast(G,T,u,k) \,\,=\,\, 1 \,\,+ 
\\
\min_{k_1+k_2+\dots+k_{\din(u)}=k-1}\,\left\{\,\sum_{i=1}^k \min \Big\{\, \ssdvi^\ast(G,T,v_i,k_i),\, \snsdvi^\ast(G,T,v_i,u,k_i)\, \Big\}\, \right\}
\end{multline*}

\item[Computing $\snsdvi^\ast(G,T,u,u',k)$ when $\din(u)>0$:]
Since $u'$ is shocked and $u$ is not shocked, the following cases arise:
\begin{itemize}
\item
If $u\not\in\iz(u')$ then then $u$ does not fail. Then, 
\begin{multline*}
\snsdvi^\ast(G,T,u,u',k) \,\,= 
\\
\min_{k_1+k_2+\dots+k_{\din(u)}=k}\,\left\{\, \sum_{i=1}^{\din(u)} \min \Big\{\, \ssdvi^\ast(G,T,v_i,k_i),\, \snsvi^\ast(G,T,v_i,u',k_i)\,\Big\}\,\right\}
\end{multline*}

\item
Otherwise, $u\in\iz(u')$, and therefore $u$ fails when $u'$ is shocked. Then, 
\begin{multline*}
\snsdvi^\ast(G,T,u,u',k) \,\,=\,\, 1\,\,+ 
\\
\min_{k_1+k_2+\dots+k_{\din(u)}=k}\,\left\{\, \sum_{i=1}^{\din(u)} \min \Big\{\, \ssdvi^\ast(G,T,v_i,k_i),\, \snsdvi^\ast(G,T,v_i,u',k_i)\,\Big\}\,\right\}
\end{multline*}
\end{itemize}
\end{description}
It only remains to show how we compute 
\\
$\displaystyle\min_{k_1+k_2+\dots+k_{\din(u)}=\digamma}\, \left\{\,\sum_{i=1}^{\din(u)} \min \Big\{\, \ssdvi^\ast(G,T,v_i,k_i),\, \snsdvi^\ast(G,T,v_i,u',k_i)\,\Big\}\,\right\}$ 
for $\digamma\in\{k-1,k\}$ in polynomial time.
It is easy to cast this problem as an instance of the unbounded integral knapsack problem in the following manner:
\begin{itemize}
\item
We have $\din(u)$ {\em objects} $\O_1,\O_2,\dots,\O_{\din(u)}$, each of {\em unlimited} supply and {\em weight} $1$. 

\item
The {\em cost} of selecting $k_i$ objects of the type $\O_i$ is 
\[
\min \Big\{\, \ssdvi^\ast(G,T,v_i,k_i),\, \snsdvi^\ast(G,T,v_i,u',k_i)\,\Big\}
\]

\item
The {\em goal} is to select a total of {\em exactly} $\digamma$ objects such that the total cost is {\em minimum}.
\end{itemize}
The standard pseudo-polynomial time dynamic programming algorithm for Knapsack can be used to solve the above instance in 
$\mathrm{O}\big(k\,\din(u)\big)=\mathrm{O}\left(n^2\right)$ time.
Thus, the total running time of our algorithm is $\mathrm{O}\left(n^3\right)$.
\end{proof}

\section{Heterogeneous Networks, $\pmb{\mbox{{{{\sc Dual-Stab$_{2,\Phi,\kappa}$}}}}}$, Stronger Inapproximability} 

We show that $\dvi^\ast(G,2,\kappa)$ cannot be approximated within a large approximation factor
provided a complexity-theoretic assumption is satisfied. To understand this assumption, we recall the following definitions from~\cite{A11}.

A random $(m,n,d)$ hyper-graph $H$ is a random hyper-graph of $n$ nodes, $m$ hyper-edges each having having exactly $d$ nodes obtained by choosing
each hyper-edge independently and uniformly at random. For our purpose, assume that $d$ is a constant, and $m\geq n^c$ for some constant $c>3$.
Let $Q\colon \{0,1\}^d \mapsto \{0,1\}$ denote a $d$-ary predicate, and 
let $\mathcal{F}_{Q,m}$ be a distribution over $d$-local functions from $\{0,1\}^n$ to $\{0,1\}^m$ by defining the random 
$d$-local function $f_{H,Q}\colon \{0,1\}^n\mapsto \{0,1\}^m$ to be the function
whose $i\tx$ output is computed by applying the predicate $Q$ to the $d$ inputs that are indexed by the $i\tx$ hyper-edge of $H$.
Finally, the $\kappa$ densest sub-hypergraph problem ($\mathsf{DS}_\kappa$) is defined as follows: {\em given an hyper-graph $G=(V,F)$ with $n=|V|$ and $m=|F|$ 
such that every hyper-edge contains {\em exactly} $d$ nodes and an integer $\kappa>0$, 
select a subset $V'\subseteq V$ of exactly $\kappa$ nodes which maximizes 
$\big|\,\big\{\,\{u_1,u_2,\dots,u_d\}\in F\,|\, u_1,u_2,\dots,u_d\in V'\big\}\,\big|$}.

The essence of the complexity-theoretic assumption is that if, for a suitable choice of $Q$, $\mathcal{F}_{Q,m}$ is a collection of one-way 
functions, then $\mathsf{DS}_\kappa$ is hard to approximate. More precisely, the assumption is: 
\label{star-ref}

\hangindent=10pt
\hangafter=0
\noindent
($\displaystyle\pmb{\star}$)
If $\mathcal{F}_{Q,m}$ is $\displaystyle\nicefrac{1}{\mathrm{o}\left(1/\sqrt{n}\,\log n\right)}$-pseudorandom, then for 
$\displaystyle\kappa=n^{1-{\frac{c-3}{2d}}}$ for some constant $c>3$ there exists 
instances $G=(V,F)$ of $\mathsf{DS}_\kappa$ with $m\geq n^c$ such that it is not possible to decide in polynomial time if there is a solution of $\mathsf{DS}_\kappa$ 
with at least $\dfrac{(1+o(1))\,m}{n^{{\frac{(c-3)}{2}\left(1-\frac{1}{d}\right)}}}$ edges (the ``yes'' instance), or if every solution of $\mathsf{DS}_\kappa$ has at 
most $\dfrac{(1-o(1))\,m}{n^{{\frac{c-3}{2}}}}$ edges (the ``no'' instance).

\begin{theorem}\label{dual-np-2}
Under the technical assumption {\bf ($\pmb{\star}$)}, 
$\dvi^\ast(G,2,\kappa)$ cannot be approximated within a ratio of $n^{\delta}$ for some constant $\delta>0$ even if $G$ is a DAG.
\end{theorem}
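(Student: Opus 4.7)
The plan is to reduce from the $\kappa$ densest sub-hypergraph problem $\mathsf{DS}_\kappa$, whose hardness is guaranteed by assumption~$(\star)$. Given an instance $H=(V_H,F_H)$ with $n=|V_H|$, $m=|F_H|\ge n^c$, uniform hyperedge size $d$, and $\kappa=n^{1-(c-3)/(2d)}$, I construct a heterogeneous banking network $\ovr{G}=(\ovr{V},\ovr{F})$ with $|\ovr{V}|=n+m$ as follows. For every $v\in V_H$ introduce a ``vertex node'' $\tilde v$, for every $e\in F_H$ introduce a ``hyperedge node'' $\tilde e$, and for every incidence $v\in e$ introduce a directed edge $(\tilde e,\tilde v)\in\ovr{F}$. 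Since these are the only edges, $\ovr{G}$ is a bipartite DAG of depth one, so by Proposition~\ref{obs1}(b) no new node fails after $t=2$; in particular the construction works for any $T\ge 2$ and specifically for $T=2$.

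Next I tune the heterogeneous parameters so that shocking a $\tilde v$-node causes exactly $\tilde v$ to fail and contributes a uniform unit of ``loss'' to each incident $\tilde e$-node, while $\tilde e$-nodes themselves are useless to shock. Set every edge weight to a common value $w_0$ and set $\alpha_{\tilde v}=\rho\, d_v$ where $d_v=|\{e\in F_H:v\in e\}|=\din(\tilde v)$ and $\rho>0$ is small. Then $b_{\tilde v}+\alpha_{\tilde v}E=d_v(w_0+\rho E)$, so if $\Phi>\gamma$ then shocking $\tilde v$ induces a failure with residual equity $(\gamma-\Phi)d_v(w_0+\rho E)$, and by the propagation rule every creditor $\tilde e$ of $\tilde v$ receives the same quantum of shock $\chi=(\Phi-\gamma)(w_0+\rho E)$, independently of~$v$. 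For each $\tilde e$ I choose its external-asset share so that $\gamma\,\alpha_{\tilde e}E=(d-\tfrac12)\chi$, which ensures $\tilde e$ does not fail at $t=2$ when at most $d-1$ of its incident $\tilde v$'s fail, but fails when all $d$ do. Finally I verify that $\iota_{\tilde e}>\alpha_{\tilde e}E$, making $\Phi\,e_{\tilde e}<0$; by the first paragraph of the proof of Theorem~\ref{log1}, there is an optimal solution that never shocks any $\tilde e$, and the shock given to $\tilde e$ cannot propagate anywhere since $b_{\tilde e}=0$.

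With these parameters every valid shocked set $V'$ of size $\kappa$ can be converted without loss into a set of $\tilde v$'s, and
\[
\big|\infl(V')\big|=\kappa+\big|\{e\in F_H:\tilde v\in V'\text{ for every }v\in e\}\big|.
\]
Consequently $\dvi^\ast(\ovr{G},2,\kappa)=\bigl(\kappa+\opt_{\mathsf{DS}_\kappa}(H)\bigr)/\kappa$. By assumption $(\star)$ the two promise cases give $\opt\ge(1+o(1))m\,n^{-(c-3)(1-1/d)/2}$ or $\opt\le(1-o(1))m\,n^{-(c-3)/2}$. Since $m\,n^{-(c-3)(1-1/d)/2}=n^{(c+3)/2+(c-3)/(2d)}\gg\kappa$, the $\opt$ term dominates $\kappa$ in both cases and their ratio is $\Theta\bigl(n^{(c-3)/(2d)}\bigr)$. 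Because $|\ovr{V}|\le 2m\le 2n^c$, this translates into inapproximability within a factor $|\ovr{V}|^\delta$ for $\delta=(c-3)/(2dc)>0$, completing the proof.

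The main obstacle is the parameter tuning in the second paragraph: I need $\chi$ to be genuinely independent of $v$ (which forces linking $\alpha_{\tilde v}$ to $d_v$), I need the $\min\{|c_v|,b_v\}$ clamp in the propagation formula to stay on the $|c_v|$ side so that the ``uniform quantum'' calculation is actually what the model prescribes, and I need $\alpha$'s to remain strictly positive while summing to~$1$, which forces a careful choice of the scales of $\rho$, $w_0$, $\Phi-\gamma$ and $E$. Once this delicate bookkeeping is done, the promise-gap arithmetic and the depth-one structure make the rest of the argument routine.
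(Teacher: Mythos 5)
Your proposal matches the paper's proof step for step: the same reduction from $\mathsf{DS}_\kappa$ via a bipartite, depth-one DAG with one node per hypergraph vertex and one per hyperedge, edges directed from hyperedge-nodes to vertex-nodes, parameters scaled so that every failed vertex-node delivers the same quantum of shock to each incident hyperedge-node, a hyperedge failure threshold calibrated to lie strictly between $d-1$ and $d$ quanta, the same observation that hyperedge-nodes are useless to shock because $\Phi\,e_{\tilde e}<0$, and the same promise-gap arithmetic. The only difference is that you keep the parameters symbolic ($w_0$, $\rho$, $\chi$) where the paper plugs in concrete values ($\Phi=1$, $\gamma=\tfrac12$, edge weight $2$, $E_{\ovr{u}}=0$, $E_{\ovr{e}}=1.99d$); your threshold $(d-\tfrac12)\chi$ is in fact the safer choice, since the paper's soundness step needs $d-1\le 0.995d$, which holds only for $d\le 200$ and thus sits uneasily with the paper's later request that $d$ be ``sufficiently large.''
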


\begin{proof}
Given an instance $G=(V,F)$ of $\mathsf{DS}_\kappa$ as stated in {\bf ($\pmb{\star}$)}, we construct an instance graph 
$\ovr{G}=(\ovr{V},\ovr{F})$ as follows:
\begin{itemize}
\item
For every node $u\in V$, we have a node $\ovr{u}\in\ovr{V}$, and for every edge $e=\{u_1,u_2,\dots,u_d\}\in F$, we have a node $\ovr{e}$ (also denoted by $\ovr{\{u_1,u_2,\dots,u_d\}}\,$) 
in $\ovr{V}$. Thus, the total number of nodes of $\ovr{G}$ is $|\ovr{V}| = m+n$.

\item
For every hyper-edge $e=(u_1,u_2,\dots,u_d)\in F$, we have $d$ edges $(e,u_1),(e,u_2),\dots,(e,u_d)\in\ovr{F}$. 
We set the weight (share of internal asset) of every edge $(e,u_i)$ to $2$. Thus, $|I|=2dm$.
\end{itemize}
Let the share of external assets for a node (bank) $\ovr{y}\in\ovr{V}$ be denoted by $E_{\ovr{y}}$ (thus, $\sum_{\ovr{y}\in \ovr{V}} E_{\ovr{y}}=E$).
We will select the remaining network parameters as follows.
For each $e\in F$, $E_{\ovr{e}}=1.99d$, and for each $u\in V$, $E_{\ovr{u}}=0$. Thus, $E=1.99dm$.
Finally, we set $\Phi=1$ and $\gamma=\nicefrac{1}{2}$. We prove the following:
\begin{description}
\item[(completeness)]
If $\DS_\kappa$ has a solution with $\alpha\geq\dfrac{\big(1+o(1)\big)\,m}{n^{\frac{c-3}{2}\left(1-\frac{1}{d}\right)}}$ hyper-edges then then 
\\
$\dvi^\ast\left(\ovr{G},2,\kappa\right)\geq \kappa+\alpha$.

\item[(soundness)]
If {\em every} solution of $\DS_\kappa$ has at most $\beta=\dfrac{\big(1-o(1)\big)\,m}{n^{\frac{c-3}{2}}}$ hyper-edges then 
\\
$\dvi^\ast\left(\ovr{G},2,\kappa\right)\leq \kappa+\beta$.
\end{description}
Note that with $c=5$ (and, thus $m\geq n^5$), and sufficiently large $d$ and $n$, we have 
\[
\frac{\kappa+\alpha}{\kappa+\beta}
=
\frac
{  n^{1-\frac{c-3}{2d}}+\frac{\big(1+o(1)\big)\,m}{n^{\frac{c-3}{2}\left(1-\frac{1}{d}\right)}} }
{  n^{1-\frac{c-3}{2d}}+\frac{\big(1-o(1)\big)\,m}{n^{\frac{c-3}{2}}} }
=
\frac
{  n^{1-\frac{1}{d}}+\frac{\big(1+o(1)\big)\,m}{n^{1-\frac{1}{d}}} }
{  n^{1-\frac{1}{d}}+\frac{\big(1-o(1)\big)\,m}{n} }
\geq
\big(1-\mathrm{o}(1)\big)\,n^{\nicefrac{1}{d}}
\]
which proves the theorem with $\delta=\nicefrac{1}{d}$. 

\subsection*{Proof of Completeness ($\DS_\kappa$ has a solution with $\alpha$ hyper-edges)}

Let $V'\subseteq V$ be a solution of $\DS_\kappa$ with at least $\alpha$ hyper-edges. 
We shock all the nodes in $\vs=\left\{\ovr{u}\,|\,u\in V' \right\}$. Every shocked node $\ovr{u}$ fails at 
$t=1$ since $\Phi\,\left(b_{\ovr{u}}-\iota_{\ovr{u}}+E_{\ovr{u}} \right)=2\,\din(\ovr{u})>\din(\ovr{u})=\gamma\,\left(b_{\ovr{u}}+E_{\ovr{u}}\right)$.
Now, consider a hyper-edge $e=(u_1,u_2,\dots,u_d)\in F$ such that $u_1,u_2,\dots,u_d\in V'$. Then, the node $\ovr{e}$ fails at $t=2$ since 
\[
\sum_{i=1}^d \frac{ \min \big\{ \Phi\,\left(b_{\ovr{u_i}}-\iota_{\ovr{u_i}}+E_{\ovr{u_i}} \right) - \gamma\,\left(b_{\ovr{u_i}}+E_{\ovr{u_i}}\right)  ,\, b_{\ovr{u_i}} \big\} } {\din(\ovr{u_i})}
=
d > 0.995d= \gamma\,\left(b_{\ovr{e}}+E_{\ovr{e}}\right)
\]

\subsection*{Proof of Soundness (every solution of $\DS_\kappa$ has at most $\beta$ hyper-edges)}

We will prove the logically equivalent contrapositive of our claim, \IE, we will show that if 
$\dvi^\ast\left(\ovr{G},2,\kappa\right)>\beta+\kappa$ then $\DS_\kappa$ has a solution of with strictly more than $\beta$ hyper-edges.
First, note that we can assume without loss of generality that, for any hyper-edge $e\in F$, the node $\ovr{e}$ is not shocked.
Otherwise, if we shock node $\ovr{e}$, then it does not fail since at
$t=1$ since $\Phi\,\left(b_{\ovr{e}}-\iota_{\ovr{e}}+E_{\ovr{e}} \right)=-0.01d<0.995d=\gamma\,\left(b_{\ovr{e}}+E_{\ovr{e}}\right)$, and 
in fact doing so increases its equity to $1.005d$. Since the equity of $\ovr{e}$ increased by shocking it, if this node failed in the
given solution then it would also fail if it was not shocked. So, we can instead shock a node $\ovr{u}$ that was not shocked in the given
solution; such a node must exist since $\kappa<n$. 

Note that we have already shown in the proof of the completeness part that, for any $e=(u_1,u_2,\dots,u_d)\in F$, if the $d$ nodes $\ovr{u_1},\ovr{u_2},\dots,\ovr{u_d}$ are 
shocked then $\ovr{e}$ fails at $t=2$. Thus, our proof is complete provided we show that such a node $\ovr{e}$ does {\em not} fail at $t=2$ if {\em at least} one 
of the nodes $\ovr{u_1},\ovr{u_2},\dots,\ovr{u_d}$ is {\em not} shocked. Let $S\subset \{\ovr{u_1},\ovr{u_2},\dots,\ovr{u_d}\}$ be the set of shocked nodes among these $d$ nodes.
Then, $\ovr{e}$ does not fail at $t=2$ since 
\[
\sum_{u_i\in S} \frac{ \min \big\{ \Phi\,\left(b_{\ovr{u_i}}-\iota_{\ovr{u_i}}+E_{\ovr{u_i}} \right) - \gamma\,\left(b_{\ovr{u_i}}+E_{\ovr{u_i}}\right)  ,\, b_{\ovr{u_i}} \big\} } {\din(\ovr{u_i})}
\leq
d-1
\leq
0.995d
=
\gamma\,\left(b_{\ovr{e}}+E_{\ovr{e}}\right)
\]
for all sufficiently large $d$. 
\end{proof}

\section{Software Availability}

An interactive software (called FIN-STAB) implementing an expanded version of the shock propagation algorithm shown in Table~\ref{t1} is
available from the website 
\\
\url{http://www2.cs.uic.edu/~dasgupta/financial-simulator-files}

\section{Conclusions}

In this paper, we have formalized a model for idiosyncratic propagation of shocks to a banking network, defined two possible
stability measures, provided their computational properties and discussed the implications of our results on the banking system.
We view our work as a necessary first step towards understanding vulnerabilities of banking systems due to 
sudden loss of external assets, and hope that it will generate sufficient interests in both the banking network community and
the network algorithms community to further investigate and refine these stability issues. 

\section*{Acknowledgements}

The authors would like to thank the organizers of the 
Industrial-Academic Workshop on Optimization in Finance and Risk Management at the Fields Institute in Toronto (Canada)
for an opportunity to discuss some of the results in this paper and receive valuable feedbacks.

\appendix

\begin{center}
{\bf APPENDIX}
\end{center}

For the benefit of the reader, we provide explanations for a few finance terminologies frequently used in this paper.
\begin{description}
\item[External asset:]
refers to the case of financial institutions borrowing from investors and similar outside entities.

\item[Interbank exposure:]
refers to the case of financial institutions borrowing from other financial institutions.

\item[Net worth or equity:]
a fixed proportion of the total asset of a bank. In general, higher equity imply better stability for an individual bank.
\end{description}
\end{document}